\documentclass[sigconf]{aamas_hacked}

\usepackage{balance} \usepackage{thm-restate}
\usepackage{mathtools}
\usepackage{tikz}
\usetikzlibrary{positioning,arrows,shapes,decorations,calc,fit,arrows.meta,colorbrewer} \usepackage{xspace}
\usepackage{enumitem}
\usepackage{todonotes}
\usepackage{cleveref}

\usetikzlibrary{datavisualization}

\makeatletter
\gdef\@copyrightpermission{
  \begin{minipage}{0.2\columnwidth}
   \href{https://creativecommons.org/licenses/by/4.0/}{\includegraphics[width=0.90\textwidth]{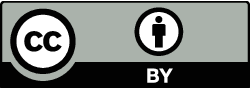}}
  \end{minipage}\hfill
  \begin{minipage}{0.8\columnwidth}
   \href{https://creativecommons.org/licenses/by/4.0/}{This work is licensed under a Creative Commons Attribution International 4.0 License.}
  \end{minipage}
  \vspace{5pt}
}
\makeatother

\setcopyright{ifaamas}
\acmConference[AAMAS '26]{Proc.\@ of the 25th International Conference
on Autonomous Agents and Multiagent Systems (AAMAS 2026)}{May 25 -- 29, 2026}
{Paphos, Cyprus}{C.~Amato, L.~Dennis, V.~Mascardi, J.~Thangarajah (eds.)}
\copyrightyear{2026}
\acmYear{2026}
\acmDOI{}
\acmPrice{}
\acmISBN{}

\title{Majoritarian Assignment Rules}

\author{Felix Brandt}
\affiliation{
  \institution{Technical University of Munich}
\city{Munich}
  \country{Germany}}
\email{brandt@tum.de}

\author{Haoyuan Chen}
\affiliation{
  \institution{Technical University of Munich}
  \city{Munich}
  \country{Germany}}
\email{ge85hud@tum.de}

\author{Chris Dong}
\affiliation{
  \institution{HPI, University~of~Potsdam}
  \city{Potsdam}
  \country{Germany}}
\email{chrisshuyu.dong@hpi.de}

\author{Patrick Lederer}
\affiliation{
  \institution{ILLC, University of Amsterdam}
  \city{Amsterdam}
  \country{The Netherlands}}
\email{p.lederer@uva.nl}

\author{Alexander Schlenga}
\affiliation{
  \institution{Technical University of Munich}
  \city{Munich}
  \country{Germany}}
\email{alexander.schlenga@tum.de}

\begin{abstract}
A central problem in multiagent systems is the fair assignment of objects to agents. In this paper, we initiate the analysis of classic majoritarian social choice functions
in assignment. Exploiting the special structure of the assignment domain, we show a number of surprising results with no counterparts in general social choice.
In particular, we establish a near one-to-one correspondence between preference profiles and majority graphs. This correspondence implies that key properties of assignments---such as Pareto-optimality, least unpopularity, and mixed popularity---can be determined solely by the associated majority graph.
We further show that all Pareto-optimal assignments are semi-popular and belong to the top cycle.
Elements of the top cycle can thus easily be found via serial dictatorships.
Our main result is a complete characterization of the top cycle, which implies the top cycle can only consist of one, two, all but two, all but one, or all assignments. 
By contrast, we find that the uncovered set contains only very few assignments.\end{abstract}

\keywords{House Allocation; Social Choice Theory; Top Cycle; Uncovered Set}

\newcommand{\BibTeX}{\rm B\kern-.05em{\sc i\kern-.025em b}\kern-.08em\TeX}
\newcommand{\pref}{\succ} 
\newcommand{\prefsim}{\succsim}
\newcommand{\dispref}{\prec}

\newcommand{\mpref}{\pref}
\newcommand{\mprefsim}{\prefsim}
\newcommand{\mprefsimt}{\mprefsim^*}

\newcommand{\order}{\sigma}
\newcommand{\goodass}[1]{\textcolor{blue}{#1}} \newcommand{\badass}[1]{\textcolor{red}{#1}} \newcommand{\bc}{\ensuremath{\mathit{BC}}\xspace}
\newcommand{\pso}{\ensuremath{\mathit{PP}}\xspace}

\newcommand{\po}{\ensuremath{\mathit{PO}}\xspace}

\newcommand{\tc}{\ensuremath{\mathit{TC}}\xspace}
\newcommand{\uc}{\ensuremath{\mathit{UC}}\xspace}

\theoremstyle{definition}

\newtheorem{example}{Example}
\theoremstyle{plain}
\newtheorem{theorem}{Theorem}
\newtheorem{lemma}{Lemma}

\newtheorem{corollary}{Corollary}

\newtheorem{fact}{Fact}

\newtheoremstyle{bfnote}{}{}{}{}{\bfseries}{.}{ }{\thmname{#1}\thmnumber{ #2}\thmnote{\textnormal{ (#3)}}}
\theoremstyle{bfnote}
\newtheorem{remark}{Remark}

\begin{document}

\pagestyle{fancy}

\maketitle

\section{Introduction}

Assigning objects to individual agents is a fundamental problem that has received considerable attention by computer scientists as well as economists \citep[e.g.,][]{CDE+06a,SoUn10a,Manl13a,BCM15a}. The problem is known as the \emph{assignment problem}, the \emph{house allocation problem}, or \emph{two-sided matching with one-sided preferences}. In its simplest form, there are $n$ agents, $n$ houses, and each house needs to be allocated to exactly one agent based on the strict preferences of each agent over the houses.
Applications are diverse and include assigning jobs to applicants, apartments to tenants, and offices to employees.

A natural way to compare two assignments $\mu$ and~$\lambda$ in such settings is to check whether a majority of agents prefer the house they receive under~$\mu$ to the one they receive under~$\lambda$. This idea leads to the notion of \emph{popular} assignments, which are assignments for which there is no other assignment that is preferred by a strict majority of the agents \citep{Gard75a}. However, as \citet{Gard75a} pointed out, such popular assignments may not exist, as the majority relation over assignments can be cyclic.
This observation has led to an extensive stream of follow-up work \citep[e.g.,][]{AIKM07a,McIr11a,ABS13a,CHK15a,Cseh17a,BHS17a,BrBu20a}, which aims to understand under which conditions popular assignments exist and which assignment should be selected in their absence. For example, the latter question has led to the definition of concepts such as \emph{least unpopular} or \emph{mixed popular} assignments \citep{McCu08a,KMN11a}.

Curiously, this line of work closely resembles classic ideas in \emph{social choice}. In this setting, agents report preferences over some alternatives and a social choice function (SCF) returns the ``best'' alternative based on these preferences. The observation that popular assignments do not always exist is analogous to the \emph{Condorcet paradox} in social choice, which states that (weak) Condorcet winners need not exist. Even more, these two insights are logically related as house allocation can be viewed as a special case of {social choice}. When letting the set of alternatives be the set of all possible allocations of houses to agents and postulating that agents are indifferent between all allocations in which they receive the same house, house allocation is reduced to a restricted domain of general social choice. Under this reduction, a popular assignment is merely a weak Condorcet winner, so the absence of a popular assignment shows the Condorcet paradox for a restricted domain of preferences.\footnote{Majority cycles are even more frequent in assignment than they are in social choice, stressing the importance of which assignment to choose in the absence of popular ones. For example, this can be seen by comparing the proportions of profiles that admit Condorcet winners and popular assignments, respectively \citep[][Tables 1 and 4.2]{GeFi79a,AIKM07a}.}
Moreover, several assignment concepts based on popularity have counterparts in social choice: for example, the notion of least popularity corresponds to the \emph{maximin} voting rule and mixed popularity to \emph{maximal lotteries}.

Motivated by this relationship, we aim to transfer further ideas from social choice, namely \emph{majoritarian SCFs}, to the assignment domain.
 Just like popularity, such majoritarian SCFs only depend on the majority relation. Typical examples are Copeland's rule, the top cycle, the uncovered set, and the bipartisan set \citep[e.g.,][]{Lasl97a,BBH15a}.
The definitions of these functions are equally natural in assignment as they are in social choice. The top cycle, for example, returns all assignments that are maximal elements of the transitive closure of the majority relation. Similarly, the known set-theoretic relationships between these SCFs also hold in the assignment domain. Computational properties, on the other hand, do \emph{not} carry over from social choice to assignment because, when viewing house allocation as a subdomain of social choice, the number of alternatives is exponential and the agent's preferences are concisely represented as each agent merely cares about her own house.
This has serious algorithmic repercussions, and the computational complexity of even the simplest concepts needs to be reevaluated. For instance, identifying weak Condorcet winners is straightforward in social choice, but finding a popular assignment already requires clever algorithmic techniques~\citep{AIKM07a}.

\subsection*{Contribution}

In this paper, we initiate the systematic study of majoritarian SCFs in the context of assignment. 
To this end, we first characterize all sets of preference profiles that admit the same majority graph by providing an efficient algorithm that reconstructs all profiles that induce a given majority graph. In sharp contrast to the social choice domain, it turns out that almost all majority graphs are induced by a \emph{single} preference profile.
As a consequence, the rules that return all Pareto-optimal assignments, all least unpopular assignments, and all mixed popular assignments, respectively, are majoritarian, even though these concepts are not defined in terms of the majority relation.
Moreover, we show that all Pareto-optimal assignments are contained in the top cycle, which means that elements of the top cycle can be found via serial dictatorships. Further, Pareto-optimal assignments have non-negative Copeland score and are thus semi-popular. None of these results holds in the general social choice domain. 
We also prove that the rule returning all rank-maximal assignments is not majoritarian. 

Our main result is a complete characterization of the top cycle in assignment when there are $n\geq5$ agents and houses. This characterization shows that the cardinality of the top cycle may only take one of five values ($1$, $2$, $n!-2$, $n!-1$, or $n!$) and leads to a simple sublinear-time algorithm that returns a concise representation of all assignments in the top cycle. Once again, this characterization has no analog in the social choice domain, where the top cycle can consist of any number of alternatives.

Lastly, we investigate the three most common variants of the uncovered set, all of which refine the top cycle.
We compute the number of assignments contained in the uncovered set for $n=5$ (by exhaustive enumeration) and for $n=6$ and $n=7$ (by sampling preference profiles). 
Somewhat surprisingly, in all these settings, most profiles only admit \emph{two} uncovered assignments, suggesting that the uncovered set is much more discriminating in assignment than it is in social choice \citep[cf.][]{Fey08a,BrSe15a}.

\subsection*{Related Work}

The study of matching under preferences was initiated by \citet{GaSh62a}. In their model (nowadays often referred to as marriage markets), there are two types of agents who have strict preferences over each other. \citeauthor{GaSh62a} showed that a so-called stable matching always exists and can be found by a simple, efficient algorithm.
By contrast, \Citet{Gard75a} showed that stable matchings are no longer guaranteed to exist when agents may be indifferent between other agents. As a remedy, he proposed the notion of popular matchings, i.e., matchings such that there is no other matching that a majority of the agents prefer. 
When individual preferences are strict, popularity is weaker than stability. However, popular matchings may not exist for weak preferences, even if agents of one type are completely indifferent between all agents of the other type and all other agents have strict preferences.
This variant, which goes back to \citet{Gale60a}, is known as assignment, house allocation, or two-sided matching with one-sided preferences as fully indifferent agents can be seen as objects. \Citet{AIKM07a} provided an efficient algorithm for finding a popular assignment or returning that none exists. 

The definition of popularity has been relaxed in various ways to restore existence. \Citet{McCu08a} proposed the unpopularity margin as a qualitative relaxation of popularity. A least unpopular matching in this sense always exists, but is NP-hard to compute. \Citet{KMN11a} introduced mixed popular assignments, whose existence is guaranteed by the minimax theorem, and provided an efficient algorithm for computing them. \Citet{KaVa23a} transferred the notion of Copeland winners from social choice to so-called roommate markets, which generalize the assignment setting. They showed that computing a Copeland winner is NP-hard. It is open whether this hardness result also holds in assignment. \Citeauthor{KaVa23a} also give a randomized FPTAS for identifying semi-popular matchings, another relaxation of popularity, in roommate markets. Semi-popular matchings are matchings that lose at most half of their majority comparisons
\citep{Kavi20a}. Interestingly, one of our results,
\Cref{thm:PO_Subset_SP}, entails that semi-popular \emph{assignments} can be found efficiently via serial dictatorships in assignment.

Lastly, our work is related to the problem of voting on combinatorial domains \citep[see, e.g.,][]{LaXi15a}. Two recent papers in this line of research that are close to ours are due to \citet{BBS25a} and \citet{BoDi25a}, who study matching through the lens of voting theory.

\section{Preliminaries}

Let $n\in\mathbb{N}$ be given.
We denote by $N = \{1,\dots, n\}$ a set of \emph{agents} and by $H=\{a,b,c,\dots\}$ a set of $n$ \emph{houses} (or distinct indivisible objects in general). Each agent $x\in N$ reports a \emph{preference relation} $\pref_x$, which is formally a linear order over $H$. Intuitively $a\pref_x b$ means that agent $x$ prefers house $a$ to house $b$. Note that we require each agent to rank all houses without indifference and that there are as many houses as agents. A \emph{preference profile} $P=(\pref_1,\dots,\pref_n)$ is the collection of the preference relations of all agents.

Given a preference profile, our goal is to assign one house to each agent. To formalize this, we define \emph{assignments} as bijective functions mapping agents in $N$ to houses in $H$. Thus, $\mu(x)$ is the house given to agent $x$ under assignment $\mu$. We define by $M$ the set of all possible assignments from $N$ to $H$. Further, 
we write an assignment $\mu$ in which agents $1,2,3,\dots$ obtain houses $a,b,c,\dots$, respectively, as $\mu=(a, b, c,\dots)$. Throughout the paper, we assume that agents compare assignments only based on the houses they receive: an agent $x\in N$ (weakly) prefers an assignment $\mu$ to another assignment $\lambda$, denoted by $\mu \succeq_x \lambda$, if $\mu(x)\pref_x\lambda(x)$ or $\mu(x)=\lambda(x)$. Moreover, an agent $x\in N$ strictly prefers an assignment $\mu$ to an assignment $\lambda$, written as $\mu \pref_x\lambda$, if $\mu(x)\pref_x\lambda(x)$ and $\mu(x)\neq\lambda(x)$.

An \emph{assignment rule} $F$ maps every preference profile $P$ to a non-empty set of assignments $F(P)$. 
The idea is that an assignment rule returns a set of ``good'' assignments, from which a single assignment will eventually be picked. 

Most of the rules considered in this paper are \emph{symmetric}, which demands that all agents and all houses are treated equally, respectively. More formally, relabeling the agents in the preference profile should correspond to relabeling the agents in the returned assignments, 
and relabeling the houses in all agents' rankings should correspond to relabeling the houses in the returned assignments. When $n\geq 2$, no assignment rule that always returns a single assignment can be symmetric. Therefore, we consider set-valued rules. 
To nonetheless distinguish more discriminating from less discriminating rules, we say that a rule $F$ is a \emph{refinement} of a rule $G$ and write $F\subseteq G$, if $F(P)\subseteq G(P)$ for all profiles~$P$.

Lastly, we discuss a standard property of assignments called Pareto-optimality. Intuitively, this notion requires that there is no assignment that makes one agent strictly better off without making another one worse off. To formalize this idea, we say an assignment $\mu$ \emph{Pareto-dominates} another assignment $\lambda$ in a profile $P$ if all agents weakly prefer $\mu$ to $\lambda$ and this preference is strict for at least one agent, i.e., $\mu \succeq_x\lambda$ for all agents $x\in N$ and $\mu \succ_x\lambda$ for at least one agent $x\in N$. Further, an assignment is \emph{Pareto-optimal} if it is not Pareto-dominated by any other assignment. 
The set of Pareto-optimal assignments in a profile $P$ is denoted by $\po(P)$. 

The set of Pareto-optimal assignments is closely connected to the family of \emph{serial dictatorships}. Such serial dictatorships are defined by a priority order $\order=(x_1,\dots,x_n)$ over the agents, and agents simply pick their favorite house that has not been taken yet in the order given by $\order$. It is known that an assignment is Pareto-optimal if and only if it is returned by a serial dictatorship for the given profile \citep{AbSo98a}. We note that Pareto-optimality is much more restrictive in assignment than in social choice: there are \emph{always} Pareto-dominated assignments unless all agents have the same preferences. More generally, for every pair of agents $x,y$ and every pair of houses $a,b$ such that $a\pref_x b$ and $b\pref_y a$, there are $(n-2)!$ Pareto-dominated assignments where $x$ obtains $b$ and $y$ obtains $a$.

\section{The Structure of Majority Graphs}

A fundamental way to compare two assignments to each other is to postulate that one assignment is socially preferred to another if a majority of the agents prefer the former to the latter. To this end, let $N_{\mu,\lambda}=\left\{x\in N \mid\mu\succeq_x\lambda\right\}$ denote the set of agents who weakly prefer $\mu$ to $\lambda$. An assignment $\mu$ \emph{weakly majority dominates} assignment $\lambda$ if at least as many agents prefer $\mu$ to $\lambda$ than vice versa, i.e., $\mu\mprefsim\lambda$ if $\lvert N_{\mu,\lambda}\rvert\ge\lvert N_{\lambda,\mu}\rvert$.
Similarly, an assignment $\mu$ \emph{(strictly) majority dominates} another assignment $\lambda$ if strictly more agents prefer $\mu$ to $\lambda$ than vice versa, i.e., $\mu\mpref\lambda$ if $\lvert N_{\mu,\lambda}\rvert>\lvert N_{\lambda,\mu}\rvert$. 

This naturally leads to the analysis of \emph{majority graphs}, which have been extensively studied in social choice theory \citep[e.g.,][]{Lasl97a,BBH15a}. Specifically, the majority graph of a profile $P$ is the directed graph $G_P=(M, \{(\mu,\lambda)\in M^2\colon {\mu \mprefsim \lambda}\})$, which has all possible assignments as its vertices and there is an edge from an assignment $\mu$ to another assignment $\lambda$ if $\mu$ weakly majority dominates $\lambda$.

\begin{example}\label{ex:majority_graph}
Consider the following profile $P$ with $N=\{1,2,3\}$ and $H=\{a,b,c\}$, and the corresponding majority graph.
A black arrow from $\mu$ to $\lambda$ indicates that $\mu$ strictly majority dominates $\lambda$, whereas a bidirectional gray edge indicates a majority tie. 
\vspace{-1ex} 
    \[\arraycolsep=3pt P = \begin{array}{rccc}
		1\colon & a,&b,&c\\
		2\colon & a,&b,&c\\
		3\colon & a,&b,&c\\
		\end{array}
        \quad
\begin{tikzpicture}[baseline=(current bounding box),>=Latex,xscale=1.6]
	\tikzstyle{vertex}=[minimum height={height("$($")+6pt},inner sep=1] 
	\draw
    (0,0.4) node[vertex] (a) {$(a,b,c)$}
    (1,0.4) node[vertex] (b) {$(b,c,a)$}
    (2,0.4) node[vertex] (c) {$(c,a,b)$}
    (0,-0.6) node[vertex] (e) {$(c,b,a)$}
    (1,-0.6) node[vertex] (f) {$(a,c,b)$}
    (2,-0.6) node[vertex] (g) {$(b,a,c)$};

	\draw [->] (a) to (b);
	\draw [->] (b) to (c);
	\draw [->,out=45] (c) to (a);

	\draw [->] (e) to (f);
	\draw [->] (f) to (g);
	\draw [->,out=-45,in=225] (g) to (e);

    \draw [<->,gray!50] (a) to (e);
    \draw [<->,gray!50] (a) to (f);
    \draw [<->,gray!50] (a) to (g);
    \draw [<->,gray!50] (b) to (e);
    \draw [<->,gray!50] (b) to (f);
    \draw [<->,gray!50] (b) to (g);
    \draw [<->,gray!50] (c) to (e);
    \draw [<->,gray!50] (c) to (f);
    \draw [<->,gray!50] (c) to (g);
\end{tikzpicture}\vspace{-1ex}
    \] 
Note that we can obtain the same majority graph from other profiles, too. Specifically, if all three agents $x$ rank $b\pref'_x c\pref'_x a$ in profile $P'$, or all agents rank $c\pref''_x a\pref''_x b$ in profile $P''$, then $G_P = G_{P'} = G_{P''}$.
    
\end{example}

While every directed graph is induced by some preference profile in social choice \citep{McGa53a}, \Citet{BHS17a} pointed out that this is not the case in assignment, where only a small fraction of majority graphs can actually be realized by preference profiles.
Moreover, \citeauthor{BHS17a} gave an efficiently testable, necessary, and sufficient condition for two profiles yielding the same \emph{weighted} majority graph, where each edge $(\lambda,\mu)$ of the majority graph is weighted by the margin $\lvert N_{\lambda,\mu}\rvert - \lvert N_{\mu,\lambda}\rvert $ of the majority comparison. 
{In this section, we will generalize this result to unweighted majority graphs, showing that \emph{almost all} majority graphs are induced by a \emph{unique} profile.}

To this end, we first recall some terminology by \citet{BHS17a}. Let $P$ be a profile and let $(H_1, \dots, H_k)$ be an ordered partition of~$H$, where we call each $H_j$ a \textit{component}. We say that $(H_j)_j$ is a \textit{decomposition} of this profile, if all agents rank all houses in $H_1$ over all houses in $H_2$ and so on.
Formally, for all $j<\ell\le k$ and all $p\in H_j, q\in H_\ell$, and $x\in N$, it holds that $p\pref_x q$. Two profiles $P$, $P'$ are called \textit{rotation equivalent}, if the preferences on $P$ and $P'$ coincide within each component, and one ordering of the components is obtained by shifting the other. 
Formally, consider any decomposition $(H_1,\dots, H_k)$ of ${P}$. Then, for all $j\le k$, $p,q\in H_j$, and $x\in N$, it has to hold that $p\succ_x q$ if and only if $p\succ_x' q$, and there exists some $r< k$ such that $(H_{1+r}, \dots, H_{k+r})$ is a decomposition of $P'$ (where we set $H_{j+r} \coloneqq H_{j+r-k}$ if $j+r>k$).

\begin{example}To illustrate rotation equivalence, consider the following profiles $\hat P, \overline{P}$ with $N=\{1,2,3,4\}$, $H=\{a,b,c,d\}$, and decomposition $(H_1=\{a\}, H_2 = \{b\}, H_3 = \{c,d\})$.

    \[\arraycolsep=3pt \hat{P} = \begin{array}{rcccc}
    1\colon & a,&b,&c,&d\\
    2\colon & a,&b,&c,&d\\
	3\colon & a,&b,&d,&c\\
	4\colon & a,&b,&d,&c\\
    \end{array}
    \qquad
    \overline{P} = \begin{array}{rcccc}
    1\colon & a,&b,&c,&d\\
    2\colon & a,&b,&d,&c\\
	3\colon & a,&b,&d,&c\\
	4\colon & a,&b,&d,&c\\
    \end{array}
    \]

    The profile $\overline{P}$ is \emph{not} rotation equivalent to $\hat{P}$, as $c\mathrel{\hat{\pref}_2}d$ and $d\mathrel{\bar{\pref}_2}c$ even though $c$ and $d$ belong to the same component~$H_3$.
Moreover, rotation equivalence can also be violated when, within each component, the preferences of the agents are coherent.
    For this, consider the profile $P$ from \Cref{ex:majority_graph} and let $P'''$ be the profile, where all agents $x$ report 
    $a\pref'''_x c\pref'''_x b$. Then, $P'''$ is not rotation equivalent to $P$, as the decomposition $(\{a\}, \{b\}, \{c\})$ cannot be rotated to $(\{a\}, \{c\}, \{b\})$. However, $(\{b\}, \{c\}, \{a\})$ and $(\{c\}, \{a\}, \{b\})$ are valid rotations. Hence, the profiles $P'$ and $P''$ described in \Cref{ex:majority_graph} are rotation equivalent to $P$.
\end{example}

{\citet{BHS17a} 
showed that rotation equivalence characterizes the profiles that induce the same \emph{weighted} majority graph. We are able to strengthen this result by showing that the margins are not required: rotation equivalence, in fact, characterizes the profiles inducing the same (unweighted) majority graph! As a consequence, given any assignment-induced majority graph, we can reconstruct the margins of all majority edges.}
Moreover, let a house ``Pareto-dominate'' another house if all agents rank the former above the latter. 
Whenever there are no Pareto-dominated houses in a profile $P$, our result implies that this profile has a unique majority graph $G_P$. Even in the presence of Pareto-dominated houses, we can deduce preferences of all agents except for the direction of some Pareto-dominations. 
The full proof of the following result is deferred to \Cref{app:thm1}.

\begin{theorem}\label{thm:C1_Char}
	Two profiles induce the same majority graph if and only if they are rotation equivalent.
\end{theorem}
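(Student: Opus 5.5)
The ``if'' direction should follow from the result of \citet{BHS17a}: rotation equivalent profiles induce the same \emph{weighted} majority graph, and hence the same unweighted one. So the work lies in the ``only if'' direction. I would show that the unweighted graph $G_P$ pins down $P$ up to rotation equivalence, by reading off preference information from the edges between assignments that differ only on two or three agents.

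\textbf{Step 1 (transpositions).} Take agents $x\neq y$ and houses $a\neq b$. Let $\mu$ give $a$ to $x$ and $b$ to $y$, and let $\lambda$ agree with $\mu$ everywhere except that $x$ and $y$ swap. All other agents are indifferent, so the margin is $-2$, $0$ or $2$.
\begin{itemize}
\item If $x$ and $y$ rank $a,b$ in \emph{opposite} orders, both prefer the same assignment. Then the edge is strict, and its direction tells us exactly how $x$ and $y$ rank $a$ versus $b$.
\item If they rank $a,b$ the \emph{same} way, one of them gains and the other loses, so $\mu$ and $\lambda$ are tied.
\end{itemize}
Hence from $G_P$ we learn, for every pair of houses $\{a,b\}$ and every pair of agents, whether they agree on $a$ versus $b$. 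Whenever some pair of agents disagrees on $\{a,b\}$, every agent disagrees with somebody on $\{a,b\}$, so every agent's ranking of $a,b$ is determined. Therefore, if $G_P=G_{P'}$, the profiles $P$ and $P'$ coincide on all pairs of houses that are not unanimously ordered. Moreover, the set of unanimously ordered pairs is the same in both profiles.

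\textbf{Step 2 (structure of unanimous pairs).} Call $\{a,b\}$ \emph{contested} if some agents disagree on it. I would show that the equivalence classes of the transitive closure of ``contested'' are intervals in every agent's ranking. This yields a canonical finest decomposition $(H_1,\dots,H_k)$ of $P$: all pairs across distinct components are unanimous, and within each component Step 1 already fixes all contested pairs. Pairs inside a component that are unanimous but not contested should be forced by transitivity through contested pairs; otherwise the component would split further. The same components then arise for $P'$, with identical within-component preferences. What remains is how $P'$ orders the components, and every agent must use the same linear order of components.

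\textbf{Step 3 (orienting the components; main obstacle).} I must show that $G_P$ only allows cyclic shifts of $(H_1,\dots,H_k)$. Transpositions are useless here, since they are always ties across components. So I would use 3-cycles: three agents $x,y,z$ holding houses from different components pass them around cyclically. Such an exchange produces margins in $\{-3,-1,1,3\}$, so every such edge is strict. If all three houses are unanimously ordered $p\succ q\succ r$, rotating them ``downward'' makes two agents worse off. The edge direction is therefore determined by the \emph{cyclic} orientation of the triple $(p,q,r)$ in the component order.

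Knowing the cyclic orientation of every triple of components determines a linear order up to rotation. If a component has size at least two, I can also mix a contested pair with a third component to break further symmetry. I expect the cyclic-orientation argument to be exactly where rotations, and nothing else, survive, matching the definition of rotation equivalence.

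The delicate parts are twofold. First, I need to verify that these 3-cycle edges really reveal the cyclic order consistently when components have internal disagreements. Second, small cases ($n\le 2$, or $k\le 2$, where every order is a rotation) must be handled separately. I would try first to prove the triple claim with all three houses taken as singletons-representatives of their components, choosing the three agents arbitrarily since cross-component preferences are unanimous.
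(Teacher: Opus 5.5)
\textbf{The gap is in Step~2.} Your Step~1 is the paper's \Cref{lem:MajorityQuery_pq}, your interval claim in Step~2 is correct, and your Step~3 is the paper's cyclic-orientation argument (\Cref{lem:cycle_type}). The problem is the sentence in Step~2 saying that a unanimous pair inside a component ``should be forced by transitivity through contested pairs; otherwise the component would split further.'' This is false. Take $n=3$:
\begin{itemize}
\item In $P$, agents $1,3$ rank $b\succ a\succ c$ and agent $2$ ranks $a\succ c\succ b$.
\item In $P'$, agents $1,3$ rank $b\succ c\succ a$ and agent $2$ ranks $c\succ a\succ b$.
\end{itemize}
In both profiles, $\{a,b\}$ and $\{b,c\}$ are contested with identical orientations for every agent, and $\{a,c\}$ is unanimous. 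So every transposition edge agrees, and the contested graph has the single component $\{a,b,c\}$. Yet the profiles disagree on $a$ versus $c$. They are not rotation equivalent, since each admits only the trivial decomposition. Transitivity cannot recover $a$ versus $c$, because every agent ranks $b$ first or last among the three. Your Step~3 is vacuous with one component, so your argument cannot separate $P$ from $P'$. Nevertheless $G_P\neq G_{P'}$: for $\mu=(b,c,a)$ and $\lambda=(c,a,b)$, one gets $\lambda\succ\mu$ under $P$ but $\mu\succ\lambda$ under $P'$.

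\textbf{The missing idea} is to use 3-cycles \emph{inside} components as well, arranged so that two of the three agents cancel out. This is the paper's \Cref{lem:pathstep}. Suppose every agent's ranking of $\{p,q\}$ and of $\{q,r\}$ is known, and $\{p,r\}$ is unanimous.
\begin{itemize}
\item If some agent ranks $q$ between $p$ and $r$, transitivity settles the pair.
\item Otherwise every agent ranks $q$ first or last among $p,q,r$. For $n\ge 3$, two agents $x,y$ put $q$ on the same side, say on top. For any third agent $z$, compare $\mu$ with $\mu(x)=q$, $\mu(y)=p$, $\mu(z)=r$ against $\lambda$, which differs only in $\lambda(x)=r$, $\lambda(y)=q$, $\lambda(z)=p$. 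Agent $x$ strictly prefers $\mu$ and $y$ strictly prefers $\lambda$. Hence the (necessarily strict) edge is decided by $z$ alone, and it reveals $z$'s ranking of $p$ versus $r$, which is everyone's ranking.
\end{itemize}
Iterate this along a path of contested pairs: from $p_1,p_2$ and $p_2,p_3$ obtain $p_1,p_3$, then combine with $p_3,p_4$ to obtain $p_1,p_4$, and so on. This determines all within-component preferences. With this lemma inserted, the rest of your plan goes through.

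Incidentally, no ``further symmetry'' can be broken in Step~3 by mixing a contested pair with a third component. By the ``if'' direction, rotations give identical majority graphs.
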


\begin{proof}[Proof sketch]
    It is easy to verify that rotation equivalent profiles indeed induce the same majority graph, so we focus on the remaining implication.
    Let $G_{P^*}$ be a majority graph that is induced by some profile $P^*$. Our goal is to find all profiles $P$ such that $G_P= G_{P^*}$. 
    As a first step, we consider a pair of houses $p,q$. We iterate over pairs of agents $x,y$ and instantiate an assignment $\mu$ in which $x$ obtains $p$ and $y$ obtains $q$. We compare $\mu$ to the assignment $\lambda$ in which $x$ and $y$ swap houses. If the two assignments create a majority tie, then $x$ and $y$ have identical preferences over $p,q$. However, if, e.g., $\mu$ is strictly majority-preferred to $\lambda$, then this means that $p \pref_x q$ and  $q \pref_y p$ for any profile $P$ with $G_P = G_{P^*}$.
    In other words, for each pair of distinct houses $p$ and $q$, we can determine whether one Pareto-dominates the other (without knowing which one) for all profiles $P$ with $G_P = G_{P^*}$. If this is not the case, then we can determine for each agent whether she ranks $p$ over $q$ in all such profiles $P$ or vice versa.

    Next, we instantiate a graph with the houses being the nodes. An edge between two houses $p$ and $q$ is added whenever not all agents prefer $p$ to $q$ or vice versa. Based on the insights from the previous paragraph, we can determine for each agent which of the houses they prefer more. This graph partitions the set of houses into connected components $H_1,\dots, H_k$. 
    We then show that within each component $H_i$, we can determine the relative ordering between all pairs of houses by querying appropriate majority comparisons, and that each agent ranks the houses of each component contiguously. Finally, we re-order the components and prove that $P$ can be decomposed as $(H_1,\dots,H_k)$ or a rotation $(H_{1+r},\dots,H_{k+r})$ thereof.
\end{proof}

\begin{remark}
    The proof of \Cref{thm:C1_Char} yields an efficient algorithm for reconstructing all profiles inducing a given majority graph in time polynomial in $n$.
Moreover, the majority graph uniquely determines the majority margins, which can also be deduced algorithmically.
    By contrast, to verify whether a given directed graph is the majority graph of a profile, one needs to check all ${\sim}(n!)^2$ edges of the graph.
\end{remark}

\begin{remark}
    Unless $n$ is small, only very few profiles admit a non-trivial decomposition, implying that they can be fully reconstructed from their majority graph. As a matter of fact, almost all majority graphs are induced by a \emph{single} preference profile. Calculations by \citet{BHS17a} demonstrate that more than 99\% of all majority graphs are induced by a single profile as soon as $n\geq 4$.
\end{remark}

\section{Majoritarian Assignment Rules}

The concept of majority graphs has given rise to numerous influential solution concepts in social choice theory, such as Condorcet winners, Copeland's rule, the top cycle, and the uncovered set \citep[see, e.g.,][]{Lasl97a,BBH15a}. In particular, all of these concepts are majoritarian, i.e., they can be computed solely based on the majority graph of a profile. As a consequence, the definitions of these concepts directly carry over to the assignment domain while preserving their natural appeal. Weak Condorcet winners, for example, are known as popular assignments in house allocation. An assignment $\mu$ is \emph{popular} if $\mu \mprefsim \lambda $ for all $\lambda\in M$. \Cref{ex:majority_graph} shows that popular assignments need not exist.
In the following, we investigate \emph{majoritarian} assignment rules, i.e., assignment rules that only depend on the majority graph. Formally, an assignment rule $F$ is majoritarian if $F(P)=F(P')$ for all profiles $P$ and $P'$ with $G_P=G_{P'}$. 

While our main focus is the study of established majoritarian voting rules in the context of assignment, 
\Cref{thm:C1_Char} implies that several well-known assignment concepts are actually majoritarian. Specifically, this result entails that an assignment rule is majoritarian if and only if it is invariant with respect to rotation equivalence. We use this fact to prove that Pareto-optimality, least unpopularity, and mixed popularity are majoritarian. By \emph{least unpopularity}, we denote the rule that returns all assignments minimizing the margin of their worst majority defeat. By \emph{mixed popularity}, we denote the rule that returns all assignments which are part of the support of some mixed popular matching. 
Formal definitions of these concepts can, for example, be found in the papers by \citet{McCu08a}, \citet{KMN11a}, and \citet{BrBu20a}.

\begin{corollary}\label{Cor:Everything_Majoritarian}
	\po, least unpopularity, and mixed popularity are majoritarian.
\end{corollary}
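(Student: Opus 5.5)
By \Cref{thm:C1_Char}, two profiles $P,P'$ with $G_P=G_{P'}$ are rotation equivalent. It therefore suffices to show that each of the three rules returns the same set on any two rotation equivalent profiles. For least unpopularity and mixed popularity this is immediate. Both are defined purely in terms of the majority comparisons between assignments, namely the margins $\lvert N_{\lambda,\mu}\rvert-\lvert N_{\mu,\lambda}\rvert$. As remarked after \Cref{thm:C1_Char}, the (unweighted) majority graph determines the margins, and equivalently rotation equivalent profiles induce the same weighted majority graph by the result of \citet{BHS17a}. The unpopularity margin of $\mu$ is $\max_\lambda$ of the margin of $\lambda$ over $\mu$. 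Mixed popularity is the support of optimal strategies in the symmetric zero-sum game whose payoff matrix is the margin matrix. Both are functions of the weighted majority graph alone and hence are invariant.

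The substantive case is \po, since Pareto dominance is not defined via majority comparisons. Let $(H_1,\dots,H_k)$ be a decomposition of $P$, and let $P'$ agree with $P$ within each component and admit the rotated decomposition $(H_{1+r},\dots,H_{k+r})$.

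I first show that every Pareto-optimal assignment $\mu$ of $P$ satisfies $\lvert\mu^{-1}(H_j)\rvert=\lvert H_j\rvert$ for each $j$, i.e., the agents holding houses of $H_j$ form a set of size $\lvert H_j\rvert$. Suppose not, and let $j$ be minimal such that some house $p\in H_j$ is assigned to an agent $x$ whose house $\mu(x)$ is not in $H_j$... this is automatic, since every house is assigned. A cleaner route uses serial dictatorships. By \citet{AbSo98a}, $\mu$ is Pareto-optimal iff it arises from a serial dictatorship. In any serial dictatorship under $P$, the first $\lvert H_1\rvert$ dictators take exactly the houses of $H_1$, because everyone ranks $H_1$ on top. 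The next $\lvert H_2\rvert$ dictators take $H_2$, and so on. Within the block of dictators picking from $H_j$, each agent picks her favourite remaining house of $H_j$, which depends only on the within-component preferences.

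Now, for a priority order $\sigma=(x_1,\dots,x_n)$, let $B_j(\sigma)$ denote its $j$-th block of size $\lvert H_j\rvert$. Under $P'$, consider the priority order obtained by cyclically rearranging the blocks so that the dictators who took $H_{1+r}$ under $P$ come first, then those who took $H_{2+r}$, and so on. Keep the order within each block unchanged. Under $P'$ these blocks again take exactly their respective components, and they make the same within-component choices because $P$ and $P'$ coincide within components. So the serial dictatorship under $P'$ outputs the same assignment $\mu$. Hence $\po(P)\subseteq\po(P')$, and by symmetry (rotating back by $k-r$) equality holds.

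The main obstacle is making this block argument precise. One must check that a block of $\lvert H_j\rvert$ consecutive dictators facing exactly the houses of $H_j$ together with lower-ranked houses picks precisely $H_j$, and that reordering blocks is always a legitimate priority order. Both follow by induction over the blocks. As an alternative check, one can argue directly that Pareto dominance between two assignments that each respect the block structure compares houses only within components, and that any assignment that does not respect it is Pareto-dominated in both profiles.
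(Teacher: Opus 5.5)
Your proposal is correct and follows the paper's proof: Theorem~\ref{thm:C1_Char} reduces everything to invariance under rotation equivalence, and least unpopularity and mixed popularity are handled via the margins being determined by the majority graph, while for \po{} your cyclic rearrangement of the blocks of dictators is exactly the paper's order $\sigma'$ (agents of $N_{1+r}$ first, then $N_{2+r}$, and so on, keeping the order of $\sigma$ within each $N_i=\{x\colon \mu(x)\in H_i\}$). In a write-up, delete the abandoned first attempt, and either drop or correct the closing ``alternative check'': the true statement is that if $\lambda$ Pareto-dominates $\mu$ in $P$ then $\lambda^{-1}(H_j)=\mu^{-1}(H_j)$ for all $j$, so Pareto dominance only compares houses within components and is therefore identical in $P'$; it is not true that assignments violating some block structure are dominated in both profiles.
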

\begin{proof}
    Let $P$ and $P'$ be two rotation equivalent profiles w.r.t.\ some decomposition $(H_1,\dots, H_k)$ and a shift by $r\in \{1,\dots, k-1\}$.\smallskip
    
    \noindent\textbf{PO:} Let $\mu\in \po(P)$ be a Pareto-optimal assignment. By a characterization due to \citet{AbSo98a}, there is an order over the agents $\sigma=(x_1,\dots,x_n)$ such that $\mu$ is chosen by the serial dictatorship $SD_\order$ induced by $\sigma$, i.e., $\mu=SD_\sigma(P)$. Note that the agents choose houses from $H_1,\dots, H_k$ in this order because, for all $i,j\in \{1,\dots, k\}$ with $i<j$, it holds that every agent prefers every house $h\in H_i$ to every house $h'\in H_j$. Next, we partition the agents $x\in N$ into the sets $N_i=\{x\in N\colon \mu(x)\in H_i\}$ for $i\in \{1,\dots, k\}$. Let $\sigma'$ denote the order of agents such that \emph{(i)} for all $i,j\in \{1,\dots, k\}$ with $i<j$, all agents in $N_{i+r}$ are ranked before all agents in $N_{j+r}$ and \emph{(ii)} within each set $N_i$, the agents are ordered the same as in $\sigma$. Under the serial dictatorship $SD_{\sigma'}$ induced by this sequence, the agents from $N_{1+r}$ first get to choose their houses. Since $P'$ is achieved by rotating $P$ with a shift of $r$, these agents obtain precisely the same houses from $H_{1+r}$ as in $\mu$. Inductively, the agents in $N_{j+r}$ obtain precisely the houses from $H_{j+r}$ under this shifted picking sequence in the profile $P'$, and the obtained assignment is hence $\mu$. This proves that $\mu \in \po(P') $. Reversing the roles of $P$ and $P'$, we obtain that $\po(P)=\po(P')$.\smallskip

    \noindent\textbf{Least unpopularity and mixed popularity:} The proof of \Cref{thm:C1_Char} shows that the margins of all majority comparisons can be inferred from the majority graph. Since least popularity and mixed popularity only depend on these margins, they are majoritarian.
\end{proof}

By contrast, we show next via an example that the rule that returns all \emph{rank-maximal assignments} \citep{IKM+06a} is not majoritarian. To introduce rank-maximality, we define the rank of a house $p$ in a preference relation $\succ$ by $r(\succ,p)=1+|\{q\in H\mid q\succ p\}|$. Further, the rank vector of an assignment $\mu$ for a profile $P$ contains the ranks $r(\succ_x,\mu(x))$ of each agent $x\in N$ for her assigned house in increasing order. Then, an assignment $\mu$ is rank-maximal if its rank vector is lexicographically optimal, i.e., the assignment maximizes the number of agents who obtain their favorite house, subject to this condition it maximizes the number of agents who obtain their second-ranked house, and so on.

\begin{example}
    Consider the following two profiles $P$ and $P'$. They are rotation equivalent by the decomposition $(\{a,b,c\},\{d,e,f\})$ and thus induce the same majority graph.
    \[\arraycolsep=3pt P = \begin{array}{rcccccc}
    1\colon & \goodass a,&b,&c,&d,&e,&f\\
	2\colon & a,&\goodass c,&b,&d,&f,&e\\
	3\colon & \goodass b,&a,&c,&e,&d,&f\\
	4\colon & a,&b,&c,&\goodass d,&e,&f\\
    5\colon & a,&b,&c,&d,&\goodass e,&f\\
	6\colon & a,&b,&c,&d,&e,&\goodass f\\
    \end{array}
    \quad\quad
    P' = \begin{array}{rcccccc}
    1\colon \badass d,&e,&f, &\goodass a,&b,&c\\
	2\colon d,&\badass f,&e, &a,&\goodass c,&b\\
	3\colon \badass e,&d,&f, &\goodass b,&a,&c\\
	4\colon \goodass d,&e,&f, &\badass a,&b,&c\\
    5\colon d,&\goodass e,&f, &a,&\badass b,&c\\
	6\colon d,&e,&\goodass f, &a,&b,&\badass c\\
    \end{array}
    \]

    The assignment $\mu = (a,c,b,d,e,f)$ marked in blue is rank maximal in $P$, but not in $P'$, as the assignment $\lambda = (d,f,e,a,b,c)$ in red assigns two agents their top choice. To see that $\mu$ indeed is rank-maximal in $P$, note that any assignment can give at most two agents their top choices. Agent $3$ has to obtain $b$, and since $1,4,5,6$ all have the same preferences, we can assign $a$ to agent $1$. Among the remaining agents, only $2$ can still obtain her second-favorite house, $c$. Among $4,5,6$, it then does not matter how we assign $d,e,f$ for rank-maximality.
\end{example}

We conclude this section by proving another surprising relationship between two majoritarian assignment rules. Specifically, we show that all Pareto-optimal assignments are semi-popular. Semi-popularity is a weakening of popularity, which requires that an assignment is majority preferred to at least half of all assignments. More formally, an assignment $\mu$ is \emph{semi-popular} in a profile~$P$ 
if $\lvert \{\lambda \in M\mid \mu \mprefsim \lambda\}\rvert \ge  \frac {\lvert M\rvert}2$ \citep{KaVa23a}.
By $SP(P)$, we denote the set of all semi-popular assignments in a profile~$P$.

\begin{restatable}{proposition}{POSubseteqSP}\label{thm:PO_Subset_SP}
    $\po\subseteq SP$.
\end{restatable}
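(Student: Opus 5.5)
The plan is to find, for a Pareto-optimal $\mu$, a bijection (ideally an involution) $\varphi\colon M\to M$ such that for every $\lambda\in M$ at least one of $\mu\mprefsim\lambda$ and $\mu\mprefsim\varphi(\lambda)$ holds. The set $\{\lambda\mid \lambda\mpref\mu\}$ is mapped injectively by $\varphi$ into $\{\lambda\mid\mu\mprefsim\lambda\}$. Since every $\lambda$ satisfies either $\mu\mprefsim\lambda$ or $\lambda\mpref\mu$, this gives $\lvert\{\lambda\mid\mu\mprefsim\lambda\}\rvert\ge\lvert M\rvert/2$.

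By the characterization of \citet{AbSo98a}, $\mu=SD_\sigma(P)$ for some order $\sigma=(x_1,\dots,x_n)$. The key structural fact is that if $x_i$ precedes $x_j$ in $\sigma$, then $\mu(x_i)\pref_{x_i}\mu(x_j)$, since $\mu(x_j)$ was still available when $x_i$ picked.

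To build $\varphi$, I write every $\lambda$ relative to $\mu$ via the agent permutation $\pi=\mu^{-1}\circ\lambda$, so $\lambda(x)=\mu(\pi(x))$. The first candidate is $\varphi(\lambda)=\mu\circ\pi^{-1}$, which reverses every cycle of $\pi$; this is an involution on $M$. In each nontrivial cycle $C$ of $\pi$, consider the $\sigma$-earliest agent $y\in C$. Every house $y$ can receive under $\lambda$ or $\varphi(\lambda)$ is $\mu(z)$ for some later agent $z\in C$, so $y$ is strictly worse off in both $\lambda$ and $\varphi(\lambda)$. Agents in fixed points of $\pi$ are indifferent in both comparisons.

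The main step is to show that within each cycle the agents cannot strictly prefer both $\lambda$ and $\varphi(\lambda)$ "too much". My first attempt is to prove the cyclewise inequality
\[
\#\{x\in C\mid \lambda\pref_x\mu\}+\#\{x\in C\mid \varphi(\lambda)\pref_x\mu\}\le \#\{x\in C\mid \mu\pref_x\lambda\}+\#\{x\in C\mid \mu\pref_x\varphi(\lambda)\}.
\]
I would then sum over cycles. The summed inequality says the combined margins of $\mu$ against $\lambda$ and $\varphi(\lambda)$ are nonnegative, so at least one of the two margins is nonnegative, which is exactly what is needed.

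I expect this inequality to be the real obstacle, and it may fail for the naive reversal map. If it does, I would try two things. First, I would replace $\varphi$ by a map adapted to $\sigma$, for instance one obtained by conjugating $\pi$ with the $\sigma$-order or by letting each agent swap the roles of "successor" and "predecessor" along $\sigma$. The aim is that an agent who gains in $\lambda$ (receiving the house of a $\sigma$-earlier agent) necessarily receives the house of a $\sigma$-later agent in $\varphi(\lambda)$, and hence loses there by the key structural fact. Second, I would argue by induction on $n$: remove $x_1$, whose house $\mu(x_1)$ is her top choice, and apply the induction hypothesis to the reduced instance, where $\mu$ restricted to the rest is $SD_{(x_2,\dots,x_n)}$ and thus Pareto-optimal. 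The count would then be extended by tracking how the assignments in which $x_1$ does not receive $\mu(x_1)$ (where $x_1$ strictly loses) shift the majority comparisons in $\mu$'s favor.
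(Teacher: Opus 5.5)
Your reduction is sound. The involution $\varphi(\mu\circ\pi)=\mu\circ\pi^{-1}$ is exactly the map the paper uses, and the cyclewise inequality you fear might fail is in fact true. The gap is that you never prove it, and that inequality is the entire content of the proposition. The one concrete fact you extract is that the $\sigma$-earliest agent of each cycle loses in both comparisons. That gives a single loser per cycle, which does not suffice once cycles have length at least three. Your first fallback asks that every agent who gains in $\lambda$ herself lose in the partner assignment. No map can achieve this: if all agents have identical preferences, the $\sigma$-last agent $x_n$ holds her bottom choice under $\mu$. She therefore strictly gains in every assignment that moves her and can never lose. Gains must be offset by losses of \emph{other} agents. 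The induction sketch is too vague to assess, and it is not needed.

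The missing idea is to charge each gain to the \emph{owner of the coveted house}. Suppose $y$ strictly prefers $\lambda$ to $\mu$, i.e., $\mu(\pi(y))\pref_y\mu(y)$, and set $x=\pi(y)\neq y$. Then $\mu(x)\pref_x\mu(y)$, since otherwise letting $x$ and $y$ swap their houses in $\mu$ is a Pareto improvement. In your serial-dictatorship language: $y$ covets $\mu(x)$, so $x$ picked before $y$, at a time when $\mu(y)$ was still available. Since $\varphi(\lambda)(x)=\mu(\pi^{-1}(x))=\mu(y)$, agent $x$ strictly prefers $\mu$ to $\varphi(\lambda)$. The map $y\mapsto\pi(y)$ is injective and preserves cycles. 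Hence in each cycle, the agents gaining in $\lambda$ are at most as many as those losing in $\varphi(\lambda)$. The symmetric map $y\mapsto\pi^{-1}(y)$ gives the same statement with $\lambda$ and $\varphi(\lambda)$ exchanged, and adding the two yields your inequality. The paper needs only the first injection. The assignments $\lambda$ and $\varphi(\lambda)$ move exactly the same agents (the non-fixed points of $\pi$), and each of them strictly gains or strictly loses. So if more than half of them gain in $\lambda$, more than half lose in $\varphi(\lambda)$; that is, $\lambda\mpref\mu$ implies $\mu\mpref\varphi(\lambda)$.
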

\begin{proof}
For our proof, we first introduce permutations on assignments. Given a permutation $\pi$ on $N$, define $\pi'\colon (N\to H)\to(N\to H)$ such that for any assignment $\mu\colon N\to H$ and agent $x\in N$, we have $\pi'(\mu)(x):=\mu(\pi(x))$. Intuitively, the assignment $\mu'=\pi'(\mu)$ is obtained from $\mu$ by assigning to agent $x$ the house that is given to agent $\pi(x)$ in $\mu$. 
For the sake of simplicity, we slightly abuse notation and refer to $\pi'$ as $\pi$ too.

	Now, fix an arbitrary profile $P$ and an assignment $\mu\in\po(P)$. 
	We consider an arbitrary permutation $\pi$ and show that $\pi(\mu)\mpref\mu$ implies that $\mu\mpref\pi^{-1}(\mu)$.
	For simplicity, we name $\pi(\mu)=:\eta$ and $\pi^{-1}(\mu)=:\lambda$.
	Let $y$ denote an arbitrary agent who strictly prefers $\eta$ to $\mu$, i.e., $\eta(y)\pref_y\mu(y)$.
	Further, let $x=\mu^{-1}\left(\eta(y)\right)$ be the agent who gets $\eta(y)$ in $\mu$. Note that $x\neq y$.
	Since $\mu$ is Pareto-optimal, it cannot be that $\mu(y)\pref_x\mu(x)$, as otherwise swapping the houses of $x$ and $y$ would be a Pareto-improvement over $\mu$.
	Therefore, $\mu(x)\pref_x\mu(y)=\lambda(x)$ and $\mu\pref_x\lambda$.
	Since $y$ was chosen arbitrarily, we see that for every agent strictly preferring $\eta$ to $\mu$, we have one other strictly preferring $\mu$ to $\lambda$. Moreover, if $\mu(x)=\eta(x)$, then $\pi(x)=x$, which implies also that $\mu(x)=\lambda(x)$. Hence, if a majority of agents prefer $\eta$ to $\mu$, a majority of agents prefer $\mu$ to $\lambda$. We lastly note that every assignment $\eta$ can be obtained by permuting $\mu$, i.e., there is some permutation $\pi$ such that $\eta=\pi(\mu)$. Hence, it follows that $\lvert \{\lambda \in M\mid \mu \mprefsim \lambda\}\rvert \ge  \frac {\lvert M\rvert}2$, so $\mu$ is semi-popular.
\end{proof}

\subsection{The Top Cycle}

We next turn to the top cycle, one of the most prominent majoritarian rules in the social choice domain \citep[e.g.,][]{Ward61a,Good71a,Smit73a,Bord76a,Schw86a,BrLe21a}. The underlying idea is very natural: popular assignments do not always exist because the majority relation $\succsim$ fails to be transitive (see \Cref{ex:majority_graph}). Instead, one can consider $\succsim^*$, the \emph{transitive closure} of $\succsim$, and simply return the maximal elements according to this relation. Formally, $\tc(P)=\left\{\mu\in M\colon\forall\lambda\in M:\mu\mprefsimt\lambda\right\}$.\footnote{One can also consider $\succ^*$, the transitive closure of the \emph{strict} part of the majority relation and return its maximal elements. The resulting SCF is known as the Schwartz set or GOCHA \citep{Schw86a}}$^,$\footnote{The top cycle is unrelated to the \emph{Top Trading Cycle (TTC)} by \citet{ShSc74a}, an assignment algorithm for settings with initial endowments.} 
Or, in other words, the top cycle returns all assignments that reach every other assignment on some path in the majority graph. 

As a first step towards understanding the top cycle in the assignment domain, we prove that it always contains all Pareto-optimal assignments. This is not true in the social choice domain \citep[see, e.g.,][Theorem 10.2.3]{Lasl97a}. The proof of the following proposition as well as \Cref{thm:Structure_TC} can be found in \Cref{app:TC}.
\begin{restatable}{proposition}{POSubseteqTC}\label{thm:PO_Subset_TC}
    $\po\subseteq \tc$.
\end{restatable}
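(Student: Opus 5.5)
The plan is to avoid building paths from $\mu$ to arbitrary assignments directly. Instead I reduce the statement to a claim about Pareto-optimal assignments only, and then prove that claim through serial dictatorships.

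\emph{Step 1 (reduction).} Pareto-dominance implies strict majority dominance: if $\alpha$ Pareto-dominates $\beta$, then $N_{\beta,\alpha}$ is strictly contained in $N_{\alpha,\beta}=N$, so $\alpha\mpref\beta$. Since $\mprefsim$ is complete on the finite set $M$, $\tc(P)\neq\emptyset$. Pick $\nu\in\tc(P)$.
\begin{itemize}[leftmargin=*]
\item If $\nu\in\po(P)$, keep it.
\item Otherwise, by finiteness there is a Pareto-optimal $\nu'$ that Pareto-dominates $\nu$ (follow a chain of Pareto improvements). Then $\nu'\mpref\nu$, and $\nu$ reaches everything, so $\nu'\in\tc(P)$.
\end{itemize}
Either way $\tc(P)\cap\po(P)\neq\emptyset$. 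Also, any assignment that reaches some element of $\tc(P)$ in the majority graph is itself in $\tc(P)$, by transitivity of $\mprefsim^*$. Hence it suffices to prove:

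\emph{Claim.} For any $\mu,\mu'\in\po(P)$, we have $\mu\mprefsim^*\mu'$.

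\emph{Step 2 (reduce to adjacent picking orders).} By the characterization of \citet{AbSo98a}, $\mu=SD_\sigma(P)$ and $\mu'=SD_{\sigma'}(P)$ for some priority orders $\sigma,\sigma'$. Any two orders are connected by a sequence of swaps of adjacent agents. So it suffices to show that if $\tau$ arises from $\sigma$ by swapping two consecutive agents $x_i,x_{i+1}$, then $SD_\sigma(P)\mprefsim^* SD_\tau(P)$. (The swap direction is arbitrary, so this gives paths both ways.)

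\emph{Step 3 (comparing adjacent serial dictatorships).}
\begin{itemize}[leftmargin=*]
\item Agents $x_1,\dots,x_{i-1}$ receive the same houses under $\sigma$ and $\tau$.
\item If $x_i$ and $x_{i+1}$ pick the same two houses under both orders, in either allocation, then everyone after them faces the same remaining set. The two assignments then coincide or differ by a transposition of $x_i,x_{i+1}$. A transposition changes only two agents' houses, so the comparison is a tie or a strict edge. Either direction gives the weak edge we need, because under a transposition $\mu\mprefsim\lambda$ fails only if both swapped agents strictly prefer $\lambda$, which contradicts Pareto-optimality of the SD outcome.
\item The hard case is when the remaining sets differ by one house. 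Then $x_i$'s top remaining house $h$ is also $x_{i+1}$'s top, so under $\tau$ agent $x_{i+1}$ takes $h$ and $x_i$ takes some $h'$. The later agents now face a remaining set with $h'$ replaced by $h''$, where $h''$ is the house $x_{i+1}$ received under $\sigma$. This can cascade along a chain of later agents.
\end{itemize}

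In that case the two assignments differ along a single alternating path: one agent gains, one loses, and the agents in the middle each trade one house for another. I would build an explicit intermediate sequence that reroutes houses along this path one transposition at a time. At each step I need that not both swapped agents strictly lose, i.e.\ each transposition is a weak majority edge. The SD structure should guarantee this: at each stage, the agent who is earlier in the picking order took her favorite available house, so she cannot strictly prefer the house she gives up being replaced by the later agent's house in a way that hurts both.

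\emph{Expected obstacle.} I expect the cascading case of Step 3 to be the main obstacle. If single transpositions fail there, the fallback is to use the permutation trick from the proof of \Cref{thm:PO_Subset_SP}: compare $SD_\sigma(P)$ with the cyclic rotation of houses along the alternating path. Pareto-optimality pairs every losing agent with a winning one, which would give a direct weak majority edge $SD_\sigma(P)\mprefsim SD_\tau(P)$ or its reverse, plus a path in the other direction.
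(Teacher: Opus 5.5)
\textbf{There is a genuine gap in Step 3.} Your Steps 1 and 2 are sound, and so are the non-cascading cases of Step 3. The cascading case, however, is where the whole difficulty of the proposition lies, and there you only assert that a suitable sequence of transpositions exists. In that case $\mathit{SD}_\sigma(P)$ and $\mathit{SD}_\tau(P)$ differ by one cyclic shift of houses along $x_i$, $x_{i+1}$ and a chain of later agents. Each agent in the chain may individually prefer either assignment, so $\mathit{SD}_\tau(P)$ can beat $\mathit{SD}_\sigma(P)$ by a wide margin, and a genuine multi-step path is required.

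Whether a given decomposition into transpositions consists of weak edges depends on the order of the transpositions. A serial dictatorship only reveals each agent's ranking over the houses still available at her own turn. For example, suppose $x_i$ acts as the carrier walking forward along the cycle. The second transposition hands $h''$ to the agent who holds it under $\tau$ and gives $x_i$ that agent's house under $\sigma$. Both of them can be strictly better off, so that step need not be a weak edge. Your stated requirement, ``not both swapped agents strictly lose,'' is moreover the condition for the reverse edge.

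The fallback does not help either. The argument of \Cref{thm:PO_Subset_SP} relates $\mu$ to $\pi^{-1}(\mu)$, not to $\pi(\mu)=\mathit{SD}_\tau(P)$. And ``an edge in one direction or the other'' is just completeness of $\succsim$, which gives no path in the direction you need.

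\textbf{How your route can be completed.} Choose the order of transpositions carefully: walk along the chain in reverse picking order, with a carrier. Initially the carrier is the agent at which the cascade ends, who takes the $\sigma$-only house under $\sigma$ and the $\tau$-only house under $\tau$. At each stage, the carrier holds the house available under $\sigma$ but not under $\tau$, and still needs the house available under $\tau$ but not under $\sigma$.
\begin{itemize}
\item A chain agent who under $\sigma$ declined the $\sigma$-only house for some $c$ gives $c$ to the carrier and takes over the $\sigma$-only house, becoming the new carrier. She is strictly worse off, since she declined that house.
\item A chain agent who took the $\sigma$-only house $p$ under $\sigma$ and some $c$ under $\tau$ swaps $p$ for $c$ with the carrier. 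She is strictly worse off, since $p\succ c$ for her.
\item At the end the carrier holds $h'$ and swaps with $x_{i+1}$, who ranks $h''$ above $h'$. Finally, $x_i$ trades $h$ for $h'$ with $x_{i+1}$.
\end{itemize}
Every transposition makes some agent strictly worse off, so $\mathit{SD}_\sigma(P)\succsim\dots\succsim\mathit{SD}_\tau(P)$.

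\textbf{How the paper avoids the cascade.} It uses a different elementary move on priority orders: moving the last agent $x_n$ forward. By induction, the set available to each $x_i$ with $i<n$ under the original order contains the set available under the new order. Hence all agents except $x_n$ weakly prefer the original outcome, and a single weak majority edge results. Repeated moves of this kind transform any priority order into any other.
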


Using \Cref{thm:PO_Subset_TC} as a stepping stone, we obtain a much stronger structural result about majority graphs in assignment: when $n\geq 5$, 
the top cycle can only contain one, two, all but one, all but two, or all assignments. 

\begin{restatable}{theorem}{TC}\label{thm:Structure_TC}
     Let $P$ be any profile with $n\geq 5$ agents and houses. Then, $\lvert \tc(P) \vert \in \{1,2,n!-2,n!-1, n!\}$.
     More precisely, we have
     \begin{enumerate}[label=\textit{(\roman*)}]
         \item $\lvert \tc(P) \vert = 1$ if all agents have distinct top choices,\footnote{This case corresponds to the profiles admitting a strongly popular assignment, which strictly majority dominate every other assignment. These are known as (strict) Condorcet winners in social choice.}
         \item $\lvert \tc(P) \vert = 2$ if all but two agents have distinct top choices. Further, these two also share the same second choice, which is not top-ranked by any other agent either,
         \item $\lvert \tc(P) \vert = n! - 2$ if \textit{(i)} and \textit{(ii)} do not hold, and all but two agents have distinct bottom choices. Further, these two also share the same second-to-bottom choice, which is not last-ranked by any other agent either,
         \item $\lvert \tc(P) \vert = n! - 1$ if \textit{(i)} and \textit{(ii)} do not hold, and all agents have distinct bottom choices, and
         \item $\lvert \tc(P) \vert = n!$ if none of the above cases holds.
     \end{enumerate}
\end{restatable}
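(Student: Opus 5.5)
The plan is to use \Cref{thm:PO_Subset_TC} as the anchor. Every Pareto-optimal assignment lies in $\tc(P)$, and $\tc(P)$ is closed upward under reachability: if $\lambda \mprefsim^* \mu$ for some $\mu\in\tc(P)$, then $\lambda\in\tc(P)$. So it suffices to decide which assignments can reach a Pareto-optimal assignment by a path in the majority graph. If $\lambda$ reaches some $\mu\in\tc(P)$, then $\lambda$ reaches everything, so $\lambda\in\tc(P)$.

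For the ``small'' cases (i) and (ii), the argument is direct. In (i), the assignment $\mu^*$ giving everyone their top house is strictly preferred by every agent who does not already receive their top house, so it strictly beats every other assignment and $\tc(P)=\{\mu^*\}$. In (ii), let the two agents $x,y$ share top house $a$ and second house $b$, with $b$ top-ranked by no one else. The two assignments giving the others their tops and $\{a,b\}$ to $\{x,y\}$ in either order tie with each other. Each of them is weakly preferred by a strict majority against any other assignment; here I will count carefully that any deviation hurts at least as many agents as it helps, using $n\ge 5$. Hence $\tc(P)$ is exactly these two. Conversely, if neither (i) nor (ii) holds, I will show that $\tc(P)$ has size at least $n!-2$.

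The main step is a ``climbing'' lemma. I will show that if $\lambda$ is not among a few special assignments, then there is $\lambda'$ with $\lambda'\mprefsim\lambda$ that is closer to a Pareto-optimal assignment, or that weakly dominates $\lambda$ and lies in the top cycle. For almost every $\lambda$, I would exhibit a direct ``cyclic rotation'' move: take an agent $z$ who does not receive her top house $h$, and let $w$ be the holder of $h$. Give $h$ to $z$ and compensate $w$ by moving houses along a short cycle that involves about half the agents. This yields an assignment that weakly beats $\lambda$. Using that $n\ge5$ leaves room, I then want to route $\lambda$ to a serial-dictatorship outcome in a few steps. In this way all $\lambda$ fail to reach $\tc(P)$ only if $\lambda$ weakly beats nothing outside a set that is itself dominated by everything.

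The dual analysis identifies the exceptions. An assignment $\lambda$ is outside $\tc(P)$ iff every $\nu$ with $\nu\mprefsim\lambda$ is also outside, that is, the complement is a dominated set. I will show that a dominated set of size at most two must consist of the ``anti-Condorcet'' assignment. This is the assignment giving everyone their bottom house when bottoms are distinct (case (iv)), or the analogous pair when two agents share bottom and second-to-bottom choice (case (iii)). This mirrors (i) and (ii) under reversal of preferences, using the same counting argument, which here shows that these assignments strictly lose to everything else.

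The main obstacle is the remaining claim that no dominated set other than these exists. Concretely, every assignment other than the (at most two) bottom ones has a path to a Pareto-optimal assignment, and whenever (i)/(ii) fail, every Pareto-optimal assignment has a path to every other assignment. I would attack it by showing two things. First, from any Pareto-optimal $\mu$ one can reach by a single weak edge an assignment $\mu'$ that swaps two agents with a shared disagreement, and then cascade; this is where the failure of (i)/(ii) gives a tie or a win that can be exploited. Second, small $n$ ($n\le4$) counterexamples show why $n\ge5$ is needed for the path constructions to have enough slack agents.
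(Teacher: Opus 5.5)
You handle (i), (ii), and the exclusion half of (iii)/(iv) essentially as the paper does: the one or two ``anti-Condorcet'' assignments lose strictly to everything else, so they cannot reach anything else. One caveat: in (ii) you need both special assignments to \emph{strictly} beat every other assignment. ``Every deviation hurts at least as many agents as it helps'' gives only weak dominance, which would not keep other assignments out of $\tc(P)$.

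The genuine gap is the inclusion half of (iii)--(v), and it starts with the direction of the edges. To place $\lambda$ in $\tc(P)$ you need a path \emph{from} $\lambda$, i.e., $\lambda\mprefsimt\mu$ for some $\mu\in\tc(P)$. Your climbing lemma and cyclic-rotation move instead produce $\lambda'$ with $\lambda'\mprefsim\lambda$. That only says $\lambda'$ reaches $\lambda$, which is uninformative, since every non-Pareto-optimal assignment is already beaten by a Pareto-optimal one. Your criterion ``$\lambda\notin\tc(P)$ iff every $\nu$ with $\nu\mprefsim\lambda$ lies outside $\tc(P)$'' is also false: in case (i) the Condorcet winner is in $\tc(P)$ and beats every excluded assignment. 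The correct closure property is that $\lambda\notin\tc(P)$ and $\lambda\mprefsim\nu$ imply $\nu\notin\tc(P)$. Finally, your second ``obstacle'', that every Pareto-optimal assignment reaches every other assignment, is not one. It follows from \Cref{thm:PO_Subset_TC} and the definition of $\tc$, even when (i) or (ii) holds.

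What remains is the actual content of the theorem. Every non-Pareto-pessimal assignment (and, in case (v), every assignment) must reach, along a majority path, something already known to be in $\tc(P)$. This is false under (i) or (ii), so any proof must use their failure. Your rotation sketch for ``almost every $\lambda$'' never does, and ``enough slack agents'' is not a mechanism; the paper's remarks for $n=3,4$ show the exceptional sets depend on fine structure. The paper supplies two ingredients you are missing.

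First, \Cref{lem:onetop}. From the failure of (i)/(ii) it extracts one of three configurations:
\begin{itemize}
\item a house top-ranked by three agents;
\item two houses each top-ranked by two agents;
\item two agents sharing a top choice, one of whom second-ranks a third agent's top choice.
\end{itemize}
In each case it shows there is an agent $x$ with top house $p$ such that \emph{every} assignment giving $p$ to $x$ lies in $\tc(P)$. The individual swaps are certified by restricting to five agents and invoking the computer-verified \Cref{fact:n5} via \Cref{lem:non-isolation}.

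Second, \Cref{lem:notpso_subseteq_tc}. A non-Pareto-pessimal $\lambda$ Pareto-dominates some Pareto-pessimal $\eta$. One edits $\eta$ into a $\mu$ with $\mu(x)=p$ while keeping $\lambda\mprefsimt\mu$. This is immediate if at least two agents other than $x$ and the holder of $p$ in $\eta$ strictly prefer $\lambda$ to $\eta$. Otherwise it needs a careful argument with three further agents, using $n\ge 6$.

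Case (v) needs yet another step: exhibiting an assignment in $\tc(P)\cap\bc(P)$ to conclude $\tc(P)=M$. Your sketch does not separate (v) from (iii) and (iv) at all.
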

\begin{proof}[Proof sketch]
    First, the cases \textit{(i)} and \textit{(ii)} follow relatively easily from \Cref{thm:PO_Subset_TC}: we get under the corresponding assumptions that there are $1$ or $2$ Pareto-optimal assignments, and it is easy to show that these are the only ones in the top cycle. Moreover, if the corresponding assumptions are not true, we can show that there are more than $1$ (resp. $2$) Pareto-optimal assignments. 
    
    For cases \textit{(iii)} through \textit{(v)}, we first prove via a computer-aided approach that our result holds when there are $n=5$ agents and houses. Specifically, we let the computer enumerate all profiles for $n=5$ (up to symmetries) and verify that our theorem is true. Based on this insight, we then tackle the remaining cases when $n\geq 6$. To this end, we define the bottom cycle $\bc(P)$ as the set of assignments that can be reached from every other assignment via a path in the majority graph. Further, let $P_{N',H'}$ denote the restriction of our input profile $P$ to a set of agents $N'$ and set of houses $H'$ (with $|N'|=|H'|$), and, for every assignment $\mu$, let $\mu_{N',H'}$ be the assignment from $N'$ to $H'$ induced by $\mu$. Based on our five-agent case, we then prove that there is a path from $\lambda$ to $\mu$ in the majority relation if the sets $N'=\{x\in N\colon \mu(x)\neq \lambda(x)\}$ and $H'=\mu(N')$ satisfy that $|N'|=|H'|=5$, $\lambda_{N',H'}\not\in \bc(P_{N',H'})$ or $|\bc(P_{N',H'})|>2$, and $\mu_{N',H'}\not\in \tc(P_{N',H'})$ or $|\tc(P_{N',H'})|>2$.

    Next, we show that, under cases \textit{(iii)}, \textit{(iv)}, and \textit{(v)}, there is an agent $x^*$ such that, if $x^*$ get his most preferred house $p$ in an assignment $\mu$, then $\mu$ belongs to the top cycle. To prove this claim, we consider a case distinction regarding the shape of $P$. In particular, since cases \textit{(i)} and \textit{(ii)} do not apply, we know that either (1) there is a house $p$ that is top-ranked by at least three agents, (2) there are two houses $p$ and $q$ that are both top-ranked by two agents, or
       (3) there are two houses $p$ and $q$ such that two agents top-rank $p$, one of them second-ranks $q$, and a third agent top-ranks $q$.
       
    For example, to prove our claim for case (1), let $x$, $y$, $z$ denote three agents that top-rank $p$ and let $\mu$ denote an assignment derived from a serial dictatorship where $x$ picks first. By this definition, $\mu(x)=p$ and $\mu\in\tc(P)$ because $\mu$ is Pareto-optimal. Next, let $\lambda$ denote an assignment derived from $\mu$ by swapping the houses of two arbitrary agents $u,v\in N\setminus\{x\}$. We claim that $\lambda\in\tc(P)$. To see this, we choose a set $N'\supseteq \{x,y,z,u,v\}$ with $|N'|=5$ and let $H'=\lambda(N)$. In $P_{N',H'}$, the agents $x,y,z$ still top-rank $p$, so we infer from cases (i) and (ii) that $|\tc(P_{N',H'})|>2$. Further, since $\lambda(x)=\mu(x)=p$, $\lambda_{N',H'}\not\in\bc(P_{N',H'})$ unless $|\bc(P_{N',H'})|>2$. Hence, the insights of the previous paragraph show that $\lambda\mprefsimt \mu$. By repeatedly applying this construction, it follows that every assignment $\lambda$ with $\lambda(x)=p$ is in the top cycle. The other cases rely on similar ideas. 

    As the next step, we prove that $\tc(P)$ contains all assignments except for so-called Pareto-pessimal ones, which are those that do not Pareto-dominate any other assignment. To prove this claim, we focus on an assignment $\mu$ that Pareto-dominates another assignment $\lambda$. Hence, for all agents $x$ with $\mu(x)\neq\lambda(x)$, it holds that $\mu(x)\succ_x\lambda(x)$. Further, let $x^*$ denote the agent such that, if $x^*$ gets his favorite house $p$ in an assignment, the assignment is in the top cycle. We show that we can modify $\lambda$ such that agent $x^*$ is assigned his favorite house $p$ while maintaining that $\mu$ majority dominates the resulting assignment $\eta$. Since $\eta\in\tc(P)$ by our previous discussion, this proves that $\mu\in\tc(P)$, too.

    Finally, we observe that the top cycle of a profile is the bottom cycle of the profile where we reversed all agents' preference relations. This implies that all Pareto-pessimal assignments are contained in the bottom cycle and that the bottom cycle contains all assignments that are not Pareto-optimal if $|\bc(P)|>2$. For cases \emph{(iii)} and \emph{(iv)}, our theorem then follows by showing that there are precisely two or one Pareto-pessimal assignments. By contrast, for case \textit{(v)}, we show that either all assignments are Pareto-optimal and thus in TC, or there is an assignment that is not Pareto-optimal but in the top cycle. Since this assignment is also in the bottom cycle in this case, $\tc(P)=\bc(P)$, which implies that $\tc(P)=M$. 
\end{proof}

\Cref{thm:Structure_TC} shows that deterministic assignments are highly unstable with respect to majority deviations. Indeed, given a starting assignment and a target assignment, one can almost always convince the agents to transition from one to the other by presenting intermediate assignments that are weakly majority preferred.
To illustrate this point, consider the following poor assignment that reaches every other assignment via some majority path.

\begin{example}\label{ex:bad_in_tc}
	In the subsequent profile $P$, the assignment $\mu$ marked in red is obviously not desirable. It fails to be Pareto-optimal, and many agents even receive their least-preferred house. Nevertheless, it is contained in the top cycle. A path of dominations via which $\mu$ reaches a serial dictatorship (and by virtue of \Cref{thm:PO_Subset_TC} the entire top cycle) is given in \Cref{app:path_bad}. The rough idea is to repeatedly reassign to some agents slightly worse houses while improving other agents' assignments significantly. For example, we can make agents $4,5$ worse by assigning $c,a$ to them, respectively. However, this frees houses $d,e$ which we assign to agents $2$ and $7$, thereby making these agents significantly happier.
    
    \[\arraycolsep=3pt P = \begin{array}{rccccccc}
	1\colon & f,&b,&d,&e,&c,&a,&\badass g\\
	2\colon & d,&f,&g,&a,&b,&e,&\badass c\\
	3\colon & d,&a,&c,&g,&e,&b,&\badass f\\
	4\colon & a,&d,&b,&f,&g,&\badass e,&c\\
	5\colon & c,&g,&e,&b,&f,&\badass d,&a\\
	6\colon & f,&a,&e,&d,&g,&c,&\badass b\\
	7\colon & c,&d,&e,&b,&g,&f,&\badass a\\
    \end{array}
    \]
\end{example}

\begin{remark}{
    For completeness, we also consider the cases $n<5$ for \Cref{thm:Structure_TC}. Clearly, there exists only one assignment for $n=1$, and two assignments for $n=2$.
    For $n=3$, we have found via a computer-aided approach that the top cycle has size either $1$, $2$, $n!-2=4$, or $n!=6$. However, in contrast to case \textit{(iv)} of \Cref{thm:Structure_TC}, the top cycle may contain $n!-2$ assignments even though all agents have pairwise distinct bottom choices. This happens, for example in the following profile, where $\mu= (b,c,a)$ (in red) and $\lambda=(c,b,a)$ do not belong to the top cycle.
    \[\arraycolsep=3pt P = \begin{array}{rccc}
    1\colon & a, &\badass b, & c\\
	2\colon & a, &\badass c, & b\\
	3\colon & c, &b,&  \badass a\\
	\end{array}
    \]
    Lastly, for $n=4$, we found via our computer-aided approach that the top cycle can, in addition to the five sizes described in \Cref{thm:Structure_TC}, also have a size of $n!-3$. Up to symmetries, this happens precisely in the following two profiles $P$ and $P'$, where the assignments $\mu= (c,d,b,a)$ (marked in red), $\lambda = (d,c,a,b)$, and $\eta=(d,c,b,a)$ are respectively not contained in the top cycle.
    \[\arraycolsep=3pt P = \begin{array}{rcccc}
    1\colon & a, &b, & \badass c, & d\\
	2\colon & a, &b , & \badass d, & c\\
	3\colon & c, &d, & a, & \badass b\\
	4\colon & c, &d, & b, & \badass a\\
    \end{array}
    \quad
    P' = \begin{array}{rcccc}
    1\colon & a, &b, & \badass c, & d\\
	2\colon & a, &b , & \badass d, & c\\
	3\colon & d, &c, & a, & \badass b\\
	4\colon & d, &c, & b, & \badass a\\
    \end{array}
    \]
}
\end{remark}

\begin{remark}
    \Cref{thm:Structure_TC} stands in stark contrast to classic social choice, where the top cycle may have any number of elements, even when there are at most three agents.
Further, in social choice theory, \tc can be computed in linear time in the size of the profile \citep{BFH09b,BBH15a}. \Cref{thm:Structure_TC} implies that in assignment, computing and returning a concise representation of the (possibly exponentially large) top cycle is possible in \emph{sub-linear} time. 
\end{remark} 

\subsection{Uncovered Sets}

As our final contribution, we turn to another technique addressing the non-transitivity of the majority relation: uncovered sets. These sets are based on covering relations, which are natural \emph{transitive subrelations} of the majority relation. Just as in the definition of the top cycle, we can take the maximal elements for each of these relations, defining an uncovered set that refines the top cycle. Just as the top cycle, uncovered sets have been extensively studied in social choice theory \citep[see, e.g.,][]{Bord83a,BLS92a,BrFi08b,Dugg11a}. 

The presence of majority ties in the assignment domain allows for multiple definitions of covering relations and uncovered sets, and we will subsequently define the three most common ones. Given a profile $P$, an assignment $\mu$ \emph{Bordes covers} another assignment $\lambda$, if $\mu \pref \lambda$, and for every third assignment $\eta$, we have that $\lambda \pref \eta$ implies $\mu \pref \eta$. Similarly, $\mu$ \emph{Gillies covers} $\lambda$, if $\mu \pref \lambda$, and for every $\eta$, we have that $\eta \pref \mu$ implies $\eta \pref \lambda$. Finally, $\mu$ \emph{McKelvey covers} $\lambda$ if it Bordes and Gillies covers it. 
Each of the three covering relations gives rise to a corresponding \emph{uncovered set} (\uc). It returns the maximal assignments of the covering relation, i.e., $\uc(P) = \{\mu\in M \colon \text{no $\lambda\in M$ covers $\mu$}\}$.
Whenever we refer to covering or \uc without 
further specification, we mean McKelvey covering. 
All three uncovered sets can be characterized as assignments that reach all other assignments via some majority path of length at most~2. For Bordes, the first segment of any path of length~2 must be strict; for Gillies, the second segment must be strict; and for McKelvey, one of the two segments must be strict.
This immediately implies that all uncovered sets are contained in the top cycle.
Moreover, both the Bordes and the Gillies uncovered set are refinements of the McKelvey uncovered set.

From the general social choice setting, we know that $\uc\subseteq \po$ \citep{Fish77a}. This inherits to assignment, and we can easily prove that the inclusion is strict on this domain, too.

\begin{example}\label{ex:UC_subseteq_PO}
	In the following profile, the assignment $\mu= (c,a,b)$ in blue McKelvey-covers the assignment in red $\lambda= (a,b,c)$, even though $\lambda$ is Pareto-optimal.

    \[\arraycolsep=3pt P = \begin{array}{rccc}
    1\colon & \badass a,&\goodass c,&b\\
	2\colon & \goodass a,&\badass b,&c\\
	3\colon & \goodass b,&a,&\badass c\\
    \end{array}
    \]
    Recall that all serial dictatorships are Pareto-optimal. Hence, this example illustrates that \po fails to distinguish between ``good'' picking sequences and ``bad'' ones in which the agents take away each other's favorite houses in unfortunate ways. 
    This effect occurs for arbitrarily large numbers of agents. Thus, the set of uncovered assignments can be seen as a particularly attractive subset of the set of Pareto-optimal assignments.
\end{example}

To see how decisive \uc is, we computed its choice sets and tracked the occurring cardinalities while iterating over all preference profiles up to symmetry for $n=5$ exhaustively. The resulting graph is depicted in \Cref{fig:uc5}. The corresponding code can be found on Zenodo \citep{Schl26a}. Further, we sampled profiles for $n=7$ agents drawing each agent's preferences uniformly at random,
depicted in \Cref{fig:uc7}.
It turns out that the Bordes-\uc is almost indistinguishable from the McKelvey-\uc, while the Gillies-\uc is, on average, the most discriminating one. This can be explained as follows: Under the Gillies-\uc, for an assignment $\lambda$ to not be covered despite $\mu \pref \lambda$, there needs to exist another assignment $\eta$ such that $\lambda \mprefsim \eta \pref \mu$. However, for small numbers of agents, there are many profiles admitting popular assignments. If $\mu$ is such a popular assignment, there exists no $\eta$ with $\eta\pref \mu$, and hence $\mu$ automatically Gillies-covers all $\lambda$ with $\mu \pref \lambda$.
Most notably, many profiles in both simulations admitted an uncovered set of size two. This finding suggests that \uc is much more discriminative in assignment than in general social choice. However, can this already be explained by \po being more discriminative in assignment than in social choice?

To investigate how much \uc differs from \po, we exhaustively studied the case when $n=5$. For this, we utilize that the rules are symmetric with respect to permuting agents and houses. We therefore fix the preferences of agent $1$, and further demand that the preferences of agents $2$ through $5$ are ordered lexicographically. This results in roughly $9$ million profiles to be checked, which can be done within a few days on a computer.
In \Cref{fig:ucPOdiffquot}, we depict the percentage of profiles $P$ for which the ratio $\frac{\lvert \uc(P)\rvert}{\lvert \po(P)\rvert}$ is at most $x\in [0,1]$, as a function of $x$. The results suggest that, indeed, \uc is significantly more discriminative than \po. Hence, this consolidates that \uc is an interesting refinement of \po, and it seems worthwhile to further investigate the properties of \uc.

\begin{figure}
	\begin{tikzpicture}[xscale=1.4,yscale=0.9]\datavisualization [scientific axes=clean,
		x axis={label={Cardinality}},
		y axis={label={Frequency (as percentage)}},
legend=north east inside,
		visualize as line/.list={mckelvey, bordes, gillies, serialdictator},
		legend=north east inside,
		style sheet=strong colors,
mckelvey={label in legend={text=McKelvey,
            label in legend line coordinates={(-0.6em,0),(0,0)}}},
  bordes={label in legend={text=Bordes,
            label in legend line coordinates={(-0.6em,0),(0,0)}}},
  gillies={label in legend={text=Gillies,
            label in legend line coordinates={(-0.6em,0),(0,0)}}},
  serialdictator={label in legend={text=Pareto,
            label in legend line coordinates={(-0.6em,0),(0,0)}}}
        ]
		data [set=mckelvey] {x, y
1, 3.6544721
2, 39.8530175
3, 3.4260676
4, 2.3621626
5, 0.5392884
6, 4.2550914
7, 4.3504361
8, 6.5946073
9, 3.4030465
10, 4.0027846
11, 2.0853367
12, 3.6763587
13, 2.3160433
14, 2.9811106
15, 1.8014833
16, 2.2595149
17, 1.1513301
18, 1.7352178
19, 0.9706310
20, 1.4752116
21, 0.6816557
22, 1.0506321
23, 0.4975420
24, 0.9130012
25, 0.3374959
26, 0.5826320
27, 0.2844592
28, 0.5563615
29, 0.2056478
30, 0.3774468
31, 0.1378512
32, 0.3156644
33, 0.0902669
34, 0.1978052
35, 0.0753418
36, 0.2024314
37, 0.0481901
38, 0.1131999
39, 0.0448306
40, 0.0819287
41, 0.0099685
42, 0.0867972
43, 0.0213138
44, 0.0336615
45, 0.0083713
46, 0.0467582
47, 0.0033595
48, 0.0240675
49, 0.0003304
50, 0.0242217
51, 0.0000000
52, 0.0117749
53, 0.0027537
54, 0.0144405
55, 0.0000000
56, 0.0037340
57, 0.0000000
58, 0.0026876
59, 0.0000000
60, 0.0094728
61, 0.0000000
62, 0.0001322
63, 0.0000000
64, 0.0034477
65, 0.0000000
66, 0.0014099
67, 0.0000000
68, 0.0007931
69, 0.0000000
70, 0.0000000
71, 0.0000000
72, 0.0012337
73, 0.0000000
74, 0.0000000
75, 0.0000000
76, 0.0000000
77, 0.0000000
78, 0.0013548
79, 0.0000000
80, 0.0000000
81, 0.0000000
82, 0.0000000
83, 0.0000000
84, 0.0000881
85, 0.0000000
86, 0.0000000
87, 0.0000000
88, 0.0000000
89, 0.0000000
90, 0.0000000
91, 0.0000000
92, 0.0000000
93, 0.0000000
94, 0.0000000
95, 0.0000000
96, 0.0002203
97, 0.0000000
98, 0.0000000
99, 0.0000000
100, 0.0000000
101, 0.0000000
102, 0.0000000
103, 0.0000000
104, 0.0000000
105, 0.0000000
106, 0.0000000
107, 0.0000000
108, 0.0000000
109, 0.0000000
110, 0.0000000
111, 0.0000000
112, 0.0000000
113, 0.0000000
114, 0.0000000
115, 0.0000000
116, 0.0000000
117, 0.0000000
118, 0.0000000
119, 0.0000000
120, 0.0000110

}
		data [set=bordes] {x, y
1, 3.6544721
2, 39.8530175
3, 3.4260676
4, 2.3621626
5, 0.5392884
6, 4.3100776
7, 4.4350304
8, 6.8968336
9, 3.5037225
10, 3.8245859
11, 1.9476507
12, 3.6884971
13, 2.3464443
14, 2.9472839
15, 1.8159128
16, 2.2320879
17, 1.1115113
18, 1.7317591
19, 0.9290499
20, 1.4752116
21, 0.6729540
22, 1.0433292
23, 0.4777152
24, 0.8953884
25, 0.3326493
26, 0.5729609
27, 0.2901319
28, 0.5418218
29, 0.2008563
30, 0.3553730
31, 0.1359236
32, 0.3160609
33, 0.0896060
34, 0.1859862
35, 0.0753968
36, 0.1949193
37, 0.0500626
38, 0.1098183
39, 0.0424624
40, 0.0793402
41, 0.0098583
42, 0.0848366
43, 0.0198819
44, 0.0340690
45, 0.0087568
46, 0.0448636
47, 0.0031943
48, 0.0236600
49, 0.0003304
50, 0.0240895
51, 0.0000000
52, 0.0113894
53, 0.0030842
54, 0.0140770
55, 0.0000000
56, 0.0038662
57, 0.0000000
58, 0.0025555
59, 0.0000000
60, 0.0093737
61, 0.0000000
62, 0.0001322
63, 0.0000000
64, 0.0034477
65, 0.0000000
66, 0.0014099
67, 0.0000000
68, 0.0007931
69, 0.0000000
70, 0.0000000
71, 0.0000000
72, 0.0012337
73, 0.0000000
74, 0.0000000
75, 0.0000000
76, 0.0000000
77, 0.0000000
78, 0.0013548
79, 0.0000000
80, 0.0000000
81, 0.0000000
82, 0.0000000
83, 0.0000000
84, 0.0000881
85, 0.0000000
86, 0.0000000
87, 0.0000000
88, 0.0000000
89, 0.0000000
90, 0.0000000
91, 0.0000000
92, 0.0000000
93, 0.0000000
94, 0.0000000
95, 0.0000000
96, 0.0002203
97, 0.0000000
98, 0.0000000
99, 0.0000000
100, 0.0000000
101, 0.0000000
102, 0.0000000
103, 0.0000000
104, 0.0000000
105, 0.0000000
106, 0.0000000
107, 0.0000000
108, 0.0000000
109, 0.0000000
110, 0.0000000
111, 0.0000000
112, 0.0000000
113, 0.0000000
114, 0.0000000
115, 0.0000000
116, 0.0000000
117, 0.0000000
118, 0.0000000
119, 0.0000000
120, 0.0000110

}
		data [set=gillies] {x, y
1, 3.6544721
2, 53.2652614
3, 0.0000000
4, 23.9052809
5, 2.9502249
6, 3.8656493
7, 0.0356882
8, 3.9740467
9, 1.3059239
10, 1.0422608
11, 0.2564264
12, 1.0332837
13, 0.4332151
14, 0.6584914
15, 0.1570171
16, 0.4232026
17, 0.0495119
18, 0.3778654
19, 0.1543735
20, 0.3368460
21, 0.0731939
22, 0.2510181
23, 0.0509438
24, 0.2128625
25, 0.0478046
26, 0.1567527
27, 0.0864668
28, 0.1767227
29, 0.0686227
30, 0.1125060
31, 0.0475843
32, 0.1350975
33, 0.0422971
34, 0.0902449
35, 0.0319982
36, 0.1169229
37, 0.0268763
38, 0.0655826
39, 0.0278676
40, 0.0504261
41, 0.0064988
42, 0.0642828
43, 0.0197166
44, 0.0207961
45, 0.0101337
46, 0.0305002
47, 0.0027537
48, 0.0221840
49, 0.0003304
50, 0.0211486
51, 0.0001652
52, 0.0118079
53, 0.0027537
54, 0.0116097
55, 0.0004406
56, 0.0033375
57, 0.0000000
58, 0.0026876
59, 0.0000000
60, 0.0094618
61, 0.0000000
62, 0.0000000
63, 0.0000000
64, 0.0034477
65, 0.0000000
66, 0.0014099
67, 0.0000000
68, 0.0007931
69, 0.0000000
70, 0.0000000
71, 0.0000000
72, 0.0012337
73, 0.0000000
74, 0.0000000
75, 0.0000000
76, 0.0000000
77, 0.0000000
78, 0.0013548
79, 0.0000000
80, 0.0000000
81, 0.0000000
82, 0.0000000
83, 0.0000000
84, 0.0000881
85, 0.0000000
86, 0.0000000
87, 0.0000000
88, 0.0000000
89, 0.0000000
90, 0.0000000
91, 0.0000000
92, 0.0000000
93, 0.0000000
94, 0.0000000
95, 0.0000000
96, 0.0002203
97, 0.0000000
98, 0.0000000
99, 0.0000000
100, 0.0000000
101, 0.0000000
102, 0.0000000
103, 0.0000000
104, 0.0000000
105, 0.0000000
106, 0.0000000
107, 0.0000000
108, 0.0000000
109, 0.0000000
110, 0.0000000
111, 0.0000000
112, 0.0000000
113, 0.0000000
114, 0.0000000
115, 0.0000000
116, 0.0000000
117, 0.0000000
118, 0.0000000
119, 0.0000000
120, 0.0000110

}
		data [set=serialdictator] {x, y
1, 3.6544721
2, 2.5124496
3, 1.5226967
4, 2.5273857
5, 4.5680901
6, 5.9751526
7, 5.3171018
8, 6.9472817
9, 5.3369286
10, 7.0444549
11, 5.6466670
12, 6.0978033
13, 4.3392010
14, 5.7113463
15, 3.9414537
16, 4.1690431
17, 2.8418385
18, 3.7217069
19, 2.3173100
20, 2.9573295
21, 1.7685488
22, 2.1265764
23, 1.1708264
24, 1.6088771
25, 0.8298058
26, 1.0405535
27, 0.4796979
28, 0.8608347
29, 0.3942776
30, 0.4989850
31, 0.2254745
32, 0.4259674
33, 0.1764033
34, 0.2416995
35, 0.0916438
36, 0.2370181
37, 0.0555150
38, 0.1398449
39, 0.0568918
40, 0.0911812
41, 0.0064988
42, 0.1001583
43, 0.0209283
44, 0.0348511
45, 0.0110149
46, 0.0495229
47, 0.0035248
48, 0.0235829
49, 0.0000000
50, 0.0256977
51, 0.0003304
52, 0.0116097
53, 0.0029189
54, 0.0144405
55, 0.0000000
56, 0.0037340
57, 0.0000000
58, 0.0026876
59, 0.0000000
60, 0.0094728
61, 0.0000000
62, 0.0001322
63, 0.0000000
64, 0.0034477
65, 0.0000000
66, 0.0014099
67, 0.0000000
68, 0.0007931
69, 0.0000000
70, 0.0000000
71, 0.0000000
72, 0.0012337
73, 0.0000000
74, 0.0000000
75, 0.0000000
76, 0.0000000
77, 0.0000000
78, 0.0013548
79, 0.0000000
80, 0.0000000
81, 0.0000000
82, 0.0000000
83, 0.0000000
84, 0.0000881
85, 0.0000000
86, 0.0000000
87, 0.0000000
88, 0.0000000
89, 0.0000000
90, 0.0000000
91, 0.0000000
92, 0.0000000
93, 0.0000000
94, 0.0000000
95, 0.0000000
96, 0.0002203
97, 0.0000000
98, 0.0000000
99, 0.0000000
100, 0.0000000
101, 0.0000000
102, 0.0000000
103, 0.0000000
104, 0.0000000
105, 0.0000000
106, 0.0000000
107, 0.0000000
108, 0.0000000
109, 0.0000000
110, 0.0000000
111, 0.0000000
112, 0.0000000
113, 0.0000000
114, 0.0000000
115, 0.0000000
116, 0.0000000
117, 0.0000000
118, 0.0000000
119, 0.0000000
120, 0.0000110

};
		\begin{scope}[xshift=35,yshift=20,xscale=0.5,yscale=0.7]
\clip (-1,-1) -- (-1,3.2) -- (0,3.2) -- (0,3.1) -- (6,3.1) -- (6,-1) -- (-1,-1);
\datavisualization [scientific axes=clean,
x axis={max value=25},
y axis={max value=10},
visualize as line/.list={mckelvey, bordes, gillies, serialdictator},
		style sheet=strong colors
]
		data [set=mckelvey] {x, y
			1, 3.6544721
			2, 39.8530175
			3, 3.4260676
			4, 2.3621626
			5, 0.5392884
			6, 4.2550914
			7, 4.3504361
			8, 6.5946073
			9, 3.4030465
			10, 4.0027846
			11, 2.0853367
			12, 3.6763587
			13, 2.3160433
			14, 2.9811106
			15, 1.8014833
			16, 2.2595149
			17, 1.1513301
			18, 1.7352178
			19, 0.9706310
			20, 1.4752116
			21, 0.6816557
			22, 1.0506321
			23, 0.4975420
			24, 0.9130012
			25, 0.3374959
		}
		data [set=bordes] {x, y
			1, 3.6544721
			2, 39.8530175
			3, 3.4260676
			4, 2.3621626
			5, 0.5392884
			6, 4.3100776
			7, 4.4350304
			8, 6.8968336
			9, 3.5037225
			10, 3.8245859
			11, 1.9476507
			12, 3.6884971
			13, 2.3464443
			14, 2.9472839
			15, 1.8159128
			16, 2.2320879
			17, 1.1115113
			18, 1.7317591
			19, 0.9290499
			20, 1.4752116
			21, 0.6729540
			22, 1.0433292
			23, 0.4777152
			24, 0.8953884
			25, 0.3326493
		}
		data [set=gillies] {x, y
			1, 3.6544721
			2, 53.2652614
			3, 0.0000000
			4, 23.9052809
			5, 2.9502249
			6, 3.8656493
			7, 0.0356882
			8, 3.9740467
			9, 1.3059239
			10, 1.0422608
			11, 0.2564264
			12, 1.0332837
			13, 0.4332151
			14, 0.6584914
			15, 0.1570171
			16, 0.4232026
			17, 0.0495119
			18, 0.3778654
			19, 0.1543735
			20, 0.3368460
			21, 0.0731939
			22, 0.2510181
			23, 0.0509438
			24, 0.2128625
			25, 0.0478046
		}
		data [set=serialdictator] {x, y
			1, 3.6544721
			2, 2.5124496
			3, 1.5226967
			4, 2.5273857
			5, 4.5680901
			6, 5.9751526
			7, 5.3171018
			8, 6.9472817
			9, 5.3369286
			10, 7.0444549
			11, 5.6466670
			12, 6.0978033
			13, 4.3392010
			14, 5.7113463
			15, 3.9414537
			16, 4.1690431
			17, 2.8418385
			18, 3.7217069
			19, 2.3173100
			20, 2.9573295
			21, 1.7685488
			22, 2.1265764
			23, 1.1708264
			24, 1.6088771
			25, 0.8298058
		};
\draw[->] (0,0) -- (-0.7,-0.6);
	\end{scope}
	\end{tikzpicture}
	\caption{Size distributions of UCs for $n = 5$. The high peak is at size $2$ for all of them. In total, there are $9,078,630$ profiles (up to symmetries) and there are $5!=120$ different assignments.
}\label{fig:uc5}
    \Description{A plot showing data for the four assignment rules McKelvey uncovered set, Bordes uncovered set, Gillies uncovered set, and the Pareto rule. It depicts the frequency with which each cardinality occurs across all profiles for five agents and houses. It can be seen that a size of two occurs way more often for the uncovered set than for the Pareto rule. In general, the uncovered sets often return fewer outcomes than the Pareto rule. This is particularly true for the Gillies uncovered set.}
\end{figure}
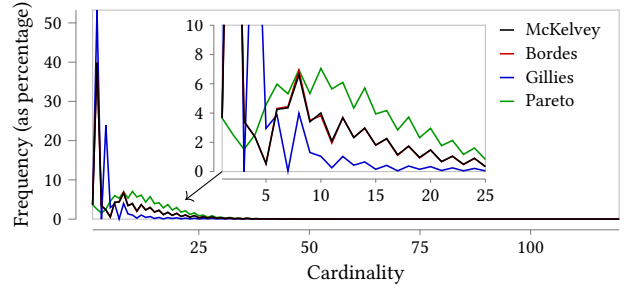
 
\begin{figure}
	\begin{tikzpicture}[xscale=1.4,yscale=0.9]\datavisualization [scientific axes=clean,
		x axis={label={Cardinality}},
		y axis={label={Frequency (as percentage)}},
legend=north east inside,
		visualize as line/.list={mckelvey, bordes, gillies, serialdictator},
		legend=north east inside,
		style sheet=strong colors,
mckelvey={label in legend={text=McKelvey,
            label in legend line coordinates={(-0.6em,0),(0,0)}}},
  bordes={label in legend={text=Bordes,
            label in legend line coordinates={(-0.6em,0),(0,0)}}},
  gillies={label in legend={text=Gillies,
            label in legend line coordinates={(-0.6em,0),(0,0)}}},
  serialdictator={label in legend={text=Pareto,
            label in legend line coordinates={(-0.6em,0),(0,0)}}}
        ]
		data [set=mckelvey] {x, y
1, 0.7000000
2, 11.5000000
3, 0.9000000
4, 1.3000000
5, 0.6000000
6, 1.4000000
7, 1.4000000
8, 2.0000000
9, 2.6000000
10, 2.1000000
11, 1.5000000
12, 1.7000000
13, 1.8000000
14, 2.1000000
15, 2.0000000
16, 1.4000000
17, 1.6000000
18, 1.9000000
19, 1.8000000
20, 1.2000000
21, 0.9000000
22, 0.8000000
23, 1.0000000
24, 1.2000000
25, 1.2000000
26, 1.4000000
27, 0.9000000
28, 1.7000000
29, 1.1000000
30, 0.8000000
31, 0.7000000
32, 1.1000000
33, 0.9000000
34, 1.0000000
35, 1.1000000
36, 0.7000000
37, 0.5000000
38, 0.7000000
39, 1.4000000
40, 0.5000000
41, 0.6000000
42, 1.2000000
43, 0.8000000
44, 0.5000000
45, 0.9000000
46, 0.5000000
47, 0.4000000
48, 1.0000000
49, 0.8000000
50, 0.5000000
51, 0.8000000
52, 0.5000000
53, 1.1000000
54, 0.2000000
55, 0.3000000
56, 0.6000000
57, 0.3000000
58, 0.5000000
59, 0.8000000
60, 0.6000000
61, 0.3000000
62, 0.4000000
63, 0.5000000
64, 0.5000000
65, 0.4000000
66, 0.4000000
67, 0.5000000
68, 0.4000000
69, 0.6000000
70, 0.0000000
71, 0.3000000
72, 0.6000000
73, 0.7000000
74, 0.7000000
75, 0.2000000
76, 0.4000000
77, 0.3000000
78, 0.4000000
79, 0.6000000
80, 0.8000000
81, 0.3000000
82, 0.2000000
83, 0.6000000
84, 0.2000000
85, 0.1000000
86, 0.4000000
87, 0.3000000
88, 0.1000000
89, 0.1000000
90, 0.2000000
91, 0.3000000
92, 0.2000000
93, 0.4000000
94, 0.4000000
95, 0.2000000
96, 0.3000000
97, 0.1000000
98, 0.2000000
99, 0.2000000
100, 0.4000000
101, 0.4000000
102, 0.2000000
103, 0.2000000
104, 0.1000000
105, 0.2000000
106, 0.1000000
107, 0.4000000
108, 0.2000000
109, 0.0000000
110, 0.2000000
111, 0.2000000
112, 0.3000000
113, 0.0000000
114, 0.4000000
115, 0.3000000
116, 0.2000000
117, 0.5000000
118, 0.2000000
119, 0.0000000
120, 0.0000000
121, 0.3000000
122, 0.1000000
123, 0.2000000
124, 0.1000000
125, 0.1000000
126, 0.1000000
127, 0.0000000
128, 0.1000000
129, 0.0000000
130, 0.2000000
131, 0.0000000
132, 0.1000000
133, 0.1000000
134, 0.1000000
135, 0.5000000
136, 0.1000000
137, 0.0000000
138, 0.4000000
139, 0.0000000
140, 0.1000000
141, 0.0000000
142, 0.1000000
143, 0.1000000
144, 0.1000000
145, 0.1000000
146, 0.0000000
147, 0.0000000
148, 0.0000000
149, 0.0000000
150, 0.0000000
151, 0.2000000
152, 0.0000000
153, 0.1000000
154, 0.3000000
155, 0.1000000
156, 0.0000000
157, 0.0000000
158, 0.0000000
159, 0.0000000
160, 0.0000000
161, 0.1000000
162, 0.1000000
163, 0.3000000
164, 0.0000000
165, 0.2000000
166, 0.0000000
167, 0.0000000
168, 0.0000000
169, 0.0000000
170, 0.0000000
171, 0.0000000
172, 0.3000000
173, 0.0000000
174, 0.0000000
175, 0.1000000
176, 0.0000000
177, 0.1000000
178, 0.0000000
179, 0.1000000
180, 0.0000000
181, 0.0000000
182, 0.2000000
183, 0.1000000
184, 0.1000000
185, 0.0000000
186, 0.0000000
187, 0.0000000
188, 0.1000000
189, 0.2000000
190, 0.0000000
191, 0.0000000
192, 0.0000000
193, 0.0000000
194, 0.0000000
195, 0.0000000
196, 0.1000000
197, 0.1000000
198, 0.0000000
199, 0.0000000
200, 0.0000000
201, 0.1000000
202, 0.0000000
203, 0.0000000
204, 0.0000000
205, 0.1000000
206, 0.0000000
207, 0.1000000
208, 0.0000000
209, 0.2000000
210, 0.1000000
211, 0.0000000
212, 0.1000000
213, 0.0000000
214, 0.0000000
215, 0.0000000
216, 0.1000000
217, 0.0000000
218, 0.0000000
219, 0.0000000
220, 0.0000000
221, 0.0000000
222, 0.1000000
223, 0.1000000
224, 0.0000000
225, 0.0000000
226, 0.0000000
227, 0.0000000
228, 0.0000000
229, 0.0000000
230, 0.0000000
231, 0.0000000
232, 0.1000000
233, 0.0000000
234, 0.1000000
235, 0.0000000
236, 0.0000000
237, 0.0000000
238, 0.0000000
239, 0.0000000
240, 0.0000000
241, 0.0000000
242, 0.0000000
243, 0.0000000
244, 0.1000000
245, 0.0000000
246, 0.0000000
247, 0.0000000
248, 0.0000000
249, 0.0000000
250, 0.0000000
251, 0.0000000
252, 0.0000000
253, 0.0000000
254, 0.0000000
255, 0.0000000
256, 0.0000000
257, 0.0000000
258, 0.0000000
259, 0.0000000
260, 0.1000000
261, 0.0000000
262, 0.0000000
263, 0.0000000
264, 0.0000000
265, 0.0000000
266, 0.1000000
267, 0.0000000
268, 0.0000000
269, 0.0000000
270, 0.0000000
271, 0.0000000
272, 0.0000000
273, 0.0000000
274, 0.0000000
275, 0.0000000
276, 0.0000000
277, 0.0000000
278, 0.0000000
279, 0.0000000
280, 0.0000000
281, 0.0000000
282, 0.0000000
283, 0.0000000
284, 0.0000000
285, 0.0000000
286, 0.0000000
287, 0.0000000
288, 0.0000000
289, 0.0000000
290, 0.0000000
291, 0.0000000
292, 0.0000000
293, 0.0000000
294, 0.0000000
295, 0.0000000
296, 0.0000000
297, 0.0000000
298, 0.0000000
299, 0.0000000
300, 0.0000000
301, 0.0000000
302, 0.0000000
303, 0.0000000
304, 0.0000000
305, 0.0000000
306, 0.0000000
307, 0.0000000
308, 0.1000000
309, 0.0000000
310, 0.0000000
311, 0.0000000
312, 0.0000000
313, 0.0000000
314, 0.0000000
315, 0.0000000
316, 0.0000000
317, 0.0000000
318, 0.0000000
319, 0.0000000
320, 0.0000000
321, 0.0000000
322, 0.0000000
323, 0.0000000
324, 0.0000000
325, 0.0000000
326, 0.0000000
327, 0.0000000
328, 0.0000000
329, 0.0000000
330, 0.0000000
331, 0.0000000
332, 0.0000000
333, 0.0000000
334, 0.0000000
335, 0.0000000
336, 0.0000000
337, 0.0000000
338, 0.0000000
339, 0.0000000
340, 0.0000000
341, 0.0000000
342, 0.0000000
343, 0.0000000
344, 0.0000000
345, 0.0000000
346, 0.0000000
347, 0.0000000
348, 0.0000000
349, 0.0000000
350, 0.0000000
351, 0.0000000
352, 0.0000000
353, 0.0000000
354, 0.0000000
355, 0.0000000
356, 0.0000000
357, 0.0000000
358, 0.0000000
359, 0.0000000
360, 0.0000000
361, 0.0000000
362, 0.0000000
363, 0.0000000
364, 0.0000000
365, 0.0000000
366, 0.0000000
367, 0.0000000
368, 0.0000000
369, 0.0000000
370, 0.0000000
371, 0.0000000
372, 0.0000000
373, 0.0000000
374, 0.0000000
375, 0.0000000
376, 0.0000000
377, 0.0000000
378, 0.0000000
379, 0.0000000
380, 0.0000000
381, 0.0000000
382, 0.0000000
383, 0.0000000
384, 0.0000000
385, 0.0000000
386, 0.0000000
387, 0.0000000
388, 0.0000000
389, 0.0000000
390, 0.0000000
391, 0.0000000
392, 0.0000000
393, 0.0000000
394, 0.0000000
395, 0.1000000
396, 0.0000000
397, 0.0000000
398, 0.0000000
399, 0.0000000
400, 0.0000000
401, 0.0000000
402, 0.0000000
403, 0.0000000
404, 0.0000000
405, 0.0000000
406, 0.0000000
407, 0.0000000
408, 0.0000000
409, 0.0000000
410, 0.0000000
411, 0.0000000
412, 0.0000000
413, 0.0000000
414, 0.0000000
415, 0.0000000
416, 0.0000000
417, 0.0000000
418, 0.0000000
419, 0.0000000
420, 0.0000000
421, 0.0000000
422, 0.0000000
423, 0.0000000
424, 0.0000000
425, 0.0000000
426, 0.0000000
427, 0.0000000
428, 0.0000000
429, 0.0000000
430, 0.0000000
431, 0.0000000
432, 0.0000000
433, 0.0000000
434, 0.0000000
435, 0.0000000
436, 0.0000000
437, 0.0000000
438, 0.0000000
439, 0.0000000
440, 0.0000000
441, 0.0000000
442, 0.0000000
443, 0.0000000
444, 0.0000000
445, 0.0000000
446, 0.0000000
447, 0.0000000
448, 0.0000000
449, 0.0000000
450, 0.0000000
451, 0.0000000
452, 0.0000000
453, 0.0000000
454, 0.0000000
455, 0.0000000
456, 0.0000000
457, 0.0000000
458, 0.0000000
459, 0.0000000
460, 0.0000000
461, 0.0000000
462, 0.0000000
463, 0.0000000
464, 0.0000000
465, 0.0000000
466, 0.0000000
467, 0.0000000
468, 0.0000000
469, 0.0000000
470, 0.0000000
471, 0.0000000
472, 0.0000000
473, 0.0000000
474, 0.0000000
475, 0.0000000
476, 0.1000000
477, 0.0000000
478, 0.0000000

}
		data [set=bordes] {x, y
1, 0.7000000
2, 11.5000000
3, 0.9000000
4, 1.3000000
5, 0.6000000
6, 1.4000000
7, 1.4000000
8, 2.0000000
9, 2.6000000
10, 2.1000000
11, 1.5000000
12, 1.7000000
13, 1.8000000
14, 2.2000000
15, 2.0000000
16, 1.3000000
17, 1.6000000
18, 1.9000000
19, 1.9000000
20, 1.1000000
21, 0.9000000
22, 0.9000000
23, 0.9000000
24, 1.3000000
25, 1.1000000
26, 1.4000000
27, 0.9000000
28, 1.7000000
29, 1.1000000
30, 0.8000000
31, 0.7000000
32, 1.1000000
33, 1.0000000
34, 0.9000000
35, 1.1000000
36, 0.7000000
37, 0.5000000
38, 0.7000000
39, 1.4000000
40, 0.6000000
41, 0.5000000
42, 1.2000000
43, 0.8000000
44, 0.5000000
45, 0.9000000
46, 0.5000000
47, 0.4000000
48, 1.0000000
49, 0.8000000
50, 0.5000000
51, 0.8000000
52, 0.5000000
53, 1.1000000
54, 0.2000000
55, 0.3000000
56, 0.6000000
57, 0.3000000
58, 0.5000000
59, 0.8000000
60, 0.6000000
61, 0.3000000
62, 0.4000000
63, 0.5000000
64, 0.5000000
65, 0.4000000
66, 0.5000000
67, 0.4000000
68, 0.4000000
69, 0.6000000
70, 0.0000000
71, 0.3000000
72, 0.6000000
73, 0.7000000
74, 0.7000000
75, 0.2000000
76, 0.4000000
77, 0.3000000
78, 0.4000000
79, 0.6000000
80, 0.8000000
81, 0.3000000
82, 0.2000000
83, 0.6000000
84, 0.2000000
85, 0.1000000
86, 0.4000000
87, 0.3000000
88, 0.1000000
89, 0.1000000
90, 0.2000000
91, 0.3000000
92, 0.2000000
93, 0.4000000
94, 0.4000000
95, 0.2000000
96, 0.3000000
97, 0.1000000
98, 0.2000000
99, 0.3000000
100, 0.3000000
101, 0.4000000
102, 0.2000000
103, 0.2000000
104, 0.1000000
105, 0.2000000
106, 0.1000000
107, 0.4000000
108, 0.2000000
109, 0.0000000
110, 0.2000000
111, 0.2000000
112, 0.3000000
113, 0.0000000
114, 0.4000000
115, 0.3000000
116, 0.2000000
117, 0.5000000
118, 0.2000000
119, 0.0000000
120, 0.0000000
121, 0.3000000
122, 0.1000000
123, 0.2000000
124, 0.1000000
125, 0.1000000
126, 0.1000000
127, 0.0000000
128, 0.1000000
129, 0.0000000
130, 0.2000000
131, 0.0000000
132, 0.1000000
133, 0.1000000
134, 0.1000000
135, 0.5000000
136, 0.1000000
137, 0.0000000
138, 0.4000000
139, 0.0000000
140, 0.1000000
141, 0.0000000
142, 0.1000000
143, 0.1000000
144, 0.1000000
145, 0.1000000
146, 0.0000000
147, 0.0000000
148, 0.0000000
149, 0.0000000
150, 0.0000000
151, 0.2000000
152, 0.0000000
153, 0.1000000
154, 0.3000000
155, 0.1000000
156, 0.0000000
157, 0.0000000
158, 0.0000000
159, 0.0000000
160, 0.0000000
161, 0.1000000
162, 0.1000000
163, 0.3000000
164, 0.0000000
165, 0.2000000
166, 0.0000000
167, 0.0000000
168, 0.0000000
169, 0.0000000
170, 0.0000000
171, 0.0000000
172, 0.3000000
173, 0.0000000
174, 0.0000000
175, 0.1000000
176, 0.0000000
177, 0.1000000
178, 0.0000000
179, 0.1000000
180, 0.0000000
181, 0.0000000
182, 0.2000000
183, 0.1000000
184, 0.1000000
185, 0.0000000
186, 0.0000000
187, 0.0000000
188, 0.1000000
189, 0.2000000
190, 0.0000000
191, 0.0000000
192, 0.0000000
193, 0.0000000
194, 0.0000000
195, 0.0000000
196, 0.1000000
197, 0.1000000
198, 0.0000000
199, 0.0000000
200, 0.0000000
201, 0.1000000
202, 0.0000000
203, 0.0000000
204, 0.0000000
205, 0.1000000
206, 0.0000000
207, 0.1000000
208, 0.0000000
209, 0.2000000
210, 0.1000000
211, 0.0000000
212, 0.1000000
213, 0.0000000
214, 0.0000000
215, 0.0000000
216, 0.1000000
217, 0.0000000
218, 0.0000000
219, 0.0000000
220, 0.0000000
221, 0.0000000
222, 0.1000000
223, 0.1000000
224, 0.0000000
225, 0.0000000
226, 0.0000000
227, 0.0000000
228, 0.0000000
229, 0.0000000
230, 0.0000000
231, 0.0000000
232, 0.1000000
233, 0.0000000
234, 0.1000000
235, 0.0000000
236, 0.0000000
237, 0.0000000
238, 0.0000000
239, 0.0000000
240, 0.0000000
241, 0.0000000
242, 0.0000000
243, 0.0000000
244, 0.1000000
245, 0.0000000
246, 0.0000000
247, 0.0000000
248, 0.0000000
249, 0.0000000
250, 0.0000000
251, 0.0000000
252, 0.0000000
253, 0.0000000
254, 0.0000000
255, 0.0000000
256, 0.0000000
257, 0.0000000
258, 0.0000000
259, 0.0000000
260, 0.1000000
261, 0.0000000
262, 0.0000000
263, 0.0000000
264, 0.0000000
265, 0.0000000
266, 0.1000000
267, 0.0000000
268, 0.0000000
269, 0.0000000
270, 0.0000000
271, 0.0000000
272, 0.0000000
273, 0.0000000
274, 0.0000000
275, 0.0000000
276, 0.0000000
277, 0.0000000
278, 0.0000000
279, 0.0000000
280, 0.0000000
281, 0.0000000
282, 0.0000000
283, 0.0000000
284, 0.0000000
285, 0.0000000
286, 0.0000000
287, 0.0000000
288, 0.0000000
289, 0.0000000
290, 0.0000000
291, 0.0000000
292, 0.0000000
293, 0.0000000
294, 0.0000000
295, 0.0000000
296, 0.0000000
297, 0.0000000
298, 0.0000000
299, 0.0000000
300, 0.0000000
301, 0.0000000
302, 0.0000000
303, 0.0000000
304, 0.0000000
305, 0.0000000
306, 0.0000000
307, 0.0000000
308, 0.1000000
309, 0.0000000
310, 0.0000000
311, 0.0000000
312, 0.0000000
313, 0.0000000
314, 0.0000000
315, 0.0000000
316, 0.0000000
317, 0.0000000
318, 0.0000000
319, 0.0000000
320, 0.0000000
321, 0.0000000
322, 0.0000000
323, 0.0000000
324, 0.0000000
325, 0.0000000
326, 0.0000000
327, 0.0000000
328, 0.0000000
329, 0.0000000
330, 0.0000000
331, 0.0000000
332, 0.0000000
333, 0.0000000
334, 0.0000000
335, 0.0000000
336, 0.0000000
337, 0.0000000
338, 0.0000000
339, 0.0000000
340, 0.0000000
341, 0.0000000
342, 0.0000000
343, 0.0000000
344, 0.0000000
345, 0.0000000
346, 0.0000000
347, 0.0000000
348, 0.0000000
349, 0.0000000
350, 0.0000000
351, 0.0000000
352, 0.0000000
353, 0.0000000
354, 0.0000000
355, 0.0000000
356, 0.0000000
357, 0.0000000
358, 0.0000000
359, 0.0000000
360, 0.0000000
361, 0.0000000
362, 0.0000000
363, 0.0000000
364, 0.0000000
365, 0.0000000
366, 0.0000000
367, 0.0000000
368, 0.0000000
369, 0.0000000
370, 0.0000000
371, 0.0000000
372, 0.0000000
373, 0.0000000
374, 0.0000000
375, 0.0000000
376, 0.0000000
377, 0.0000000
378, 0.0000000
379, 0.0000000
380, 0.0000000
381, 0.0000000
382, 0.0000000
383, 0.0000000
384, 0.0000000
385, 0.0000000
386, 0.0000000
387, 0.0000000
388, 0.0000000
389, 0.0000000
390, 0.0000000
391, 0.0000000
392, 0.0000000
393, 0.0000000
394, 0.0000000
395, 0.1000000
396, 0.0000000
397, 0.0000000
398, 0.0000000
399, 0.0000000
400, 0.0000000
401, 0.0000000
402, 0.0000000
403, 0.0000000
404, 0.0000000
405, 0.0000000
406, 0.0000000
407, 0.0000000
408, 0.0000000
409, 0.0000000
410, 0.0000000
411, 0.0000000
412, 0.0000000
413, 0.0000000
414, 0.0000000
415, 0.0000000
416, 0.0000000
417, 0.0000000
418, 0.0000000
419, 0.0000000
420, 0.0000000
421, 0.0000000
422, 0.0000000
423, 0.0000000
424, 0.0000000
425, 0.0000000
426, 0.0000000
427, 0.0000000
428, 0.0000000
429, 0.0000000
430, 0.0000000
431, 0.0000000
432, 0.0000000
433, 0.0000000
434, 0.0000000
435, 0.0000000
436, 0.0000000
437, 0.0000000
438, 0.0000000
439, 0.0000000
440, 0.0000000
441, 0.0000000
442, 0.0000000
443, 0.0000000
444, 0.0000000
445, 0.0000000
446, 0.0000000
447, 0.0000000
448, 0.0000000
449, 0.0000000
450, 0.0000000
451, 0.0000000
452, 0.0000000
453, 0.0000000
454, 0.0000000
455, 0.0000000
456, 0.0000000
457, 0.0000000
458, 0.0000000
459, 0.0000000
460, 0.0000000
461, 0.0000000
462, 0.0000000
463, 0.0000000
464, 0.0000000
465, 0.0000000
466, 0.0000000
467, 0.0000000
468, 0.0000000
469, 0.0000000
470, 0.0000000
471, 0.0000000
472, 0.0000000
473, 0.0000000
474, 0.0000000
475, 0.0000000
476, 0.1000000
477, 0.0000000
478, 0.0000000

}
		data [set=gillies] {x, y
1, 0.7000000
2, 24.5000000
3, 0.0000000
4, 28.0000000
5, 6.5000000
6, 5.5000000
7, 0.5000000
8, 4.5000000
9, 1.3000000
10, 3.9000000
11, 0.9000000
12, 2.7000000
13, 0.9000000
14, 1.3000000
15, 0.5000000
16, 0.5000000
17, 0.3000000
18, 0.4000000
19, 0.0000000
20, 0.9000000
21, 0.3000000
22, 0.3000000
23, 0.3000000
24, 0.5000000
25, 0.2000000
26, 0.2000000
27, 0.1000000
28, 0.7000000
29, 0.2000000
30, 0.2000000
31, 0.3000000
32, 0.7000000
33, 0.2000000
34, 0.3000000
35, 0.3000000
36, 0.1000000
37, 0.4000000
38, 0.5000000
39, 0.3000000
40, 0.1000000
41, 0.3000000
42, 0.1000000
43, 0.1000000
44, 0.0000000
45, 0.2000000
46, 0.2000000
47, 0.2000000
48, 0.1000000
49, 0.0000000
50, 0.2000000
51, 0.0000000
52, 0.3000000
53, 0.3000000
54, 0.0000000
55, 0.0000000
56, 0.0000000
57, 0.2000000
58, 0.3000000
59, 0.2000000
60, 0.0000000
61, 0.1000000
62, 0.2000000
63, 0.1000000
64, 0.1000000
65, 0.0000000
66, 0.2000000
67, 0.3000000
68, 0.0000000
69, 0.1000000
70, 0.0000000
71, 0.2000000
72, 0.0000000
73, 0.2000000
74, 0.3000000
75, 0.2000000
76, 0.0000000
77, 0.0000000
78, 0.0000000
79, 0.0000000
80, 0.0000000
81, 0.1000000
82, 0.0000000
83, 0.1000000
84, 0.1000000
85, 0.1000000
86, 0.1000000
87, 0.0000000
88, 0.0000000
89, 0.1000000
90, 0.0000000
91, 0.0000000
92, 0.1000000
93, 0.1000000
94, 0.0000000
95, 0.1000000
96, 0.3000000
97, 0.0000000
98, 0.0000000
99, 0.0000000
100, 0.1000000
101, 0.1000000
102, 0.1000000
103, 0.0000000
104, 0.0000000
105, 0.0000000
106, 0.1000000
107, 0.1000000
108, 0.0000000
109, 0.2000000
110, 0.1000000
111, 0.0000000
112, 0.1000000
113, 0.0000000
114, 0.0000000
115, 0.0000000
116, 0.1000000
117, 0.0000000
118, 0.0000000
119, 0.0000000
120, 0.0000000
121, 0.2000000
122, 0.1000000
123, 0.0000000
124, 0.1000000
125, 0.0000000
126, 0.1000000
127, 0.1000000
128, 0.0000000
129, 0.0000000
130, 0.0000000
131, 0.0000000
132, 0.0000000
133, 0.1000000
134, 0.0000000
135, 0.0000000
136, 0.0000000
137, 0.0000000
138, 0.0000000
139, 0.1000000
140, 0.0000000
141, 0.0000000
142, 0.1000000
143, 0.1000000
144, 0.1000000
145, 0.0000000
146, 0.0000000
147, 0.0000000
148, 0.0000000
149, 0.0000000
150, 0.0000000
151, 0.0000000
152, 0.0000000
153, 0.0000000
154, 0.0000000
155, 0.1000000
156, 0.0000000
157, 0.0000000
158, 0.0000000
159, 0.0000000
160, 0.0000000
161, 0.0000000
162, 0.1000000
163, 0.0000000
164, 0.1000000
165, 0.0000000
166, 0.0000000
167, 0.0000000
168, 0.3000000
169, 0.0000000
170, 0.0000000
171, 0.0000000
172, 0.0000000
173, 0.0000000
174, 0.0000000
175, 0.1000000
176, 0.1000000
177, 0.0000000
178, 0.0000000
179, 0.1000000
180, 0.0000000
181, 0.2000000
182, 0.0000000
183, 0.0000000
184, 0.0000000
185, 0.0000000
186, 0.0000000
187, 0.1000000
188, 0.0000000
189, 0.0000000
190, 0.0000000
191, 0.0000000
192, 0.0000000
193, 0.0000000
194, 0.0000000
195, 0.0000000
196, 0.0000000
197, 0.0000000
198, 0.0000000
199, 0.0000000
200, 0.0000000
201, 0.0000000
202, 0.0000000
203, 0.0000000
204, 0.0000000
205, 0.0000000
206, 0.0000000
207, 0.0000000
208, 0.0000000
209, 0.0000000
210, 0.0000000
211, 0.0000000
212, 0.1000000
213, 0.0000000
214, 0.0000000
215, 0.0000000
216, 0.0000000
217, 0.0000000
218, 0.1000000
219, 0.1000000
220, 0.0000000
221, 0.0000000
222, 0.0000000
223, 0.0000000
224, 0.0000000
225, 0.0000000
226, 0.1000000
227, 0.0000000
228, 0.0000000
229, 0.0000000
230, 0.0000000
231, 0.0000000
232, 0.0000000
233, 0.0000000
234, 0.0000000
235, 0.0000000
236, 0.0000000
237, 0.0000000
238, 0.0000000
239, 0.0000000
240, 0.0000000
241, 0.0000000
242, 0.0000000
243, 0.0000000
244, 0.0000000
245, 0.0000000
246, 0.0000000
247, 0.0000000
248, 0.0000000
249, 0.0000000
250, 0.0000000
251, 0.0000000
252, 0.0000000
253, 0.0000000
254, 0.0000000
255, 0.0000000
256, 0.0000000
257, 0.0000000
258, 0.0000000
259, 0.0000000
260, 0.1000000
261, 0.0000000
262, 0.0000000
263, 0.0000000
264, 0.0000000
265, 0.0000000
266, 0.0000000
267, 0.0000000
268, 0.0000000
269, 0.0000000
270, 0.0000000
271, 0.0000000
272, 0.0000000
273, 0.0000000
274, 0.0000000
275, 0.1000000
276, 0.0000000
277, 0.0000000
278, 0.0000000
279, 0.0000000
280, 0.0000000
281, 0.0000000
282, 0.0000000
283, 0.0000000
284, 0.0000000
285, 0.0000000
286, 0.0000000
287, 0.0000000
288, 0.0000000
289, 0.0000000
290, 0.0000000
291, 0.0000000
292, 0.0000000
293, 0.0000000
294, 0.0000000
295, 0.0000000
296, 0.0000000
297, 0.0000000
298, 0.0000000
299, 0.0000000
300, 0.0000000
301, 0.0000000
302, 0.0000000
303, 0.0000000
304, 0.0000000
305, 0.0000000
306, 0.0000000
307, 0.0000000
308, 0.0000000
309, 0.0000000
310, 0.0000000
311, 0.0000000
312, 0.0000000
313, 0.0000000
314, 0.0000000
315, 0.0000000
316, 0.0000000
317, 0.0000000
318, 0.0000000
319, 0.0000000
320, 0.0000000
321, 0.0000000
322, 0.0000000
323, 0.0000000
324, 0.0000000
325, 0.0000000
326, 0.0000000
327, 0.0000000
328, 0.0000000
329, 0.0000000
330, 0.0000000
331, 0.0000000
332, 0.0000000
333, 0.0000000
334, 0.0000000
335, 0.0000000
336, 0.0000000
337, 0.0000000
338, 0.0000000
339, 0.0000000
340, 0.0000000
341, 0.0000000
342, 0.0000000
343, 0.0000000
344, 0.0000000
345, 0.0000000
346, 0.0000000
347, 0.0000000
348, 0.0000000
349, 0.0000000
350, 0.0000000
351, 0.0000000
352, 0.0000000
353, 0.0000000
354, 0.0000000
355, 0.0000000
356, 0.0000000
357, 0.0000000
358, 0.0000000
359, 0.0000000
360, 0.0000000
361, 0.0000000
362, 0.0000000
363, 0.0000000
364, 0.0000000
365, 0.0000000
366, 0.0000000
367, 0.0000000
368, 0.0000000
369, 0.0000000
370, 0.0000000
371, 0.0000000
372, 0.0000000
373, 0.0000000
374, 0.0000000
375, 0.0000000
376, 0.0000000
377, 0.0000000
378, 0.0000000
379, 0.0000000
380, 0.0000000
381, 0.0000000
382, 0.0000000
383, 0.0000000
384, 0.0000000
385, 0.0000000
386, 0.0000000
387, 0.0000000
388, 0.0000000
389, 0.0000000
390, 0.0000000
391, 0.0000000
392, 0.0000000
393, 0.0000000
394, 0.1000000
395, 0.0000000
396, 0.1000000
397, 0.0000000
398, 0.0000000
399, 0.0000000
400, 0.0000000
401, 0.0000000
402, 0.0000000
403, 0.0000000
404, 0.0000000
405, 0.0000000
406, 0.0000000
407, 0.0000000
408, 0.0000000
409, 0.0000000
410, 0.0000000
411, 0.0000000
412, 0.0000000
413, 0.0000000
414, 0.0000000
415, 0.0000000
416, 0.0000000
417, 0.0000000
418, 0.0000000
419, 0.0000000
420, 0.0000000
421, 0.0000000
422, 0.0000000
423, 0.0000000
424, 0.0000000
425, 0.0000000
426, 0.0000000
427, 0.0000000
428, 0.0000000
429, 0.0000000
430, 0.0000000
431, 0.0000000
432, 0.0000000
433, 0.0000000
434, 0.0000000
435, 0.0000000
436, 0.0000000
437, 0.0000000
438, 0.0000000
439, 0.0000000
440, 0.0000000
441, 0.0000000
442, 0.0000000
443, 0.0000000
444, 0.0000000
445, 0.0000000
446, 0.0000000
447, 0.0000000
448, 0.0000000
449, 0.0000000
450, 0.0000000
451, 0.0000000
452, 0.0000000
453, 0.0000000
454, 0.0000000
455, 0.0000000
456, 0.0000000
457, 0.0000000
458, 0.0000000
459, 0.0000000
460, 0.0000000
461, 0.0000000
462, 0.0000000
463, 0.0000000
464, 0.0000000
465, 0.0000000
466, 0.0000000
467, 0.0000000
468, 0.0000000
469, 0.0000000
470, 0.0000000
471, 0.0000000
472, 0.0000000
473, 0.0000000
474, 0.0000000
475, 0.0000000
476, 0.0000000
477, 0.0000000
478, 0.0000000

}
		data [set=po] {x, y
1, 0.7000000
2, 0.4000000
3, 0.1000000
4, 0.0000000
5, 0.0000000
6, 0.1000000
7, 0.6000000
8, 0.4000000
9, 0.3000000
10, 0.6000000
11, 0.4000000
12, 0.5000000
13, 0.7000000
14, 0.7000000
15, 0.5000000
16, 0.6000000
17, 0.8000000
18, 0.9000000
19, 0.9000000
20, 0.7000000
21, 0.9000000
22, 0.5000000
23, 1.2000000
24, 0.9000000
25, 1.0000000
26, 1.5000000
27, 0.5000000
28, 1.0000000
29, 0.6000000
30, 1.5000000
31, 0.8000000
32, 0.5000000
33, 1.0000000
34, 1.5000000
35, 1.2000000
36, 0.9000000
37, 0.7000000
38, 1.5000000
39, 1.1000000
40, 1.7000000
41, 0.8000000
42, 0.8000000
43, 0.8000000
44, 0.7000000
45, 1.1000000
46, 1.4000000
47, 1.2000000
48, 0.7000000
49, 0.9000000
50, 1.2000000
51, 1.2000000
52, 1.5000000
53, 1.3000000
54, 1.4000000
55, 0.7000000
56, 0.8000000
57, 1.3000000
58, 1.4000000
59, 0.5000000
60, 0.8000000
61, 1.0000000
62, 1.0000000
63, 1.3000000
64, 1.0000000
65, 0.8000000
66, 1.1000000
67, 1.2000000
68, 0.8000000
69, 0.9000000
70, 0.4000000
71, 1.0000000
72, 1.2000000
73, 0.7000000
74, 0.9000000
75, 1.3000000
76, 0.9000000
77, 0.8000000
78, 0.4000000
79, 0.8000000
80, 0.4000000
81, 0.5000000
82, 0.9000000
83, 0.4000000
84, 0.7000000
85, 0.7000000
86, 0.4000000
87, 0.6000000
88, 0.8000000
89, 0.7000000
90, 0.9000000
91, 0.9000000
92, 0.0000000
93, 0.3000000
94, 0.3000000
95, 1.0000000
96, 1.1000000
97, 0.5000000
98, 0.5000000
99, 0.5000000
100, 0.4000000
101, 0.8000000
102, 0.5000000
103, 0.4000000
104, 0.4000000
105, 0.9000000
106, 0.2000000
107, 0.3000000
108, 0.5000000
109, 0.7000000
110, 0.3000000
111, 0.1000000
112, 0.2000000
113, 0.3000000
114, 0.3000000
115, 0.2000000
116, 0.5000000
117, 0.1000000
118, 0.3000000
119, 0.2000000
120, 0.8000000
121, 0.3000000
122, 0.1000000
123, 0.1000000
124, 0.3000000
125, 0.3000000
126, 0.0000000
127, 0.2000000
128, 0.3000000
129, 0.2000000
130, 0.0000000
131, 0.2000000
132, 0.0000000
133, 0.0000000
134, 0.1000000
135, 0.1000000
136, 0.4000000
137, 0.4000000
138, 0.2000000
139, 0.3000000
140, 0.1000000
141, 0.2000000
142, 0.2000000
143, 0.4000000
144, 0.2000000
145, 0.2000000
146, 0.1000000
147, 0.2000000
148, 0.0000000
149, 0.0000000
150, 0.0000000
151, 0.2000000
152, 0.0000000
153, 0.1000000
154, 0.2000000
155, 0.1000000
156, 0.2000000
157, 0.1000000
158, 0.0000000
159, 0.1000000
160, 0.1000000
161, 0.1000000
162, 0.1000000
163, 0.3000000
164, 0.0000000
165, 0.0000000
166, 0.1000000
167, 0.1000000
168, 0.1000000
169, 0.3000000
170, 0.0000000
171, 0.0000000
172, 0.1000000
173, 0.0000000
174, 0.1000000
175, 0.1000000
176, 0.1000000
177, 0.0000000
178, 0.0000000
179, 0.1000000
180, 0.0000000
181, 0.1000000
182, 0.0000000
183, 0.1000000
184, 0.1000000
185, 0.0000000
186, 0.0000000
187, 0.0000000
188, 0.2000000
189, 0.1000000
190, 0.1000000
191, 0.0000000
192, 0.0000000
193, 0.1000000
194, 0.3000000
195, 0.0000000
196, 0.1000000
197, 0.0000000
198, 0.1000000
199, 0.0000000
200, 0.0000000
201, 0.2000000
202, 0.0000000
203, 0.0000000
204, 0.0000000
205, 0.0000000
206, 0.0000000
207, 0.1000000
208, 0.1000000
209, 0.2000000
210, 0.1000000
211, 0.0000000
212, 0.1000000
213, 0.0000000
214, 0.1000000
215, 0.0000000
216, 0.0000000
217, 0.0000000
218, 0.0000000
219, 0.0000000
220, 0.0000000
221, 0.1000000
222, 0.0000000
223, 0.1000000
224, 0.0000000
225, 0.0000000
226, 0.0000000
227, 0.0000000
228, 0.1000000
229, 0.0000000
230, 0.0000000
231, 0.0000000
232, 0.0000000
233, 0.0000000
234, 0.0000000
235, 0.0000000
236, 0.1000000
237, 0.0000000
238, 0.0000000
239, 0.0000000
240, 0.0000000
241, 0.1000000
242, 0.0000000
243, 0.0000000
244, 0.0000000
245, 0.0000000
246, 0.0000000
247, 0.1000000
248, 0.0000000
249, 0.0000000
250, 0.0000000
251, 0.0000000
252, 0.0000000
253, 0.0000000
254, 0.0000000
255, 0.0000000
256, 0.0000000
257, 0.0000000
258, 0.0000000
259, 0.0000000
260, 0.1000000
261, 0.0000000
262, 0.0000000
263, 0.0000000
264, 0.0000000
265, 0.0000000
266, 0.1000000
267, 0.0000000
268, 0.0000000
269, 0.0000000
270, 0.0000000
271, 0.0000000
272, 0.0000000
273, 0.0000000
274, 0.0000000
275, 0.0000000
276, 0.0000000
277, 0.0000000
278, 0.0000000
279, 0.0000000
280, 0.0000000
281, 0.0000000
282, 0.0000000
283, 0.0000000
284, 0.0000000
285, 0.0000000
286, 0.0000000
287, 0.0000000
288, 0.0000000
289, 0.0000000
290, 0.0000000
291, 0.0000000
292, 0.0000000
293, 0.0000000
294, 0.0000000
295, 0.0000000
296, 0.0000000
297, 0.0000000
298, 0.0000000
299, 0.0000000
300, 0.0000000
301, 0.0000000
302, 0.0000000
303, 0.0000000
304, 0.0000000
305, 0.0000000
306, 0.0000000
307, 0.0000000
308, 0.0000000
309, 0.0000000
310, 0.0000000
311, 0.1000000
312, 0.0000000
313, 0.0000000
314, 0.0000000
315, 0.0000000
316, 0.0000000
317, 0.0000000
318, 0.0000000
319, 0.0000000
320, 0.0000000
321, 0.0000000
322, 0.0000000
323, 0.0000000
324, 0.0000000
325, 0.0000000
326, 0.0000000
327, 0.0000000
328, 0.0000000
329, 0.0000000
330, 0.0000000
331, 0.0000000
332, 0.0000000
333, 0.0000000
334, 0.0000000
335, 0.0000000
336, 0.0000000
337, 0.0000000
338, 0.0000000
339, 0.0000000
340, 0.0000000
341, 0.0000000
342, 0.0000000
343, 0.0000000
344, 0.0000000
345, 0.0000000
346, 0.0000000
347, 0.0000000
348, 0.0000000
349, 0.0000000
350, 0.0000000
351, 0.0000000
352, 0.0000000
353, 0.0000000
354, 0.0000000
355, 0.0000000
356, 0.0000000
357, 0.0000000
358, 0.0000000
359, 0.0000000
360, 0.0000000
361, 0.0000000
362, 0.0000000
363, 0.0000000
364, 0.0000000
365, 0.0000000
366, 0.0000000
367, 0.0000000
368, 0.0000000
369, 0.0000000
370, 0.0000000
371, 0.0000000
372, 0.0000000
373, 0.0000000
374, 0.0000000
375, 0.0000000
376, 0.0000000
377, 0.0000000
378, 0.0000000
379, 0.0000000
380, 0.0000000
381, 0.0000000
382, 0.0000000
383, 0.0000000
384, 0.0000000
385, 0.0000000
386, 0.0000000
387, 0.0000000
388, 0.0000000
389, 0.0000000
390, 0.0000000
391, 0.0000000
392, 0.0000000
393, 0.0000000
394, 0.0000000
395, 0.1000000
396, 0.0000000
397, 0.0000000
398, 0.0000000
399, 0.0000000
400, 0.0000000
401, 0.0000000
402, 0.0000000
403, 0.0000000
404, 0.0000000
405, 0.0000000
406, 0.0000000
407, 0.0000000
408, 0.0000000
409, 0.0000000
410, 0.0000000
411, 0.0000000
412, 0.0000000
413, 0.0000000
414, 0.0000000
415, 0.0000000
416, 0.0000000
417, 0.0000000
418, 0.0000000
419, 0.0000000
420, 0.0000000
421, 0.0000000
422, 0.0000000
423, 0.0000000
424, 0.0000000
425, 0.0000000
426, 0.0000000
427, 0.0000000
428, 0.0000000
429, 0.0000000
430, 0.0000000
431, 0.0000000
432, 0.0000000
433, 0.0000000
434, 0.0000000
435, 0.0000000
436, 0.0000000
437, 0.0000000
438, 0.0000000
439, 0.0000000
440, 0.0000000
441, 0.0000000
442, 0.0000000
443, 0.0000000
444, 0.0000000
445, 0.0000000
446, 0.0000000
447, 0.0000000
448, 0.0000000
449, 0.0000000
450, 0.0000000
451, 0.0000000
452, 0.0000000
453, 0.0000000
454, 0.0000000
455, 0.0000000
456, 0.0000000
457, 0.0000000
458, 0.0000000
459, 0.0000000
460, 0.0000000
461, 0.0000000
462, 0.0000000
463, 0.0000000
464, 0.0000000
465, 0.0000000
466, 0.0000000
467, 0.0000000
468, 0.0000000
469, 0.0000000
470, 0.0000000
471, 0.0000000
472, 0.0000000
473, 0.0000000
474, 0.0000000
475, 0.0000000
476, 0.0000000
477, 0.0000000
478, 0.1000000

};
		\begin{scope}[xshift=35,yshift=20,xscale=0.5,yscale=0.7]
\clip (-1,-1) -- (-1,3.2) -- (-0.1,3.2) -- (-0.1,3.1) -- (6,3.1) -- (6,-1) -- (-1,-1);
\datavisualization [scientific axes=clean,
x axis={max value=100},
y axis={max value=5},
visualize as line/.list={mckelvey, bordes, gillies, po},
		style sheet=strong colors
]
		data [set=mckelvey] {x, y
			1, 0.7000000
			2, 11.5000000
			3, 0.9000000
			4, 1.3000000
			5, 0.6000000
			6, 1.4000000
			7, 1.4000000
			8, 2.0000000
			9, 2.6000000
			10, 2.1000000
			11, 1.5000000
			12, 1.7000000
			13, 1.8000000
			14, 2.1000000
			15, 2.0000000
			16, 1.4000000
			17, 1.6000000
			18, 1.9000000
			19, 1.8000000
			20, 1.2000000
			21, 0.9000000
			22, 0.8000000
			23, 1.0000000
			24, 1.2000000
			25, 1.2000000
			26, 1.4000000
			27, 0.9000000
			28, 1.7000000
			29, 1.1000000
			30, 0.8000000
			31, 0.7000000
			32, 1.1000000
			33, 0.9000000
			34, 1.0000000
			35, 1.1000000
			36, 0.7000000
			37, 0.5000000
			38, 0.7000000
			39, 1.4000000
			40, 0.5000000
			41, 0.6000000
			42, 1.2000000
			43, 0.8000000
			44, 0.5000000
			45, 0.9000000
			46, 0.5000000
			47, 0.4000000
			48, 1.0000000
			49, 0.8000000
			50, 0.5000000
			51, 0.8000000
			52, 0.5000000
			53, 1.1000000
			54, 0.2000000
			55, 0.3000000
			56, 0.6000000
			57, 0.3000000
			58, 0.5000000
			59, 0.8000000
			60, 0.6000000
			61, 0.3000000
			62, 0.4000000
			63, 0.5000000
			64, 0.5000000
			65, 0.4000000
			66, 0.4000000
			67, 0.5000000
			68, 0.4000000
			69, 0.6000000
			70, 0.0000000
			71, 0.3000000
			72, 0.6000000
			73, 0.7000000
			74, 0.7000000
			75, 0.2000000
			76, 0.4000000
			77, 0.3000000
			78, 0.4000000
			79, 0.6000000
			80, 0.8000000
			81, 0.3000000
			82, 0.2000000
			83, 0.6000000
			84, 0.2000000
			85, 0.1000000
			86, 0.4000000
			87, 0.3000000
			88, 0.1000000
			89, 0.1000000
			90, 0.2000000
			91, 0.3000000
			92, 0.2000000
			93, 0.4000000
			94, 0.4000000
			95, 0.2000000
			96, 0.3000000
			97, 0.1000000
			98, 0.2000000
			99, 0.2000000
			100, 0.4000000
		}
		data [set=bordes] {x, y
			1, 0.7000000
			2, 11.5000000
			3, 0.9000000
			4, 1.3000000
			5, 0.6000000
			6, 1.4000000
			7, 1.4000000
			8, 2.0000000
			9, 2.6000000
			10, 2.1000000
			11, 1.5000000
			12, 1.7000000
			13, 1.8000000
			14, 2.2000000
			15, 2.0000000
			16, 1.3000000
			17, 1.6000000
			18, 1.9000000
			19, 1.9000000
			20, 1.1000000
			21, 0.9000000
			22, 0.9000000
			23, 0.9000000
			24, 1.3000000
			25, 1.1000000
			26, 1.4000000
			27, 0.9000000
			28, 1.7000000
			29, 1.1000000
			30, 0.8000000
			31, 0.7000000
			32, 1.1000000
			33, 1.0000000
			34, 0.9000000
			35, 1.1000000
			36, 0.7000000
			37, 0.5000000
			38, 0.7000000
			39, 1.4000000
			40, 0.6000000
			41, 0.5000000
			42, 1.2000000
			43, 0.8000000
			44, 0.5000000
			45, 0.9000000
			46, 0.5000000
			47, 0.4000000
			48, 1.0000000
			49, 0.8000000
			50, 0.5000000
			51, 0.8000000
			52, 0.5000000
			53, 1.1000000
			54, 0.2000000
			55, 0.3000000
			56, 0.6000000
			57, 0.3000000
			58, 0.5000000
			59, 0.8000000
			60, 0.6000000
			61, 0.3000000
			62, 0.4000000
			63, 0.5000000
			64, 0.5000000
			65, 0.4000000
			66, 0.5000000
			67, 0.4000000
			68, 0.4000000
			69, 0.6000000
			70, 0.0000000
			71, 0.3000000
			72, 0.6000000
			73, 0.7000000
			74, 0.7000000
			75, 0.2000000
			76, 0.4000000
			77, 0.3000000
			78, 0.4000000
			79, 0.6000000
			80, 0.8000000
			81, 0.3000000
			82, 0.2000000
			83, 0.6000000
			84, 0.2000000
			85, 0.1000000
			86, 0.4000000
			87, 0.3000000
			88, 0.1000000
			89, 0.1000000
			90, 0.2000000
			91, 0.3000000
			92, 0.2000000
			93, 0.4000000
			94, 0.4000000
			95, 0.2000000
			96, 0.3000000
			97, 0.1000000
			98, 0.2000000
			99, 0.3000000
			100, 0.3000000
		}
		data [set=gillies] {x, y
			1, 0.7000000
			2, 24.5000000
			3, 0.0000000
			4, 28.0000000
			5, 6.5000000
			6, 5.5000000
			7, 0.5000000
			8, 4.5000000
			9, 1.3000000
			10, 3.9000000
			11, 0.9000000
			12, 2.7000000
			13, 0.9000000
			14, 1.3000000
			15, 0.5000000
			16, 0.5000000
			17, 0.3000000
			18, 0.4000000
			19, 0.0000000
			20, 0.9000000
			21, 0.3000000
			22, 0.3000000
			23, 0.3000000
			24, 0.5000000
			25, 0.2000000
			26, 0.2000000
			27, 0.1000000
			28, 0.7000000
			29, 0.2000000
			30, 0.2000000
			31, 0.3000000
			32, 0.7000000
			33, 0.2000000
			34, 0.3000000
			35, 0.3000000
			36, 0.1000000
			37, 0.4000000
			38, 0.5000000
			39, 0.3000000
			40, 0.1000000
			41, 0.3000000
			42, 0.1000000
			43, 0.1000000
			44, 0.0000000
			45, 0.2000000
			46, 0.2000000
			47, 0.2000000
			48, 0.1000000
			49, 0.0000000
			50, 0.2000000
			51, 0.0000000
			52, 0.3000000
			53, 0.3000000
			54, 0.0000000
			55, 0.0000000
			56, 0.0000000
			57, 0.2000000
			58, 0.3000000
			59, 0.2000000
			60, 0.0000000
			61, 0.1000000
			62, 0.2000000
			63, 0.1000000
			64, 0.1000000
			65, 0.0000000
			66, 0.2000000
			67, 0.3000000
			68, 0.0000000
			69, 0.1000000
			70, 0.0000000
			71, 0.2000000
			72, 0.0000000
			73, 0.2000000
			74, 0.3000000
			75, 0.2000000
			76, 0.0000000
			77, 0.0000000
			78, 0.0000000
			79, 0.0000000
			80, 0.0000000
			81, 0.1000000
			82, 0.0000000
			83, 0.1000000
			84, 0.1000000
			85, 0.1000000
			86, 0.1000000
			87, 0.0000000
			88, 0.0000000
			89, 0.1000000
			90, 0.0000000
			91, 0.0000000
			92, 0.1000000
			93, 0.1000000
			94, 0.0000000
			95, 0.1000000
			96, 0.3000000
			97, 0.0000000
			98, 0.0000000
			99, 0.0000000
			100, 0.1000000
		}
		data [set=po] {x, y
			1, 0.7000000
			2, 0.4000000
			3, 0.1000000
			4, 0.0000000
			5, 0.0000000
			6, 0.1000000
			7, 0.6000000
			8, 0.4000000
			9, 0.3000000
			10, 0.6000000
			11, 0.4000000
			12, 0.5000000
			13, 0.7000000
			14, 0.7000000
			15, 0.5000000
			16, 0.6000000
			17, 0.8000000
			18, 0.9000000
			19, 0.9000000
			20, 0.7000000
			21, 0.9000000
			22, 0.5000000
			23, 1.2000000
			24, 0.9000000
			25, 1.0000000
			26, 1.5000000
			27, 0.5000000
			28, 1.0000000
			29, 0.6000000
			30, 1.5000000
			31, 0.8000000
			32, 0.5000000
			33, 1.0000000
			34, 1.5000000
			35, 1.2000000
			36, 0.9000000
			37, 0.7000000
			38, 1.5000000
			39, 1.1000000
			40, 1.7000000
			41, 0.8000000
			42, 0.8000000
			43, 0.8000000
			44, 0.7000000
			45, 1.1000000
			46, 1.4000000
			47, 1.2000000
			48, 0.7000000
			49, 0.9000000
			50, 1.2000000
			51, 1.2000000
			52, 1.5000000
			53, 1.3000000
			54, 1.4000000
			55, 0.7000000
			56, 0.8000000
			57, 1.3000000
			58, 1.4000000
			59, 0.5000000
			60, 0.8000000
			61, 1.0000000
			62, 1.0000000
			63, 1.3000000
			64, 1.0000000
			65, 0.8000000
			66, 1.1000000
			67, 1.2000000
			68, 0.8000000
			69, 0.9000000
			70, 0.4000000
			71, 1.0000000
			72, 1.2000000
			73, 0.7000000
			74, 0.9000000
			75, 1.3000000
			76, 0.9000000
			77, 0.8000000
			78, 0.4000000
			79, 0.8000000
			80, 0.4000000
			81, 0.5000000
			82, 0.9000000
			83, 0.4000000
			84, 0.7000000
			85, 0.7000000
			86, 0.4000000
			87, 0.6000000
			88, 0.8000000
			89, 0.7000000
			90, 0.9000000
			91, 0.9000000
			92, 0.0000000
			93, 0.3000000
			94, 0.3000000
			95, 1.0000000
			96, 1.1000000
			97, 0.5000000
			98, 0.5000000
			99, 0.5000000
			100, 0.4000000
		};
	\draw[->] (0,0) -- (-0.7,-0.6);
	\end{scope}
	\end{tikzpicture}
	\caption{Size distributions of UCs for $n = 7$ in $1,000$ profiles sampled via the impartial culture model. The high peak is at size $2$ for McKelvey and Bordes. Gillies-\uc has an even higher peak at size~$4$.
	}\label{fig:uc7}
    \Description{A plot showing data for the four assignment rules McKelvey uncovered set, Bordes uncovered set, Gillies uncovered set, and the Pareto rule. It depicts the frequency with which each cardinality occurs across one thousand profiles sampled uniformly at random for seven agents and houses. It can be seen that a size of two (and small sizes in general) occurs way more often for the uncovered set than for the Pareto rule. In general, the uncovered sets often return fewer outcomes than the Pareto rule. This is particularly true for the Gillies uncovered set. The Gillies uncovered set has size four even more often than size two.}
\end{figure}
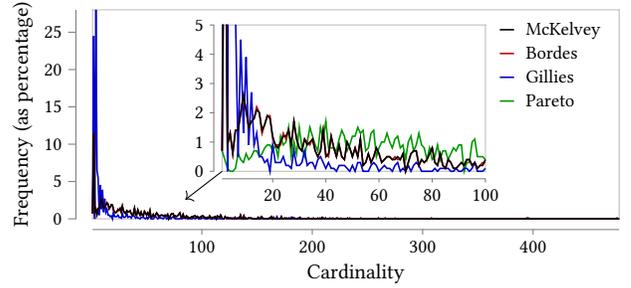
 
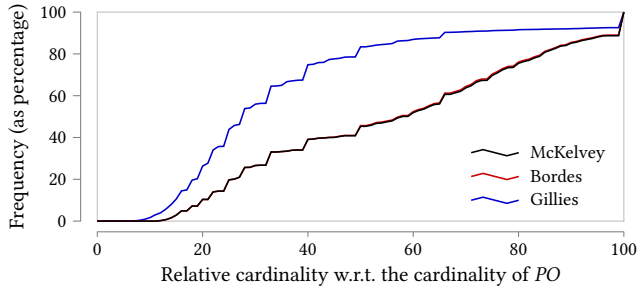
\begin{figure}
	\begin{tikzpicture}[xscale=1.4,yscale=0.9]
		\datavisualization [scientific axes=clean,
		x axis={label={Relative cardinality w.r.t.\ the cardinality of \po}},
		y axis={label={Frequency (as percentage)}},
legend=south east inside,
		visualize as line/.list={mckelvey, bordes, gillies},
		style sheet=strong colors,
		mckelvey= {label in legend={text=McKelvey}},
		bordes= {label in legend={text=Bordes}},
		gillies= {label in legend={text=Gillies}}]
data [set=mckelvey] {x, y
0, 0.0000000
1, 0.0000000
2, 0.0000000
3, 0.0000000
4, 0.0000000
5, 0.0000000
6, 0.0000000
7, 0.0000000
8, 0.0000000
9, 0.0000000
10, 0.0000000
11, 0.0000000
12, 0.2211347
13, 0.6828784
14, 1.5363992
15, 2.8669965
16, 4.9010919
17, 4.9010919
18, 7.1921865
19, 7.1921865
20, 10.3661456
21, 10.3661456
22, 13.9587581
23, 14.3473630
24, 14.3473630
25, 19.7474289
26, 20.0699445
27, 21.0123774
28, 25.6807470
29, 25.7217223
30, 26.6152492
31, 26.7421406
32, 26.7421406
33, 33.0684255
34, 33.0684255
35, 33.2653715
36, 33.4531972
37, 33.9100062
38, 34.0104619
39, 34.0104619
40, 39.0955684
41, 39.2570024
42, 39.6649715
43, 39.7535311
44, 39.9782787
45, 40.0363932
46, 40.4772747
47, 40.8308522
48, 40.8318436
49, 40.8318436
50, 45.3403873
51, 45.3417090
52, 45.8420048
53, 46.7493994
54, 46.9499142
55, 47.4733963
56, 47.9543059
57, 49.3061508
58, 50.0856737
59, 50.1138718
60, 51.8952199
61, 52.8951064
62, 53.5443894
63, 54.7712265
64, 55.7562319
65, 56.0428391
66, 60.6265152
67, 60.6293791
68, 61.3984819
69, 62.1069479
70, 63.7586949
71, 64.7515870
72, 66.5094403
73, 67.3111141
74, 67.3300597
75, 69.6773412
76, 70.9403511
77, 72.4222708
78, 73.3991693
79, 73.5242432
80, 75.5030990
81, 76.3331472
82, 76.9361677
83, 78.0406956
84, 78.8345709
85, 80.7825630
86, 81.6764864
87, 82.6161436
88, 83.7241742
89, 84.0179961
90, 85.1443114
91, 85.6034886
92, 86.3583823
93, 86.8609801
94, 87.3435309
95, 88.0406735
96, 88.5998328
97, 88.6869054
98, 88.6872909
99, 88.6872909
100, 100.0000000

}
		data [set=bordes] {x, y
0, 0.0000000
1, 0.0000000
2, 0.0000000
3, 0.0000000
4, 0.0000000
5, 0.0000000
6, 0.0000000
7, 0.0000000
8, 0.0000000
9, 0.0000000
10, 0.0000000
11, 0.0000000
12, 0.2211347
13, 0.6828784
14, 1.5363992
15, 2.8669965
16, 4.9010919
17, 4.9010919
18, 7.1921865
19, 7.1921865
20, 10.3661456
21, 10.3661456
22, 13.9587581
23, 14.3473630
24, 14.3473630
25, 19.7474289
26, 20.0699445
27, 21.0123774
28, 25.6807470
29, 25.7217223
30, 26.6152492
31, 26.7421406
32, 26.7421406
33, 33.0684255
34, 33.0684255
35, 33.2653715
36, 33.4531972
37, 33.9168795
38, 34.0173352
39, 34.0173352
40, 39.1024417
41, 39.2638757
42, 39.6961656
43, 39.7900124
44, 40.0926572
45, 40.1525561
46, 40.6163485
47, 41.0114522
48, 41.0124435
49, 41.0124435
50, 45.6879727
51, 45.6892945
52, 46.2392674
53, 47.1599239
54, 47.3725111
55, 47.8694252
56, 48.3413026
57, 49.7105731
58, 50.5027411
59, 50.5310493
60, 52.3802270
61, 53.3735046
62, 53.9841474
63, 55.2273746
64, 56.2337489
65, 56.5421655
66, 61.2063164
67, 61.2102817
68, 61.9992774
69, 62.7696690
70, 64.4367707
71, 65.3838299
72, 67.1665879
73, 67.9591414
74, 67.9856322
75, 70.3347311
76, 71.5377981
77, 72.9964103
78, 73.9784086
79, 74.1067320
80, 76.0878789
81, 76.9116155
82, 77.4944898
83, 78.5784199
84, 79.3542968
85, 81.2114934
86, 82.0636814
87, 82.9970271
88, 84.0757031
89, 84.3868954
90, 85.5125718
91, 85.9673100
92, 86.7253539
93, 87.2333931
94, 87.7062508
95, 88.3737194
96, 88.8832897
97, 88.9795597
98, 88.9802757
99, 88.9802757
100, 100.0000000

}
		data [set=gillies] {x, y
0, 0.0000000
1, 0.0000000
2, 0.0000000
3, 0.0000000
4, 0.0000000
5, 0.0003855
6, 0.0153107
7, 0.1134422
8, 0.3739441
9, 0.8700982
10, 1.6696572
11, 2.9848336
12, 3.9534599
13, 5.6913653
14, 7.8539163
15, 10.5782040
16, 14.4754880
17, 14.9148495
18, 19.5957099
19, 20.2753279
20, 26.3347003
21, 27.6884948
22, 34.0546757
23, 35.6643458
24, 35.7711901
25, 43.8016309
26, 45.7489511
27, 46.2696024
28, 53.8304898
29, 54.1293235
30, 55.9408964
31, 56.3276287
32, 56.3509803
33, 64.5620650
34, 64.6305775
35, 64.9556266
36, 66.6999316
37, 67.0808261
38, 67.3839335
39, 67.4743656
40, 74.8254307
41, 75.0169574
42, 75.8368168
43, 75.9417996
44, 77.2890844
45, 77.6288383
46, 77.8860357
47, 78.4438842
48, 78.5345366
49, 78.5345366
50, 83.3606282
51, 83.3997310
52, 83.7667137
53, 84.2097761
54, 84.4528745
55, 84.6784812
56, 84.9404260
57, 86.1167819
58, 86.2847919
59, 86.4118485
60, 86.9452660
61, 87.1932990
62, 87.3354570
63, 87.4665340
64, 87.5963664
65, 87.7139502
66, 90.2865410
67, 90.3107738
68, 90.3913256
69, 90.4904815
70, 90.6400415
71, 90.7033440
72, 90.8332425
73, 90.9024049
74, 90.9276289
75, 91.1231210
76, 91.1959734
77, 91.2479086
78, 91.3095037
79, 91.3303769
80, 91.5790709
81, 91.6302350
82, 91.6601404
83, 91.7005980
84, 91.7398771
85, 91.8291306
86, 91.8617346
87, 91.9034590
88, 91.9550196
89, 91.9713988
90, 92.0512566
91, 92.0998102
92, 92.1823447
93, 92.2714110
94, 92.3567102
95, 92.4055281
96, 92.4854191
97, 92.5450095
98, 92.5454501
99, 92.5454501
100, 100.0000000

};
	\end{tikzpicture}
	\caption{Cumulative distributions of the
	ratio of UC and PO sizes for $n = 5$. The
	plot shows
	the relative difference, i.e., the ratio.
    In total, there are $9,078,630$ profiles up to symmetry.
    We note that the plots for McKelvey and Bordes almost perfectly align.
    }\label{fig:ucPOdiffquot}
    \Description{A plot showing data for the three assignment rules McKelvey uncovered set, Bordes uncovered set, Gillies uncovered set, comparing their sizes with the Pareto rule. It depicts the frequency with which each relative cardinality with respect to the Pareto rule occurs across all profiles for five agents and houses. It can be seen that often the uncovered sets have between ten and fifty percent of the size of the set of all Pareto-optimal assignments. In general, the uncovered sets often return fewer outcomes than the Pareto rule. This is particularly true for the Gillies uncovered set. Between ten and one hundred percent, many different ratios occur regularly.}
\end{figure}
 
\begin{remark}
In social choice, \uc can be computed in polynomial time via matrix multiplication \citep{BFH09b,BBH15a}. In assignment, the uncovered set can be exponentially large. Unless a structural result such as \Cref{thm:Structure_TC} also holds for \uc, it seems unlikely that the uncovered set can be returned in polynomial time. Instead, the two interesting questions are \emph{(i)} whether, given a profile, one can efficiently find an uncovered assignment, and \emph{(ii)} whether, given a profile and an assignment, the assignment is uncovered.
The computational complexity of both problems remains open.
\end{remark}

\begin{remark}
Our experiments suggest that, at least for small $n$, all rank-maximal assignments are contained in the uncovered set. We have verified this through exhaustive search for all profiles with $n\le 5$.
If this were true in general, one could obtain an element of the uncovered set by computing a rank-maximal assignment, which is possible in polynomial time. 
A set inclusion between the two rules would be interesting, as then \uc would be a natural rule that is relatively decisive and yet contains all rank-maximal \emph{and} all popular assignments.
\end{remark}

\begin{remark}
Generous assignments \citep{MeMi06a,Manl13a} 
are a dual version of rank maximal assignments: we compare assignments again based on their rank vector, but we now lexicographically optimize for the worst-off agents. While generous and rank-maximal assignments have very similar definitions, \uc indicates that rank-maximal assignments may be preferable, as there are simple instances where no generous assignment is in the \uc. For example, consider the following profile $P$.
\[
\arraycolsep=3pt P = \begin{array}{rccccccc}
1\colon & \goodass{\underline c} & f & a & e & b & g & d \\
2\colon & \goodass{\underline b} & c & g & e & a & d & f \\
3\colon & \goodass g & f & a & \underline d & e & c & b \\
4\colon & {\underline g} & b & e & c & a & f & \goodass d \\
5\colon & \goodass {\underline e} & d & a & b & c & f & g \\
6\colon & \goodass {\underline a} & b & d & g & f & e & c \\
7\colon & \goodass {\underline f} & b & d & e & c & a & g \\
\end{array}
\]
The only two uncovered assignments for $P$ are the underlined $\mu = (c,b,d,g, e,a,f)$ and the blue $\lambda = (c,b,g,d, e,a,f)$. However, neither of these is generous, as we can modify $\mu$ by giving agent $3$ house $a$ and agent $6$ house $d$. This modified assignment $\mu'$ gives all agents houses within their top three, while both $\mu$ and $\lambda$ give some agent her fourth-best house or worse. Hence, no generous assignment in this profile is uncovered.
\end{remark}

\section{Conclusion and Future Work}

In this paper, we initiate a systematic study of majoritarian assignment rules---set-valued assignment rules that rely solely on the pairwise majority relation. Prior work on majoritarian concepts in the context of assignment was restricted to popularity, corresponding to weak Condorcet winners in social choice theory.\footnote{\citet{KaVa23a} have studied semi-popularity and Copeland winners in the more general setting of roommate markets.} 
However, just like weak Condorcet winners, popular assignments rarely exist. To circumvent this issue, social choice theory has developed a range of majoritarian functions that return sets of ``good'' alternatives in the absence of Condorcet winners. We have transferred two of the most prominent such functions---the top cycle and the uncovered set---to the subdomain of assignment. These rules are symmetric, treating all agents and houses equally, and they help to narrow down the set of acceptable assignments, from which a final selection (e.g., by randomization) can be made.

We proved a structural result about assignment-induced majority graphs, which, surprisingly, revealed that some well-known assignment rules are majoritarian. 
We then gave a complete and efficiently checkable characterization of the assignments contained in the top cycle. This characterization reveals that the top cycle not only contains all Pareto-optimal assignments (which does not hold in the more general social choice domain) but also some rather unattractive ones. The top cycle is too coarse to exclude these undesirable assignments. By contrast, the three variants of the uncovered set we studied are much more selective. In fact, each of them contains a subset of all Pareto-optimal assignments and thus offers a promising foundation for new, appealing assignment rules. 

Our findings pave the way for the exploration of further appealing refinements of the McKelvey uncovered set, such as the minimal covering set and the bipartisan set (aka sign-essential set) \citep[see, e.g.,][]{DuLa99a,BrFi08b,BBH15a}. Both of these rules can be computed efficiently in social choice theory. Whether this is also true in assignment is wide open. 
Indeed, even seemingly simpler problems---such as finding an assignment in the uncovered set or the Copeland set, or deciding whether a given assignment belongs to any of these sets---remain unresolved. 
In particular, it would be interesting to investigate whether---in contrast to social choice---the support of mixed popular assignments is contained in the uncovered set.

Other avenues for future research include relaxations of the model that allow for different numbers of agents and houses, ties in the preferences, and pairwise matchings of agents. These generalizations would broaden the applicability of majoritarian assignment rules and deepen our understanding of their properties.

\begin{acks}
This work was supported by the Deutsche Forschungsgemeinschaft under grant BR 2312/11-2. Most of this research was carried out while Patrick Lederer was at UNSW Sydney, where he was funded by the NSF-CSIRO project on ``Fair Sequential Collective Decision Making'' (RG230833). Similarly, Chris Dong was at TU Munich, supported by the above grant of the Deutsche Forschungsgemeinschaft.
We greatly appreciate the contributions of Martin Bullinger and Matthias Greger during our initial research meetings on this topic. Parts of this work were presented at the rump session of the PRAGMA Fest 2025 in Krakow.
\end{acks}

\clearpage

\appendix
\section{Proof of Theorem \ref{thm:C1_Char}}\label{app:thm1}
In this appendix, we present the proof of \Cref{thm:C1_Char}, which we break down into the following lemmas. \begin{lemma}\label{lem:MajorityQuery_pq}
    Let $x,y\in N$ and $p,q\in H$ be pairwise distinct. 
    Consider any $\mu$ with $\mu(x)= p$, $\mu(y)=q$. Consider the matching $\lambda$ which only differs from $\mu$ in $\lambda (x)=q$ and $\lambda (y)=p$. Then, the following holds:
    \begin{itemize}
        \item $\mu \mpref \lambda$ iff $p\pref_x q$ and $q \pref_y p$,
        \item $\mu \dispref \lambda$ iff $q\pref_x p$ and $p \pref_y q$, and
        \item $\mu \sim \lambda$ iff $x,y$ have the same preferences over $p,q$ (i.e., either $p\pref_x q$ and $p \pref_y q$ or $q\pref_x p$ and $q \pref_y p$).
    \end{itemize}
\end{lemma}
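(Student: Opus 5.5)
The plan is to compute the two sets $N_{\mu,\lambda}$ and $N_{\lambda,\mu}$ explicitly and compare their sizes. Since $\mu$ and $\lambda$ agree on every agent other than $x$ and $y$, each agent $z\in N\setminus\{x,y\}$ has $\mu(z)=\lambda(z)$. By the definition of weak preference, every such $z$ therefore belongs to both $N_{\mu,\lambda}$ and $N_{\lambda,\mu}$. These $n-2$ agents contribute equally to both counts and cancel in the comparison. So the majority comparison between $\mu$ and $\lambda$ is decided entirely by $x$ and $y$. Writing $\delta=\lvert N_{\mu,\lambda}\rvert-\lvert N_{\lambda,\mu}\rvert$, we have $\delta=\delta_x+\delta_y$, where each $\delta_z\in\{+1,-1\}$.

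Next I determine the signs $\delta_x$ and $\delta_y$. Agent $x$ receives $p$ under $\mu$ and $q$ under $\lambda$, with $p\neq q$, and preferences are strict. Hence $x$ lies in exactly one of the two sets: $\delta_x=+1$ iff $p\pref_x q$, and $\delta_x=-1$ iff $q\pref_x p$. Symmetrically, $y$ receives $q$ under $\mu$ and $p$ under $\lambda$, so $\delta_y=+1$ iff $q\pref_y p$, and $\delta_y=-1$ iff $p\pref_y q$.

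Finally, I run the case analysis on $\delta\in\{-2,0,2\}$.
\begin{itemize}
\item $\mu\mpref\lambda$ iff $\delta>0$. This happens iff $\delta_x=\delta_y=+1$, i.e., iff $p\pref_x q$ and $q\pref_y p$.
\item $\mu\dispref\lambda$ iff $\delta<0$. This happens iff $\delta_x=\delta_y=-1$, i.e., iff $q\pref_x p$ and $p\pref_y q$.
\item $\mu\sim\lambda$ iff $\delta=0$. This happens iff $\delta_x=-\delta_y$. Unpacking, that means either $p\pref_x q$ and $p\pref_y q$, or $q\pref_x p$ and $q\pref_y p$; in words, $x$ and $y$ order $p$ and $q$ the same way.
\end{itemize}
Because the three cases on $\delta$ are exhaustive and mutually exclusive, each of the three biconditionals follows.

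There is no real obstacle here. The only point needing care is the bookkeeping that unaffected agents are counted in both $N_{\mu,\lambda}$ and $N_{\lambda,\mu}$, since the definition uses weak preference. This is why they cancel rather than being counted on one side only.
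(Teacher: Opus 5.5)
Your proof is correct and follows the same route as the paper: agents other than $x$ and $y$ lie in both $N_{\mu,\lambda}$ and $N_{\lambda,\mu}$ and cancel, so the comparison reduces to the two swapped agents, followed by a three-case check. Your bookkeeping with $\delta=\delta_x+\delta_y$ just makes explicit what the paper leaves as ``going through all three cases.''
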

\begin{proof}
    Since $\mu(z)= \lambda(z)$ for all agents $z\in N\setminus\{ x,y\}$, the majority comparison only depends on the preferences of $x,y$ between $p,q$. Going through all three cases, we see that the statement clearly holds.
    As aid, we visualize the assignments $\goodass \mu$ and $\badass \lambda$ in blue and red, respectively, for all three cases.
    \begin{itemize}
    \item $x$ prefers $p$ to $q$ but $y$ does not (hence $\mu \pref \lambda$):
    \[\arraycolsep=3pt P = \begin{array}{rcc}
    x\colon & \goodass p,& \badass q\\
	y\colon & \goodass q,& \badass p\\
	\end{array}
    \]
    \item $x$ prefers $q$ to $p$, but $y$ does not (hence $\mu \dispref \lambda$):
    \[\arraycolsep=3pt P' = \begin{array}{rcc}
    x\colon & \badass q, & \goodass p\\
	y\colon & \badass p, & \goodass q\\
	\end{array}
    \]
    \item $x,y$ have the same preferences over $p,q$ (hence $\mu \sim \lambda$):
        \[\arraycolsep=3pt P'' = \begin{array}{rcc}
    x\colon & \badass q, & \goodass p\\
	y\colon & \goodass q,&\badass p\\
	\end{array}\qquad
    P''' = \begin{array}{rcc}
    x\colon & \goodass p,& \badass q\\
	y\colon & \badass p, &\goodass q\\
	\end{array}
    \]
    \end{itemize}
    
\end{proof}

Fix any $p,q\in H$. We apply \Cref{lem:MajorityQuery_pq} for the agent pair $1,2$, then $2,3$, and so on. Either all agents have the same preferences over the pair $p, q$, or for some pair $x,x+1$ we can determine the preferences of these two agents over $p,q$ and hence obtain the pairwise preferences of all agents. 

We build a graph with the houses as nodes and edges between two houses if we were able to fully determine the agents' preferences between these houses. Let us call the connected components of this graph $H_1,\dots, H_k$.
Next, we focus on each connected component $H_i$. We can fully determine the preferences within each component by virtue of the following lemma.
\begin{lemma}\label{lem:pathstep}
    Let $p,q,r\in H$ be distinct such that for all agents, we know their preferences between $p$ and $q$, as well as between $q$ and $r$. Then, we also know all agents' preferences between $p$ and $r$.
\end{lemma}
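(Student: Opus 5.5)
The plan is to show that the majority graph determines, for every agent, whether she ranks $p$ above $r$. It splits into a transitivity step and a step that uses majority queries which change the houses of only two or three agents.

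\textbf{Step 1 (transitivity).} Every agent $x$ whose known preferences on $\{p,q\}$ and $\{q,r\}$ are consistently oriented, i.e.\ $p\pref_x q\pref_x r$ or $r\pref_x q\pref_x p$, has a determined preference between $p$ and $r$. The remaining agents are those for whom $q$ is either the top or the bottom of $\{p,q,r\}$.

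\textbf{Step 2 (propagation from one known agent).} Suppose some agent $w$ has a determined preference between $p$ and $r$. For every other agent $x$, apply \Cref{lem:MajorityQuery_pq} to the pair $x,w$ and houses $p,r$, filling the other agents with arbitrary remaining houses.
\begin{itemize}
\item A majority tie means $x$ agrees with $w$ on $\{p,r\}$.
\item A strict comparison determines $x$'s preference between $p$ and $r$ directly.
\end{itemize}
In either case we are done.

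\textbf{Step 3 (no agent is known after Step 1).} Here every agent has $q$ at the top or the bottom of $\{p,q,r\}$. By hypothesis not all agents agree on $\{p,q\}$ (this is exactly when the edge $p$--$q$ exists). So there are an agent $x$ with $p\pref_x q$, for whom $q$ is at the bottom, and an agent $y$ with $q\pref_y p$, for whom $q$ is at the top. It suffices to learn $x$'s preference on $\{p,r\}$ and then invoke Step 2.

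Use a three-agent rotation. Let
\begin{itemize}
\item $\mu$ give $x$ house $p$, $y$ house $q$, and $z$ house $r$;
\item $\lambda$ give $x$ house $r$, $y$ house $p$, and $z$ house $q$;
\item all other agents receive identical houses in $\mu$ and $\lambda$.
\end{itemize}
Agent $y$ prefers $\mu$, since $q\pref_y p$. If we pick $z\notin\{x,y\}$ with $q\pref_z r$, then $z$ prefers $\lambda$. Hence $\mu\mpref\lambda$ iff $p\pref_x r$, and $\lambda\mpref\mu$ iff $r\pref_x p$; ties are impossible because the comparison involves three agents.

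\textbf{Main obstacle: existence of $z$.} The crux is guaranteeing such a $z$. Since the edge $q$--$r$ exists, some agent ranks $q$ above $r$; the problem is when the only such agents are among $x,y$. I would first handle this by exploiting the symmetry of the setup.
\begin{itemize}
\item Swapping the roles of $p$ and $r$ gives the analogous construction, which requires an agent ranking $q$ above $p$ outside the pair.
\item Reversing the roles of top and bottom (giving $x$ and $y$ the complementary houses) gives the variant that requires an agent ranking $r$ above $q$.
\item We may also choose freely among all agents that have $q$ top or bottom, not just one fixed pair $x,y$.
\end{itemize}
A counting argument over these choices should show that with $n\geq3$ and both edges present, at least one variant has a valid third agent. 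The degenerate cases where every variant fails force $n\le 3$ with a very specific pattern, and I would resolve these by direct inspection of the handful of assignments.
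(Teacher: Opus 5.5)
\textbf{There is a genuine gap, at the step you yourself call the main obstacle.}

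Steps 1 and 2 are correct. Step 2 is a neat replacement for the paper's preliminary split into the ``agents disagree on $\{p,r\}$'' and ``all agents agree'' cases. Your three-agent rotation is also valid whenever the required $z$ exists. However, you only assert that ``a counting argument should show'' such a $z$ exists, and you postpone hypothetical degenerate cases to inspection, so the proof is not complete.

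The missing observation is what makes the rotation work. Call an agent \emph{top-type} if $q$ is top among $\{p,q,r\}$ for her, and \emph{bottom-type} if it is at the bottom. Suppose one agent receives $q$ in $\mu$ versus $p$ in $\lambda$, another receives $r$ versus $q$, and a third receives $p$ versus $r$. The first two cancel exactly when they are of the same type, and then the third decides the strict comparison.

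Since $p,q,r$ are distinct houses and $|H|=n$, we automatically have $n\ge 3$. By pigeonhole, among any three agents two share a type. Let those two cancel and read off the third agent's preference between $p$ and $r$. This is precisely the paper's argument.

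In your setting this means any third agent $z\notin\{x,y\}$ works, so there are no degenerate cases:
\begin{itemize}
\item If $z$ is top-type, use your rotation: $y$ and $z$ cancel and $x$ decides.
\item If $z$ is bottom-type, let $x$ and $z$ cancel and let $y$ decide, then propagate from $y$ via Step 2.
\end{itemize}

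\textbf{Second issue: an assumption not in the hypothesis.} Step 3 starts from the assumption that the agents disagree on $\{p,q\}$ (``both edges present''). The lemma only assumes that the preferences on $\{p,q\}$ and $\{q,r\}$ are \emph{known}, not that the agents disagree on them.

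This matters in the paper's use. There the lemma is iterated along a path $p_1,\dots,p_\ell$, so the pair $\{p,q\}=\{p_1,p_k\}$ is a derived pair on which all agents may agree. For the statement as given, it can even happen that every agent is top-type, or every agent is bottom-type. Then there is no opposite-type pair $x,y$ to start your construction from.

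The same-type pigeonhole construction needs no disagreement at all and covers this case with no change.
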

\begin{proof}
    If not all agents rank $p$ over $r$ or not all agents rank $r$ over $p$, we can apply \Cref{lem:MajorityQuery_pq} to obtain all preferences of all agents between $p$ and $r$. Hence, let all agents rank $p$ over $r$, or let all agents rank $r$ over $p$.
    It now suffices to determine the preferences of a single agent.
    If there is an agent $x$ such that $p\pref_x q \pref_x r$ or $r\pref_x q\pref_x p$, we know by transitivity that $p\pref_x r$ or $r \pref_x p$, respectively. Otherwise, all agents $x$ rank $q$ either first or last among $\{q,p,r\}$ (i.e., either $q \pref_x p,r$ or $p,r \pref_x q$). Because the case $n \le 2$ is trivial, consider $n \ge 3$. By the pigeonhole principle, there are at least two agents $x$ and $y$ of the same type. We only consider the case where $q \pref_x p,r$ and $q \pref_y p,r$ since the other case is symmetric. Now, consider any third agent $z\in N$ and examine the two assignments $\goodass\mu$ and $\badass\lambda$, which we will first visualize.

    \[\arraycolsep=3pt P = \begin{array}{rccc}
    x\colon & \goodass q, & p,\badass r\\
	y\colon & \badass q,&\goodass p, r\\
	z\colon & \dots    &\badass{p}, \goodass{r}\\
    \end{array}
    \]

    Formally, we take any $\mu\in M$ with $\mu(x) = q, \mu(y) = p$, and $\mu(z) = r$. We obtain $\lambda$ by altering $\mu$ in $\lambda(x) = r, \lambda(y) = q$, and $\lambda(z) = p$.
    Since $\mu(w) = \lambda(w)$ for all $w\in N\setminus \{ x,y,z\}$, the majority comparison only depends on the agents $x,y,z$. Clearly, $x$ prefers $\mu $ to $\lambda$, and $y$ prefers $\lambda$ to $\mu$. Hence, there are two possibilities. If $\mu \pref \lambda$, then $r \pref_z p$. Otherwise $\mu \dispref \lambda$, due to $r \dispref_z p$. We have thus determined the preferences of agent $z$ regarding houses $p$ and $r$. This concludes the proof.
\end{proof}

Based on the last two lemmas, we can now identify sets of houses $H_1,\dots, H_k$ for which we can fully specify the agents' preferences. To this end, consider the following graph $G$, whose vertices are the houses $H$, and there is an edge between two houses $p$ and $q$ if applying \Cref{lem:MajorityQuery_pq} for all agent pairs allows us to infer the agents' preferences over $p$ and $q$. Then, the sets $H_1,\dots,H_k$ correspond to the connected components of this graph. In particular, for any two houses $p,q$ that are connected in this graph, there is a sequence of houses $p_1=p,\dots, p_\ell=q$ such that $p_i$ is a neighbor of $p_{i+1}$ in $G$. By repeatedly applying \Cref{lem:pathstep} along this path, we can infer the preferences of all agents between $p$ and $q$.

Moreover, we claim that the houses in each set $H_i$ must be ranked contiguously by each agent, i.e., for all $H_i$, agents $x\in N$, and houses $p,q\in H_i$, there is no house $r\not\in H_i$ such that $p\succ_x r\succ_x q$. Assume for contradiction that this is not true, which means that there is a connected component $H_i$ of $G$, an agent $x$, and houses $p,q\in H_i$, $r\not\in H_i$ such that $p\succ_x r\succ_x q$. Further, we partition $H_i$ into the sets $H_i^+=\{p'\in H_i\colon p'\succ_x r\}$ and $H_i^-=\{q'\in H_i\colon r\succ_x q'\}$. Since $r\not\in H_i$, we know that there is no edge from $r$ to any house $h\in H_i$ in $G$. This means all agents agree on the preference between $r$ and the houses in $H_i$, i.e., it holds for all $y\in N$ that $p'\succ_y r$ for all $p'\in H_i^+$ and $r\succ_y q'$ for all $q'\in H_i^-$. However, by the transitivity of the agents' preferences, this means that $p'\succ_y q'$ for all $y\in N$, $p'\in H_i^+$, and $q'\in H_i^-$. In turn, this implies that there is no edge between houses in $H_i^+$ and $H_i^-$ in $G$, which contradicts that $H_i=H_i^+\cup H_i^-$ is a connected component in this graph. 

We note that it follows from the last paragraph also that, for all components $H_i$, $H_j$, it either holds that $p\succ_x q$ for all $x\in N$, $q\in H_i$, $p\in H_j$, or $q\succ_x p$ for all $x\in N$, $q\in H_i$, $p\in H_j$. The reason for this is that all agents rank the houses in $H_i$ and $H_j$ contiguously. Moreover, if there were two agents $x$ and $y$ such that $p\succ_x q$ and $q\succ_yp$ for all $p\in H_i$, $q\in H_j$, then $H_i$ and $H_j$ would be placed in the same connected component. From this, we infer that the sets $H_1,\dots, H_k$ can be ordered to form a valid decomposition of $P$.

By our analysis so far, we get that every profile that induces $G_{P^*}$ can only differ in the order of the components $H_1,\dots, H_k$ in the agents' preferences. To complete the proof, we need to show that only cyclic shifts of the components result in the same majority graph. This last step can be inferred analogously to Lemma 4 of \citet{BHS16a}. The idea is that, for each three distinct houses $p,q,r$, a linear preference relation either agrees with two out of three of the following comparisons $p\succ_{cyc} q, q\succ_{cyc} r, r\succ_{cyc} p$, or it agrees with two out of three of the following comparisons $p\succ'_{cyc} r, r\succ'_{cyc} q, q\succ'_{cyc} p$. We can use this as follows:
\begin{lemma}\label{lem:cycle_type}
    Let $p,q,r$ be houses such that agents $1,2,3$ satisfy ${\pref_1\vert_{p,q,r }}= {\pref_2\vert_{p,q,r}} = {\pref_3\vert_{p,q,r}}$. Then, from the majority graph, we can determine the ``direction'' in which these three are ordered.
\end{lemma}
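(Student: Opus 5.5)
The plan is to read off the cyclic orientation from a single majority comparison between two assignments that differ only in a 3-cycle among agents $1,2,3$ and houses $p,q,r$. Every linear order on $\{p,q,r\}$ is of one of two cyclic types. Type A consists of the orders $p\pref q\pref r$, $q\pref r\pref p$ and $r\pref p\pref q$; each of these agrees with exactly two of the three comparisons $p\pref_{cyc} q$, $q\pref_{cyc} r$, $r\pref_{cyc} p$. Type B consists of the three reversed orders, which agree with two of $p\pref'_{cyc} r$, $r\pref'_{cyc} q$, $q\pref'_{cyc} p$. Since the three agents have the same restriction to $\{p,q,r\}$, this common order has a well-defined type, and that type is the ``direction'' we want to recover.

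Concretely, I fix any $\mu\in M$ with $\mu(1)=p$, $\mu(2)=q$, $\mu(3)=r$. I then define $\lambda$ by setting $\lambda(1)=q$, $\lambda(2)=r$, $\lambda(3)=p$ and $\lambda(w)=\mu(w)$ for all other agents $w$. This requires $n\ge 3$; for $n\le 2$ there are no three distinct houses, so the lemma is vacuous. As in \Cref{lem:MajorityQuery_pq}, all agents outside $\{1,2,3\}$ are indifferent between $\mu$ and $\lambda$. Each of the agents $1,2,3$ receives different houses in $\mu$ and $\lambda$ and has strict preferences. Hence the comparison is decided $2$ to $1$ (or $3$ to $0$) among these three agents, so it is always strict and visible in the majority graph.

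The key step is a short case check over the six possible common orders. Agent $1$ compares $p$ with $q$, agent $2$ compares $q$ with $r$, and agent $3$ compares $r$ with $p$. Under a type-A order, exactly two of the three comparisons $p\pref q$, $q\pref r$, $r\pref p$ hold. The corresponding two agents then prefer $\mu$, because each of them gets the house that $\mu$ gives, so $\mu\mpref\lambda$. For instance, with $p\pref q\pref r$, agents $1$ and $2$ favour $\mu$ and only agent $3$ favours $\lambda$. Under a type-B order the reversed comparisons hold for two of the three agents, so $\lambda\mpref\mu$. Thus $\mu\mpref\lambda$ holds if and only if the common order is of type A, and the orientation of the edge between $\mu$ and $\lambda$ in the majority graph determines the direction.

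There is no real obstacle. The only points needing care are that $\mu$ and $\lambda$ agree outside agents $1,2,3$, so that no other agent affects the comparison, and the bookkeeping that each agent compares exactly one pair of houses forming one edge of the $3$-cycle. This result is what will then be used, as in Lemma 4 of \citet{BHS16a}, to show that only cyclic shifts of the components are compatible with the majority graph.
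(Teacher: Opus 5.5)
Your proof is correct and follows the paper's argument exactly: you compare an assignment $\mu$ giving $p,q,r$ to agents $1,2,3$ with the $3$-cycle $\lambda$ giving them $q,r,p$, and you read the cyclic type off the strict (always $2$-to-$1$) majority edge between them. Your case check is also right that an order of type A (agreeing with two of $p\succ q$, $q\succ r$, $r\succ p$) yields $\mu\succ\lambda$; the paper's ``respectively'' attaches the two types to the opposite outcomes, but this slip is harmless because all that matters is that the two types produce opposite edges.
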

\begin{proof}
    Consider any $\mu$ with $\mu(1)= p$, $\mu(2)= q$, $\mu(3)= r$, and $\lambda$ which differs from $\mu$ only in $\lambda(1)= q$, $\lambda(2)= r$, $\lambda(3)= p$. Then, it holds that $\mu\mpref \lambda$ or $\lambda \mpref \mu$, depending on whether the preferences are of type $p\succ'_{cyc} r, r\succ'_{cyc} q, q\succ'_{cyc} p$ or $p\succ_{cyc} q, q\succ_{cyc} r, r\succ_{cyc} p$, respectively.
\end{proof}

For each three distinct components, we can take $p\in H_i$, $q\in H_j$, $r\in H_\ell$ and apply \Cref{lem:cycle_type}. Since each component is contiguous in all preference relations, this gives us the ``cycle type'' of how the three components are ordered in the entire profile. Now, one can prove that any cyclic permutation of the components does not change the ``cycle type'' of any three distinct components. However, any non-cyclic permutation of the components yields three distinct components $H_i,H_j,H_r$ such that their cycle type is changed. By choosing $p\in H_i, q\in H_j,r\in H_r$, we can use the assignment in \Cref{lem:cycle_type} to obtain $\mu$ and $\lambda $ for which the majority comparison does not coincide with the respective one in $G_{P^*}$.

\section{Proof of Theorem \ref{thm:Structure_TC}}\label{app:TC}
In this section, we present the proof of \Cref{thm:Structure_TC} in two steps. First, we introduce Pareto-pessimality and the bottom cycle, which are dual concepts to Pareto-optimality and the top cycle. 
Then, we prove that all Pareto-optimal assignments are contained in the top cycle and characterize the cases in which the top cycle is of size at most two. We obtain analogous statements for the bottom cycle.
Then, in step two, we show that the top cycle and bottom cycle coincide whenever they are not of size one or two. This requires a proof by induction. We prove the base case for five agents by computer, whereas the induction step is shown by hand.

For the proof of \Cref{thm:Structure_TC}, we have to consider concepts for assignments that are particularly ``bad''.
Given a profile $P$, by $P^{-1}$, we denote the profile where all agent preferences $\pref_x$ are inverted to $\pref^{-1}_x$, i.e., $p\pref_x q$ if and only if $q \pref^{-1}_x p$. The majority relation induced by $P^{-1}$ is precisely the inverse of the majority relation $\mprefsim$ of $P$, and we hence denote it by $\mprefsim^{-1}$.

As a dual concept to a serial dictatorship, a serial antidictatorship works exactly the other way around.
The agents pick their \emph{least} preferred houses that are still available in order of $\order$.

Analogously to the top cycle, the bottom cycle consists of all assignments that are minimal elements in the transitive closure of the weak dominance relation.
Formally, $\bc(P)=\left\{\mu\in M_{N,H}\colon\forall\nu:\nu\mprefsimt\mu\right\}$. An assignment is \textit{Pareto-pessimal} if it does not Pareto-dominate any other assignment.
By $\pso{}(P) = \po(P^{-1})$, we denote the set of all Pareto-pessimal assignments in $P$. It holds that an assignment is Pareto-pessimal if and only if it can be obtained as a serial antidictatorship.

\begin{lemma}\label{lem:dualTCBC}
     For all profiles $P$, we have $\tc(P) = \bc(P^{-1})$ and  $\bc (P)=\bc ({(P^{-1})}^{-1}) = \tc(P^{-1})$.
\end{lemma}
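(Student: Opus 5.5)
The plan is to reduce everything to the observation already stated in the paper: the majority relation induced by $P^{-1}$ is exactly the inverse of the majority relation $\mprefsim$ induced by $P$.

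\emph{Inverse relations.} For every agent $x$ and assignments $\mu,\lambda$, we have $\mu\succeq_x\lambda$ in $P$ if and only if $\lambda\succeq_x\mu$ in $P^{-1}$. Indeed, $\mu(x)=\lambda(x)$ is unaffected by inversion, and $\mu(x)\pref_x\lambda(x)$ holds exactly when $\lambda(x)\pref^{-1}_x\mu(x)$. Writing $N^{P}_{\mu,\lambda}$ for the set $N_{\mu,\lambda}$ computed in profile $P$, this gives $N^{P}_{\mu,\lambda}=N^{P^{-1}}_{\lambda,\mu}$. Hence $\mu\mprefsim\lambda$ in $P$ if and only if $\lambda\mprefsim\mu$ in $P^{-1}$.

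\emph{Passing to transitive closures.} A path $\mu=\nu_0,\nu_1,\dots,\nu_k=\lambda$ of weak majority edges in $G_P$ corresponds, edge by edge, to the reversed path $\lambda=\nu_k,\dots,\nu_0=\mu$ in $G_{P^{-1}}$. Therefore $\mu\mprefsimt\lambda$ in $P$ if and only if $\lambda\mprefsimt\mu$ in $P^{-1}$.

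\emph{First identity.} Unfolding definitions, $\mu\in\tc(P)$ means that $\mu\mprefsimt\lambda$ in $P$ for all $\lambda\in M$. By the previous step this is equivalent to $\lambda\mprefsimt\mu$ in $P^{-1}$ for all $\lambda$, which is exactly the statement $\mu\in\bc(P^{-1})$.

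\emph{Second identity.} Inversion of a linear order is an involution, so $(P^{-1})^{-1}=P$; this yields the trivial equality $\bc(P)=\bc((P^{-1})^{-1})$. Applying the first identity to the profile $P^{-1}$ gives $\tc(P^{-1})=\bc((P^{-1})^{-1})$, which completes the chain.

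There is no real obstacle. The only care needed is that ties, meaning agents with $\mu(x)=\lambda(x)$, belong to both $N_{\mu,\lambda}$ and $N_{\lambda,\mu}$. Since these agents lie in both sets, they do not disturb the swap $N^{P}_{\mu,\lambda}=N^{P^{-1}}_{\lambda,\mu}$.
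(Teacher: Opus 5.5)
Your proof is correct and takes the same approach as the paper: the majority relation of $P^{-1}$ is the inverse of that of $P$, hence so is its transitive closure, and unfolding the definitions of $\tc$ and $\bc$, together with $(P^{-1})^{-1}=P$, yields both identities. You additionally spell out the agent-level identity $N^{P}_{\mu,\lambda}=N^{P^{-1}}_{\lambda,\mu}$ and the path reversal, which the paper states without detail.
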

\begin{proof}
    	Note that the majority relation over the assignments induced by $P$ is precisely inverse to the majority relation induced by $P^{-1}$, and hence so is the transitive closure over it.
Thus, $\mu\mprefsimt\lambda$ for all assignments $\lambda$ if and only if $\lambda\mathrel{{\left(\mprefsim^{-1}\right)}^*}\mu$ for all assignments $\lambda$.
    	This proves the claim.
\end{proof}

For example, \Cref{lem:dualTCBC} implies that each Pareto-pessimal assignment belongs to the bottom cycle, as Pareto-pessimal assignments of a profile $P$ are Pareto-optimal assignments of $P^{-1}$.

We first characterize all profiles in which  \tc chooses at most two assignments. For this, we first prove \Cref{thm:PO_Subset_TC}. Further, by \Cref{lem:dualTCBC}, this means that $\pso\subseteq\bc$.

\POSubseteqTC*
\begin{proof}
We will prove that, for every profile $P$, each Pareto-optimal assignment can reach each other Pareto-optimal assignment via a path in the majority graph. This shows that $\po(P)\subseteq \tc(P)$ because each assignment that is not Pareto-optimal is Pareto-dominated (and thus also majority dominated) by an assignment in $\po(P)$. To prove this claim, we will rely on the characterization of Pareto-optimal assignments by \citet{AbSo98a}, which states that an assignment $\mu$ is in $\po(P)$ if and only if there is a priority order $\order$ over the agents such that $\mu$ is the outcome of the serial dictatorship for $\sigma$.

    Before showing that all Pareto-optimal assignments are connected by paths in the majority graph, we prove an auxiliary statement: when modifying a picking sequence $\sigma$ by improving the position of the last agent, the outcome of the serial dictatorship for the original sequence weakly majority dominates the outcome of the serial dictatorship for the modified sequence. To make this more formal, fix an order $\sigma=(x_1,\dots, x_n)$ over the agents and let $\sigma'=(x_1,\dots,x_{k-1},x_n,x_{k},\dots,x_{n-1})$ for some $k\in \{1,\dots,n-1\}$. Moreover, let $\mu$ and $\lambda$ denote the assignment picked by the serial dictatorship induced by $\sigma$ and $\sigma'$, respectively. We will show that $\mu\mprefsim \lambda$. 
    
   To prove this, let $X_i$ and $X_i'$ denote the houses that are available when agent $x_i$ gets to pick her house under $\sigma$ and $\sigma'$, respectively. We claim that $X_i\supseteq X_i'$ for all $i\in \{1,\dots, n-1\}$. First, for the agents $x_i\in \{x_1,\dots, x_{k-1}\}$, it holds even that $X_i=X_i'$ because $\order$ and $\order'$ agree on the first $k-1$ agents. Next, we have that $X_{k}=X_{k}'\cup \{\lambda(x_n)\}$ because agent $x_{k}$ gets to pick before $x_n$ in $\sigma$. Now, inductively assume that $X_i\supseteq X_{i}'$ for some $i\in \{k,\dots, n-2\}$. We will show that $X_{i+1}\supseteq X_{i+1}'$. For this, we note that $|X_i|=|X_i'|+1$ because the agents $x_1,\dots, x_{i-1}$ pick before $x_i$ under $\sigma$, whereas $x_n$ additionally gets to choose before $x_i$ under $\sigma'$. By our induction assumption, we thus conclude that there is a single house $p$ such that $X_i=X_i'\cup \{p\}$. Since $\lambda(x_i)$ is agent $x_i$'s favorite house in $X_i'$, this means that agent $x_i$ either picks $p$ or $\lambda(x_i)$ from $X_i$. Consequently, $X_{i+1}=X_i'$ if agent $x_i$ picks $p$, or $X_{i+1}=X_{i+1}'\cup\{p\}$ if she picks $\lambda(x_i)$. In both cases, it holds that $X_{i+1}\supseteq X_{i+1}'$, thus proving the induction step. 

    By the definition of serial dictatorships, the fact that $X_i\supseteq X_i'$ for all $x_i\in \{x_1,\dots, x_{n-1}\}$ implies that $\mu\succeq_{x_i}\lambda$ for all these agents. If one of these agents strictly prefers $\mu$ to $\lambda$, we have that $\mu \mprefsim \lambda$ as only agent $x_n$ may prefer $\lambda$ to $\mu$. On the other hand, if no $x_i\in \{x_1,\dots, x_{n-1}\}$ strictly prefers $\mu$ to $\lambda$, then $\mu(x_i)=\lambda(x_i)$ for all these agents, which implies that $\mu=\lambda$ and thus again $\mu \mprefsim \lambda$.

   Based on our auxiliary claim, we will now complete the proof of this lemma. To this end, we fix two distinct assignments $\mu,\lambda\in\po(P)$ and consider two orders $\order=(x_1,\dots, x_n)$ and $\order'=(x_1',\dots,x_n')$ such that the corresponding serial dictatorship choose $\mu$ and $\lambda$ for $P$. We will iteratively transform $\order$ into $\order'$. To this end, let $i$ denote the smallest index such that $x_i\neq x_i'$ and let $j>i$ denote the index such that $x_j=x_i'$. Now, consider the priority order $\sigma_1$ derived from $\sigma$ by placing the currently last-ranked agent directly before $x_j$, and let $\eta_1$ be the assignment chosen by the corresponding serial dictatorship. By our auxiliary claim, we have $\mu\mprefsim\eta_1$. Further, we can repeat this step until we have a priority order $\sigma_2$ where $x_{i}'=x_j$ is in the last position, and our auxiliary claim shows that $\mu\mprefsimt \eta_2$ for the assignment $\eta_2$ chosen by the corresponding assignment. Next, let $\sigma_3$ be the priority order derived from $\sigma_2$ by moving $x_{i'}$ directly before $x_i$. Again using our auxiliary claim, the assignment $\eta_3$ chosen by the corresponding serial dictatorship satisfies that $\eta_2\mprefsim\eta_3$, so we have $\mu\mprefsimt \eta_3$. Lastly, we observe that the priority orders $\sigma_3$ and $\sigma'$ agree on the first $i$ positions. Hence, by repeating our argument at most $n$ times, we transform $\sigma$ into $\sigma'$ while constructing a path from $\mu$ to $\lambda$ in the majority graph. 
\end{proof}

We next use \Cref{thm:PO_Subset_TC} to analyze cases \textit{(i)} and \textit{(ii)} of \Cref{thm:Structure_TC}, as well as analogous statements regarding the bottom cycle. First, we will study profiles where all agents have distinct top choices (or bottom choices).
\begin{lemma}\label{lem:TC_Cardinality_1}
	$|\tc(P)|=1$ if and only if all agents have distinct top choices.
Analogously, $|\bc(P)|=1$ if and only if all agents have distinct bottom choices.
\end{lemma}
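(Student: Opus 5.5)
We prove the top-cycle statement; the bottom-cycle statement then follows from \Cref{lem:dualTCBC}. Indeed, $\bc(P)=\tc(P^{-1})$, and agents have distinct bottom choices in $P$ exactly when they have distinct top choices in $P^{-1}$.

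For the direction ``distinct top choices $\Rightarrow |\tc(P)|=1$'', let $\mu^*$ be the assignment giving every agent her top choice. We show that $\mu^*$ strictly majority dominates every $\lambda\neq\mu^*$. Every agent $x$ with $\lambda(x)\neq\mu^*(x)$ strictly prefers $\mu^*$, and all other agents are indifferent. Hence $|N_{\mu^*,\lambda}|=n$, while $N_{\lambda,\mu^*}$ contains only the agents with $\lambda(x)=\mu^*(x)$. Since $\lambda\neq\mu^*$, we have $|N_{\lambda,\mu^*}|<n$, so $\mu^*\mpref\lambda$. Thus $\mu^*$ is a strict Condorcet winner. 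Now no $\lambda\neq\mu^*$ can reach $\mu^*$ in the weak majority graph, because any edge $\nu\mprefsim\mu^*$ would require $\nu=\mu^*$. Therefore $\tc(P)=\{\mu^*\}$. This set is nonempty because $\mu^*$ reaches everything in one step.

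For the converse, suppose two agents $x\neq y$ share the same top choice $p$. By \Cref{thm:PO_Subset_TC} it suffices to exhibit two distinct Pareto-optimal assignments. By \citet{AbSo98a}, these can be taken as serial dictatorship outcomes. Let $\mu$ be the outcome of a serial dictatorship in which $x$ picks first, and $\lambda$ the outcome of one in which $y$ picks first. Then $\mu(x)=p$, and since the bijection gives $p$ to only one agent, $\mu(y)\neq p$. On the other hand $\lambda(y)=p$. Hence $\mu\neq\lambda$, both lie in $\po(P)\subseteq\tc(P)$, and $|\tc(P)|\ge 2$.

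I do not expect any real obstacle. The only care needed is to use the pairwise-count definition of $\mpref$ via $N_{\mu,\lambda}$ correctly, and to note that a strict Condorcet winner makes the top cycle a singleton because nothing weakly dominates it. The dual statement is immediate from \Cref{lem:dualTCBC}.
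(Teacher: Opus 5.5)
Your proposal is correct and follows essentially the same route as the paper: the top-choice assignment is a strict Condorcet winner, two serial dictatorships with $x$ respectively $y$ picking first yield two distinct Pareto-optimal (hence top-cycle) assignments by \Cref{thm:PO_Subset_TC}, and the bottom-cycle claim follows from \Cref{lem:dualTCBC}. The only difference is that you spell out why a strict Condorcet winner forces a singleton top cycle (nothing else can weakly dominate it), which the paper leaves implicit.
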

\begin{proof}
    Let $P$ be a profile where all agents have distinct top choices.
    Let $\mu$ be the assignment that assigns to each agent her top choice.
    For all other assignments $\lambda$, there is at least one agent who does not obtain their top choice.
    These agents strictly prefer $\mu$ to $\lambda$, while all other agents weakly prefer $\mu$ to $\lambda$.
    Thus, we have $\mu \pref \lambda$ for all assignments $\lambda\neq \mu$, which implies $\tc(P) = \{\mu\}$.

    We prove the other implication for $\tc$ by contraposition.
    Let $P$ be a profile where two agents $x\neq y$ have the same favorite house $p$.
    By \Cref{thm:PO_Subset_TC}, we know that all assignments obtained via serial dictatorships are contained in the top cycle.
    Since the serial dictatorship where $x$ chooses first and $y$ second yields a different assignment than the serial dictatorship where $y$ chooses first and then $x$ chooses, there are at least two assignments in the top cycle.
    This proves the statement for the top cycle.

    For the bottom cycle, the statement follows from \Cref{lem:dualTCBC}.
\end{proof}

The first part of \Cref{lem:TC_Cardinality_1} follows from existing literature (e.g.,~\citealp[Proposition 7.24.]{Manl13a}).
For the sake of completeness, we provided a proof nevertheless. We further note that 
\Cref{lem:TC_Cardinality_1} directly proves case \textit{(i)} of \Cref{thm:Structure_TC} and will be useful for the proof of case \textit{(iv)}.

Next, we describe all cases in which the top cycle or the bottom cycle has cardinality two.
\begin{lemma}\label{lem:TC_Cardinality_2}
	$|\tc(P)|=2$ if and only if all but two agents have distinct top choices and the two agents who have the same top choice also share the same second-best choice, which is not top-ranked by any~agent.

	Analogously, $|\bc(P)|=2$ if and only if all but two agents have distinct bottom choices and the two agents who have the same bottom choice also share the same second worst choice, which is not bottom-ranked by any agent.
\end{lemma}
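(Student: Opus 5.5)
Both directions go through \Cref{thm:PO_Subset_TC}, which gives $\po(P)\subseteq\tc(P)$, and the bottom-cycle statement then follows from the top-cycle one via \Cref{lem:dualTCBC}.

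\emph{Sufficiency.} Assume the condition holds. Let $x,y$ be the two agents sharing top choice $p$ and second choice $q$, where $q$ is not top-ranked by anyone. Every other agent $z$ has its own distinct top choice $t_z$, and no $t_z$ equals $p$ or $q$.

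Let $\mu$ give $x$ house $p$, give $y$ house $q$, and give each $z$ its top $t_z$. Let $\mu'$ be $\mu$ with the houses of $x$ and $y$ swapped. By \Cref{lem:MajorityQuery_pq}, $\mu\sim\mu'$, since $x$ and $y$ order $p,q$ identically.

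Now take any $\lambda\notin\{\mu,\mu'\}$. We claim $\mu\mpref\lambda$ or $\mu'\mpref\lambda$. Compare with $\mu$.
\begin{itemize}
\item Each agent $z\neq x,y$ weakly prefers $\mu$, strictly whenever $\lambda(z)\neq t_z$.
\item Agent $x$ weakly prefers $\mu$ (it holds its top).
\item Only $y$ can strictly prefer $\lambda$, and only if $\lambda(y)=p$.
\end{itemize}
\emph{Case $\lambda(y)\neq p$.} All agents weakly prefer $\mu$, and $\lambda\neq\mu$, so $\mu\mpref\lambda$.

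\emph{Case $\lambda(y)=p$.} Compare with $\mu'$ instead; now $y$ holds its top $p$ under $\mu'$. Agent $x$ holds $q$ under $\mu'$, so it prefers $\lambda$ only if $\lambda(x)=p$, which is impossible. So all agents weakly prefer $\mu'$, and $\mu'\mpref\lambda$ because $\lambda\neq\mu'$.

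Hence every $\lambda$ outside $\{\mu,\mu'\}$ is strictly dominated by one of them. Both $\mu$ and $\mu'$ are Pareto-optimal, so by \Cref{thm:PO_Subset_TC} both lie in $\tc(P)$. We now show nothing else does. The set $\{\mu,\mu'\}$ dominates every outside assignment. No outside $\lambda$ weakly dominates $\mu$ or $\mu'$, since each is strictly dominated by one of them and the two are tied. So $\{\mu,\mu'\}$ is dominant, and nothing outside it can reach it. Therefore $\tc(P)=\{\mu,\mu'\}$. Because $\mu\neq\mu'$ (indeed $p\neq q$), $|\tc(P)|=2$.

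\emph{Necessity.} It suffices to exhibit three distinct Pareto-optimal assignments, i.e.\ three serial-dictatorship outcomes, whenever the condition fails. If all tops are distinct, \Cref{lem:TC_Cardinality_1} gives size $1$. Otherwise, three cases remain.
\begin{enumerate}
\item[(a)] Some house $p$ is top-ranked by three agents $x,y,z$. Letting $x$, $y$, or $z$ pick first gives three outcomes in which different agents receive $p$.
\item[(b)] Two distinct houses $p$ and $r$ are each top-ranked by two agents, say $x,y$ top-rank $p$ and $u,v$ top-rank $r$. Varying which of $x,y$ picks first, and which of $u,v$ picks first, gives four outcomes with distinct (holder of $p$, holder of $r$) pairs.
\item[(c)] Exactly $x$ and $y$ share top $p$, all other tops are distinct, but the second-choice condition fails. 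Then either $x$ and $y$ have different second choices, or they share a second choice $q$ that is top-ranked by some $w$.
\begin{itemize}
\item \emph{Different second choices $q_x\neq q_y$.} One outcome is $x$ first, then $y$; another is $y$ first, then $x$. For a third, find a sequence in which $x$ gets $p$ but $y$ ends with something other than its second choice: let an agent who top-ranks $y$'s second choice, or $x$ itself if needed, take it first. If no one else top-ranks $q_y$, then let $x$ pick $q_y$ … but $x$ prefers $p$, so instead use $w$-type blocking.
\item \emph{Shared second choice $q$, top-ranked by $w$.} The order $w,x,y$ gives $y$ its third-best house, while the orders $x,y,w$ and $y,x,w$ give $y$ house $q$ or $p$ respectively. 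These are three distinct outcomes.
\end{itemize}
\end{enumerate}

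The main obstacle is sub-case (c) with differing second choices. There, the argument should note that the outcome with $y$ first gives $x$ house $q_x$, while the outcome with $x$ first gives $x$ house $p$. A third outcome comes from letting the owner of some other top pick differently: with $n\ge3$ there is a third agent $z$. Ordering $x,z,y$ versus $x,y,z$ changes the outcome if $z$'s top is $q_y$. Otherwise, use the order $y,x$ versus $x,y$ together with a reorder that lets $x$ take $q_y$ if $q_y\succ_x q_x$ … This needs careful but finite casework.
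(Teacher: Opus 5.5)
Your overall route is the paper's: two explicit assignments for sufficiency, three serial-dictatorship outcomes plus $\po\subseteq\tc$ for necessity, and duality for $\bc$. However, two steps are not actually established.

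\emph{Sufficiency.} You prove that every $\lambda\notin\{\mu,\mu'\}$ is strictly beaten by \emph{at least one} of $\mu,\mu'$. You then conclude that no such $\lambda$ weakly beats either of them ``since the two are tied.'' That inference is invalid, because the majority relation is not transitive. The facts $\mu\succ\lambda$ and $\mu\sim\mu'$ are compatible with $\lambda\succsim\mu'$, and in that case $\lambda$ reaches $\mu'\in\tc(P)$ and is itself in the top cycle. What you need is that \emph{both} $\mu$ and $\mu'$ strictly beat every outside $\lambda$. This is true, but it needs the extra argument the paper gives:
\begin{itemize}
\item If $\lambda(y)\neq p$ and $\lambda(x)=p$, then against $\mu'$ only $x$ prefers $\lambda$. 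Meanwhile $y$ prefers $\mu'$, and so does some agent $z\notin\{x,y\}$ with $\lambda(z)\neq t_z$; such a $z$ exists because $\lambda\notin\{\mu,\mu'\}$.
\item If $\lambda(x)\neq p$, then $\mu'$ Pareto-dominates $\lambda$.
\item The case $\lambda(y)=p$, compared against $\mu$, is symmetric.
\end{itemize}

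\emph{Necessity, case (c) with $q_x\neq q_y$.} You leave this unfinished (``careful but finite casework''), but it only needs a counting observation you are missing. In case (c), $x,y$ are the only two agents with a common top, and all other tops are pairwise distinct and different from $p$. So exactly $n-1$ houses are top-ranked, and exactly one house is top-ranked by nobody. Hence one of $q_x\neq q_y$, say $q_y$, is the top choice of some third agent $w\notin\{x,y\}$. You already used the three orders $(w,x,y,\dots)$, $(x,y,w,\dots)$, $(y,x,w,\dots)$ in the shared-second-choice subcase. Here they give $y$ its third-, second-, and first-ranked house, respectively, so the three outcomes are distinct. Both subcases of (c) therefore collapse into one, which is exactly the paper's Case (iii). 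With these two repairs your proof coincides with the paper's.
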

\begin{proof}
	Let $P$ be a profile where two distinct agents $x^*, y^*$ share the same top choice $p$ and second choice $q$, and all other agents have unique favorite houses in $H\setminus \{p,q\}$.
	Consider the two assignments $\mu_{x^*}$, where all agents but $x^*$ obtain their favorite house and $x^*$ obtains $q$, and $\mu_{y^*}$, where all agents but $y^*$ obtain their favorite houses and $y^*$ obtains $q$.
	Clearly, $\mu_{x^*} \sim \mu_{y^*}$.
    Further, we note that in every other assignment $\lambda\in M\setminus \{\mu_{x^*},\mu_{y^*}\}$, at least one agent among $x^*$ and $y^*$ cannot obtain their favorite house.
	Without loss of generality, let that agent be $x^*$.
	Then, every agent weakly prefers $\mu_{x^*}$ to $\lambda$.
    Moreover, since $\lambda\not\in\{\mu_{x^*}, \mu_{y^*}\}$, there is at least one agent $x\in N\setminus \{x^*, y^*\}$ with $\lambda(x)\neq \mu_{x^*}(x)$. Indeed, if $\lambda$ would agree with $\mu_{x^*}$ on all agents in $N\setminus \{x^*, y^*\}$, then either $x^*$ gets $p$ and $y^*$ gets $q$ and $\lambda=\mu_{y^*}$, or $x^*$ gets $q$ and $y^*$ gets $p$ and $\lambda=\mu_{x^*}$.
	This agent $x$ strictly prefers $\mu_{x^*}$ to $\lambda$, so  $\mu_{x^*}\mpref\lambda$.
	Further, we compare $\lambda$ to $\mu_{y^*}$.
    If $\lambda(y^*)\neq p$, we have that $\mu_{y^*}\succeq_x \lambda$ for all $x\in N$, so we immediately get that $\mu_{y^*}\mpref \lambda$. On the other hand, if $\lambda(y^*)= p$, then $\mu_{y^*}(x^*)=p\succ_{x^*} \lambda(x^*)$. Further, since there is another agent $x\not \in \{x^*, y^*\}$ with $\mu_{y^*}\succ_x \lambda$, there are at least two agents strictly preferring $\mu_{y^*}$ to $\lambda$, proving that $\mu_{y^*} \mpref \lambda$.
	This concludes the proof that $\tc(P)$ has cardinality two.

    For the other direction of the $\tc$ statement, let $P$ not be of the above form.
    Then, we distinguish between four cases.
    \begin{enumerate}[label=(\roman*)]
        \item The top choices of the agents are all pairwise different.
        \item There are three agents who share the same top choice.
        \item All agents but $x^*$ and $y^*$ have disjoint top choices, but agents $ x^*$ and $y^*$ have different second choices.
        \item There are two pairs of agents $x,y$ and $z,w$ with coinciding top choices.
    \end{enumerate}
    In Case (i), \Cref{lem:TC_Cardinality_1} directly implies that the cardinality of $\tc$ is not equal to two.
    In Case (ii), we consider three serial dictatorships, where one of these agents chooses their house first, respectively.
    Since the three serial dictatorships yield different outcomes, we obtain by \Cref{thm:PO_Subset_TC} that the top cycle contains at least three assignments.
    In Case (iii), we again consider three serial dictatorships.
    By the pigeonhole principle, the second choice of one of the two agents $x^*$, $y^*$, coincides with the top choice of some agent $z$.
    Without loss of generality, we assume that the favorite house of agent $z$ is the second-ranked house of agent $y^*$.
    In the first serial dictatorship, $y^*$ chooses before $x^*$ before $z^*$.
    In the second, the roles of $x^*$ and $y^*$ are reversed.
    In the third, $z$ chooses first, $x^*$ second, and $y^*$ third.
    The resulting three assignments do not coincide as agent $y^*$ obtains her first-ranked (resp. second-ranked or third-ranked) house in the first (resp. second or third) serial dictatorship. 
    Hence, in Case (iii), there are at least three assignments in the top cycle by \Cref{thm:PO_Subset_TC}.
    Finally, in Case (iv), we can consider all serial dictatorships that vary the picking order between agents $x$, $y$, $z$, and $w$.
    This results in at least four assignments that are serial dictatorships and are, therefore, contained in the top cycle.
    This concludes the proof of the top cycle statement.

    Note that the bottom cycle statement follows from \Cref{lem:dualTCBC} and the statement about the top cycle.
\end{proof}

\Cref{lem:TC_Cardinality_2} directly proves case \textit{(ii)} of \Cref{thm:Structure_TC}. It will further be helpful in proving case \textit{(iii)}.

Lastly, we turn to the cases where $|\tc(P)|>2$. To this end, we first analyze the setting with $n=5$ agents and houses in more detail. In particular, since there is a finite number of profiles when $n=5$, we can simply iterate over all profiles with a computer to infer the following claim. 

\begin{fact}\label{fact:n5}
	For any profile $P$ with $n = 5$, the following statements hold.
\begin{enumerate}
        \item $\lvert \tc(P) \rvert > 2 \iff M
\setminus\pso(P)\subseteq\tc(P)$.
        \item $\lvert \bc(P) \rvert > 2
\iff M
\setminus\po(P)\subseteq\bc(P)$.
        \item $\lvert \tc(P) \rvert > 2$ and $\lvert \bc(P) \rvert > 2 \iff \tc(P) = M
$
        \item If $\lvert \tc(P) \rvert \le 2$ and $ \lvert \bc(P) \rvert  \le 2 $, then for all $\mu, \lambda\in M
\setminus (\tc(P)\cup \bc(P))$ we have $\mu \mprefsimt \lambda$.
    \end{enumerate}
\end{fact}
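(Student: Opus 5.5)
The statement concerns only $n=5$, where there are finitely many profiles. I will therefore establish it by exhaustive computer verification rather than by a structural argument. It remains to make the enumeration feasible and to check each of the four items correctly.

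\textbf{Symmetry reduction.} All objects in the statement are invariant under relabeling agents and under relabeling houses: the majority graph up to isomorphism, $\tc$, $\bc$, $\po$, $\pso$, and the relation $\mprefsimt$. This holds because each is defined purely from the majority relation or from Pareto dominance, and both are symmetric. So it suffices to check one representative per orbit. Concretely, I fix agent~$1$'s ranking to $a,b,c,d,e$, using house relabeling. I then require the rankings of agents $2,\dots,5$ to be lexicographically non-decreasing, using agent relabeling among them. This leaves roughly $9$ million profiles, the same count used in the experiments of the paper.

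\textbf{Reducing the items to check.} By \Cref{lem:dualTCBC} we have $\bc(P)=\tc(P^{-1})$. By definition $\pso(P)=\po(P^{-1})$, and dually $\po(P)=\pso(P^{-1})$. Hence item~2 for $P$ is exactly item~1 for $P^{-1}$. Since the enumeration is closed under inversion up to symmetry, checking items~1, 3 and~4 over all representatives suffices. I will nevertheless check item~2 directly as a sanity check.

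\textbf{Per-profile computation.} For each profile I will do the following.
\begin{itemize}
\item Precompute each agent's rank of each house.
\item Build the $120\times120$ weak-majority adjacency matrix by counting, for each ordered pair of assignments, the agents who strictly prefer each side.
\item Compute the strongly connected components of this graph with Tarjan's algorithm. $\tc(P)$ is the unique source component of the condensation, and $\bc(P)$ is the unique sink component.
\item Compute $\po(P)$ as the set of outcomes of the $120$ serial dictatorships (using \citet{AbSo98a}), and $\pso(P)$ as the outcomes of the serial antidictatorships.
\end{itemize}

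\textbf{Checking the four items.}
\begin{itemize}
\item Items~1 and~2: test the two set inclusions against the cardinality conditions.
\item Item~3: test whether $\tc(P)=M$, i.e.\ whether the whole graph is a single component, and compare with the conjunction $|\tc(P)|>2$ and $|\bc(P)|>2$.
\item Item~4: whenever $|\tc(P)|,|\bc(P)|\le 2$, verify that all assignments of $M\setminus(\tc(P)\cup\bc(P))$ lie in one strongly connected component of the full graph. Paths may pass through any vertex, so using the global component decomposition is correct.
\end{itemize}
The lemmas on cardinalities $1$ and $2$ (\Cref{lem:TC_Cardinality_1} and \Cref{lem:TC_Cardinality_2}) also serve as consistency checks on the implementation.

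\textbf{Main obstacle.} The hard part is correctness and running time of the enumeration, not mathematics. The cost is about $9\cdot10^6$ profiles times $1.4\cdot10^4$ pairs times $5$ agents, roughly $10^{12}$ elementary operations. This is feasible in compiled code over a few days, and parallelization is trivial. To guard against errors in the symmetry reduction, I would first rerun the whole check without reduction for $n=3,4$. There I would confirm the exceptional top-cycle sizes $n!-2$ for $n=3$ and $n!-3$ for $n=4$ reported in the remarks after \Cref{thm:Structure_TC}. I would also verify on a random sample of unreduced $n=5$ profiles that the results agree with those of their canonical representatives.
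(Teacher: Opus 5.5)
Your proposal matches the paper's proof: the paper also establishes this fact by brute-force enumeration of the roughly nine million profiles for $n=5$ up to symmetry, finishing in about two days on a household computer. Your choice of representatives (agent~1's ranking fixed, agents $2$--$5$ sorted lexicographically, giving $\binom{123}{4}=9{,}078{,}630$ profiles) is the paper's count, and your implementation details are sound: $\tc$ and $\bc$ as the source and sink strongly connected components of the complete weak-majority graph, item~2 obtained from item~1 via $P^{-1}$, and item~4 checked as $M\setminus(\tc(P)\cup\bc(P))$ lying in a single component.
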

\begin{proof}
	For $n=5$, the number of possible instances is manageable: up to symmetry, there are about nine million preference profiles.
We are able to verify \Cref{fact:n5} by brute forcing through all possible profiles (up to symmetry) with the help of a household computer in two days. The code can be found on Zenodo \citep{Schl26a}.
\end{proof}

\Cref{fact:n5} implies a very helpful lemma for general $n$, because we can always restrict attention to $5$ agents and their houses. In more detail, in the subsequent proofs, we will often focus on restricted sets of agents and introduce some notation to this end. First, recall that every set of agents $N$ and every set of houses $H$ with $|N|=|H|$ induce a corresponding set of assignments $M$. Now, let $M_{N',H'}$ denote the set of all assignments on the agent set $N'\subseteq N$ and house set $H'\subseteq H$. Some assignments $\mu\in M$ also obtain a corresponding restriction $\mu_{N',H'}$, with domain $N'$ and image $H'$. More formally, $\mu_{N',H'}$ is a function from $N'$ to $H'$ that is derived from $\mu$ by setting $\mu_{N',H'}(x)=\mu(x)$ for all $x\in N'$. Note that we only use this notation if $|N'|=|H'|$ and $\mu(N')=H'$, so the restricted assignment is well-defined on $N'$ and $H'$.
Similarly, for a profile $P$, agent set $N$, and house set $H$, let $P_{N',H'}$ denote the profile restricted to agent set $N'$ and house set $H'$.
The restricted profile $P_{N',H'}$ induces a majority relation on $M_{N',H'}$. This relation is denoted via $\mprefsim_{N',H'}$. Note that we will only use this notation and compare two assignments with respect to $ \mprefsim_{N',H'}$ if they both belong to $M_{N',H'}$.

Using this notation, we next present a lemma that demonstrates how we can use \Cref{fact:n5} for larger values of $n$.

\begin{lemma}\label{lem:non-isolation}
    Let $\mu,\lambda\in M$ such that for some $N'$ of size $5$, we have $\mu(N') = \lambda(N') =: H'$. 
	If $\mu_{N',H'}\not\in \tc(P_{N',H'})$ or $|\tc(P_{N',H'})|>2$, and $\lambda_{N',H'}\not\in \bc(P_{N',H'})$ or $|\bc(P_{N',H'})|>2$, then $\lambda_{N',H'} \mprefsimt_{N',H'}  \mu_{N',H'}$. Moreover, if $\mu(x)=\lambda(x)$ for all $x\in N\setminus N'$, we have $ \lambda\mprefsimt \mu$.
\end{lemma}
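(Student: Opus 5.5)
The plan splits into two parts: the restricted five-agent claim, then the lift to $N$. For the restricted statement, abbreviate $\mu' = \mu_{N',H'}$, $\lambda' = \lambda_{N',H'}$ and $P' = P_{N',H'}$. The profile $P'$ has exactly five agents and houses, so \Cref{fact:n5} applies to it. We distinguish cases according to $|\tc(P')|$ and $|\bc(P')|$.

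\emph{Case 1: $|\tc(P')|>2$ and $|\bc(P')|>2$.} Part~3 of \Cref{fact:n5} gives $\tc(P')=M_{N',H'}$. Hence every assignment, in particular $\lambda'$, reaches every other, so $\lambda' \mprefsimt_{N',H'} \mu'$.

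\emph{Case 2: $|\tc(P')|>2$ and $|\bc(P')|\le 2$.} The hypothesis then forces $\lambda'\notin\bc(P')$. By \Cref{lem:dualTCBC} and \Cref{thm:PO_Subset_TC} applied to $P'^{-1}$, we have $\pso(P')\subseteq \bc(P')$, so $\lambda'\notin\pso(P')$. Part~1 of \Cref{fact:n5} then yields $\lambda'\in\tc(P')$, and hence $\lambda'$ reaches $\mu'$.

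\emph{Case 3: $|\tc(P')|\le 2$ and $|\bc(P')|>2$.} This is symmetric to Case~2. Here $\mu'\notin\tc(P')\supseteq\po(P')$, so part~2 gives $\mu'\in\bc(P')$. Thus every assignment, in particular $\lambda'$, reaches $\mu'$.

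\emph{Case 4: both cycles have size at most two.} Then $\mu'\notin\tc(P')$ and $\lambda'\notin\bc(P')$. If $\mu'\notin\bc(P')$ and $\lambda'\notin\tc(P')$, part~4 of \Cref{fact:n5} gives the path directly. If $\lambda'\in\tc(P')$, it reaches everything and we are done. If $\mu'\in\bc(P')$, everything reaches it and we are done. These three subcases are exhaustive, so Case~4 holds.

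For the lifting step, assume $\mu$ and $\lambda$ agree outside $N'$. Take the path $\lambda'=\nu'_0 \mprefsim_{N',H'} \nu'_1 \mprefsim_{N',H'} \dots \mprefsim_{N',H'} \nu'_m=\mu'$. Extend each $\nu'_i$ to $\nu_i\in M$ by setting $\nu_i(x)=\mu(x)=\lambda(x)$ for $x\notin N'$. These extensions are bijections, because $\mu(N\setminus N')=H\setminus H'$. The agents outside $N'$ are indifferent between consecutive $\nu_i$, so $N_{\nu_i,\nu_{i+1}}$ and $N_{\nu_{i+1},\nu_i}$ differ from their restricted counterparts only by the same set of indifferent agents. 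Consequently the comparison $\nu_i\mprefsim\nu_{i+1}$ coincides with the restricted one. Chaining these comparisons gives $\lambda\mprefsimt\mu$.

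The only delicate point is the case bookkeeping, which relies on \Cref{fact:n5} and on the inclusions $\po\subseteq\tc$ and $\pso\subseteq\bc$. The lifting step is routine, since indifferent agents do not affect the weak majority comparison.
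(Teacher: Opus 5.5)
Your proof is correct and follows the paper's argument essentially verbatim. You use the same four-way case split on whether $|\tc(P_{N',H'})|$ and $|\bc(P_{N',H'})|$ exceed two, resolved with the corresponding claims of \Cref{fact:n5} together with $\po\subseteq\tc$ and $\pso\subseteq\bc$, and the same lifting observation that agents outside $N'$ are indifferent and so do not affect the weak majority comparisons; you merely treat the two mixed cases in the opposite order.
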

\begin{proof}
    Fix a set of agents $N'$, a set of houses $H'$, and two assignments $\mu$ and $\lambda$ that satisfy the conditions of the lemma.
    First, if $|\tc(P_{N',H'})|>2$ and $|\bc(P_{N',H'})|>2$, then $\tc(P_{N',H'})=M_{N',H'}$ by Claim (3) of \Cref{fact:n5} and $\lambda_{N',H'} \mprefsimt_{N',H'} \mu_{N',H'}$ because each assignment in the top cycle has a path to every other assignment. Second, if $|\tc(P_{N',H'})|\leq 2$ and $|\bc(P_{N',H'})|>2$, it follows that $\tc(P_{N',H'})\neq \bc(P_{N',H'})$. Further, we know that $\po(P_{N',H'})\subseteq \tc(P_{N',H'})$ by \Cref{thm:PO_Subset_TC} and that $M_{N',H'}\setminus \po(P_{N',H'})\subseteq \bc(P_{N',H'})$ by Claim (2) of \Cref{fact:n5}. Because $\mu_{N',H'}\not\in\tc(P_{N',H'})$ if $|\tc(P_{N',H'})|\leq 2$, this means that $\mu_{N',H'}\in \bc(P_{N',H'})$ and thus $\lambda_{N',H'} \mprefsimt_{N',H'}  \mu_{N',H'}$. Thirdly, assume that $|\tc(P_{N',H'})|> 2$ and $|\bc(P_{N',H'})|\leq 2$. In this case, we get that $M_{N',H'}\setminus \pso(P_{N',H'})\subseteq \tc(P_{N',H'})$ by Claim (1) of \Cref{fact:n5} and $\pso(P_{N',H'})\subseteq \bc(P_{N',H'})$ by the dual of \Cref{thm:PO_Subset_TC}. Since $\lambda_{N',H'}\not\in\bc(P_{N',H'})$ by assumption, this means that $\lambda_{N',H'}\in \tc(P_{N',H'})$ and our statement follows. Lastly, if both $|\tc(P_{N',H'})|\leq 2$ and $|\bc(P_{N',H'})|\leq2$, we have that $\mu_{N',H'}\in M_{N',H'}\setminus \tc(P_{N',H'})$ and $\lambda_{N',H'}\in M_{N',H'}\setminus \bc(P_{N',H'})$. If $\mu_{N',H'}\in \bc(P_{N',H'})$, $\lambda_{N',H'} \mprefsimt_{N',H'}  \mu_{N',H'}$ follows by definition of the bottom cycle. If $\lambda_{N',H'}\in \tc(P_{N',H'})$, the statement follows by definition of the top cycle. Finally, if $\mu_{N',H'}\in M_{N',H'}\setminus (\tc(P_{N',H'})\cup \bc(P_{N',H'}))$ and $\lambda_{N',H'}\in M_{N',H'}\setminus (\tc(P_{N',H'})\cup \bc(P_{N',H'}))$, the statement follows by Claim (4) of \Cref{fact:n5}.

    In order to extend the argument to the unrestricted assignments $\mu$ and $\lambda$, it suffices to note that the agents in $N\setminus N'$ do not affect the majority comparison between any pair of assignments that only differ in the houses of the agents in $N'$.
\end{proof}

Using \Cref{fact:n5} and \Cref{lem:non-isolation}, we analyze in several steps the outcome of the top cycle when $|\tc(P)|>2$. First, we show that if $|\tc(P)|>2$ for some profile $P$, then there is an agent $x$ and a house $p$ such that $x$ prefers $p$ the most and every assignment in which $x$ gets $p$ is in $\tc(P)$.

\begin{lemma}\label{lem:onetop}
    For every profile $P$ with $n\geq 5$ and $|\tc(P)|>2$, there is an agent $x$ and a house $p$ such that $x$ prefers $p$ the most and every assignment $\mu$ with $\mu(x)=p$ is in $\tc(P)$. 
\end{lemma}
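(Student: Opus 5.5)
By \Cref{lem:TC_Cardinality_1} and \Cref{lem:TC_Cardinality_2}, the assumption $|\tc(P)|>2$ means the top choices are neither all distinct nor of the special form in case \textit{(ii)}. This leaves three shapes, which I will treat separately, as in the proof sketch of \Cref{thm:Structure_TC}:
(1) some house $p$ is top-ranked by at least three agents;
(2) two houses $p\neq q$ are each top-ranked by (at least) two agents;
(3) exactly two agents $x,y$ top-rank $p$, one of them (say $y$) second-ranks a house $q$, and a third agent $z$ top-ranks $q$.
Case (3) covers the remaining situation: $x$ and $y$ have distinct second choices, so by pigeonhole at least one of them second-ranks some other agent's top choice. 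In each case I will pick $x^*$ among the agents top-ranking $p$.

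In each case I will first fix a seed assignment $\mu_0$ with $\mu_0(x^*)=p$ that lies in $\tc(P)$. This comes from \Cref{thm:PO_Subset_TC} applied to a serial dictatorship in which $x^*$ picks first. I then show that the set $T=\{\lambda\colon \lambda(x^*)=p,\ \lambda\mprefsimt\mu_0\}$ is closed under swapping the houses of any two agents $u,v\in N\setminus\{x^*\}$. Since transpositions generate all permutations of the remaining $n-1$ houses among the remaining agents, $T$ then contains every assignment giving $p$ to $x^*$. Each such assignment reaches $\mu_0\in\tc(P)$ and hence lies in $\tc(P)$.

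For the swap step, let $\lambda\in T$ and let $\lambda'$ be $\lambda$ with the houses of $u$ and $v$ exchanged. I choose a five-agent set $N'$ containing $x^*$, $u$, $v$ and two further agents chosen so that the restricted profile $P_{N',H'}$, with $H'=\lambda(N')$, still satisfies $|\tc(P_{N',H'})|>2$. In case (1) this means including the other two agents who top-rank $p$. Since $p\in H'$ and these agents still top-rank $p$ in the restriction, \Cref{lem:TC_Cardinality_2} applied to the restriction gives $|\tc|>2$. It remains to check the bottom-cycle hypothesis of \Cref{lem:non-isolation} for $\lambda'$. Here $\lambda'_{N',H'}$ gives $x^*$ her restricted top house $p$, so it is not Pareto-pessimal in the restriction unless $x^*$ is indifferent, which is impossible. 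By Claim (2) of \Cref{fact:n5} and \Cref{lem:TC_Cardinality_2}, if $|\bc(P_{N',H'})|\le2$ then $\bc$ consists of one or two specific assignments in which two agents share a bottom choice. I would argue that an assignment giving $x^*$ her top house $p$ cannot be among them unless $p$ is someone's bottom choice in the restriction. I would sidestep that possibility by choosing the two extra agents appropriately, and otherwise fall back on the disjunction $|\bc|>2$. \Cref{lem:non-isolation}, applied with $\mu=\lambda$ and the roles arranged so that $\lambda'\mprefsimt\lambda$, then yields $\lambda'\in T$. When $n=5$ the set $N'$ is all of $N$, and \Cref{fact:n5} applies directly.

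Cases (2) and (3) follow the same template, but the choice of $N'$ must keep both coinciding-top pairs, respectively the $x,y,z$ configuration, inside the restriction. The obstacle is that $u,v$ plus the required witnesses may exceed five agents: in case (2) the witnesses are $x^*$, its partner, and the two agents sharing $q$, so adding $u,v$ can give six. I expect this to be the main difficulty. My first attempt is to route the transposition through intermediate swaps involving witness agents. Alternatively, I would note that when $u$ or $v$ is itself a witness, fewer extra agents are needed. When $\{u,v\}$ is disjoint from the witnesses, I would use a three-cycle decomposition, or a witness set of size three that still forces $|\tc|>2$ in the restriction, for example three agents whose top and second choices create three distinct serial dictatorship outcomes as in Case (iii) of \Cref{lem:TC_Cardinality_2}. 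A further subtlety is that $H'$ must contain the relevant top houses, say $q$, which depends on $\lambda$. I would handle this by first moving $q$ into the chosen window using an earlier swap in the chain.
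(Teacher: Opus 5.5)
Your route is the paper's: the same three-way case split, a seed $\mu_0\in\tc(P)$ from a serial dictatorship in which $x^*$ picks first, and a chain of single swaps, each justified by \Cref{lem:non-isolation} on a five-agent window whose restricted profile has $|\tc|>2$. Case (1) goes through as you describe, modulo the bottom-cycle point below. The gap is in cases (2) and (3). There you name the obstacle but only list candidate remedies, and one of them, the three-witness window, fails in general. Take case (2) with $x,y$ top-ranking $p$ and $z$ top-ranking $q$. Suppose the two remaining window agents have distinct restricted top choices outside $\{p,q\}$, and $x$ and $y$ both rank the fifth house of $H'$ second. Then the restriction has the form of case (ii) of \Cref{lem:TC_Cardinality_2}, so $|\tc(P_{N',H'})|=2$ and \Cref{lem:non-isolation} gives nothing.

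What is needed is your first remedy made concrete, together with a way to keep $q$ inside every window; freezing $q$ does this, rather than moving it in anew each time. Write $x=x^*$, $y$ for the other agent top-ranking $p$, and $z$ (and $w$ in case (2)) for the agent(s) top-ranking $q$. First, take the seed from a serial dictatorship in which $x$ picks first and $z$ second, so $\mu_0(x)=p$ and $\mu_0(z)=q$. Second, aim not at $\lambda$ but at the assignment $\lambda'$ obtained from $\lambda$ by swapping $z$ with the agent $z'$ holding $q$. Third, use only transpositions of agents in $N\setminus\{x,z\}$. Then $x$ keeps $p$ and $z$ keeps $q$ along the whole chain, so every window containing $x$ and $z$ contains both houses. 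In case (3) the window $\{x,y,z,u,v\}$ (filled up if $y\in\{u,v\}$) then already works. In case (2), if $\{u,v\}=\{y,w\}$, use $\{x,y,z,w\}$ plus any fifth agent. Otherwise, say $y\notin\{u,v\}$, realize the swap as $(u\,v)=(y\,u)(y\,v)(y\,u)$. Each factor moves only one agent outside $\{x,y,z,w\}$ and thus lives in a window $\{x,y,z,w,\cdot\}$. A final application of \Cref{lem:non-isolation} on a window containing $x,y,z,z'$ (and $w$) gives $\lambda\mprefsimt\lambda'$.

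Separately, your justification of the bottom-cycle hypothesis is wrong, though its conclusion holds. Receiving one's top house does not preclude Pareto-pessimality. If all five window agents top-rank $p$, the serial antidictatorship in which $x^*$ picks last gives $x^*$ house $p$. The correct argument has no exceptional case to sidestep, which is fortunate, since the extra agents are forced in case (1). Suppose $|\bc(P_{N',H'})|\le 2$. The bottom-cycle parts of \Cref{lem:TC_Cardinality_1,lem:TC_Cardinality_2} and their proofs show that every assignment in the bottom cycle gives each agent one of her two worst houses in $H'$. Since $|H'|=5$, this excludes any assignment giving $x^*$ her top house $p$.
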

\begin{proof}
    Fix a profile $P$ with $n\geq 5$ agents and $|\tc(P)|>2$. By \Cref{lem:TC_Cardinality_1,lem:TC_Cardinality_2}, this means that $P$ does \emph{not} satisfy that 
    \begin{enumerate}[label=(\roman*)]
        \item all agents have different top choices, or
        \item all but two agents have different top choices, and the remaining two agents agree on their best and second-best house, which are both not top-ranked by any other agent.
    \end{enumerate}
    Conversely, this means that $P$ satisfies (at least) one of the following conditions: 
    \begin{enumerate}[label=(\arabic*)]
    \item There is a house that is top-ranked by at least three agents.
    \item There are at least two houses that are top-ranked by two agents each.
    \item Two agents have the same top choice, but at least one of them second-ranks a house that is top-ranked by a third agent.
\end{enumerate}
    Clearly, none of these three cases is compatible with the cases where $|\tc(P)|\leq 2$. To see that this list is indeed exhaustive, assume that $P$ does not satisfy any of our criteria. By cases (1) and (2), no house is top-ranked by three or more agents and there is at most one house that is top-ranked by two agents. If no such house exists, we are in case (i) and $|\tc(P)|=1$. Hence, assume that there is a house $p$ that is top-ranked by exactly two agents. Since case (3) does not apply, we know that neither of the two agents $x$ and$ y$ who top-rank $p$ second-ranks a house that is top-ranked by another agent. Because no other house is top-ranked by two agents, this means that both $x$ and $y$ second-rank the single house $q$ that is not top-ranked by any agent. Hence, we are in case (ii) and $|\tc(P)|=2$. 

    Now, to prove our lemma, we will continue with a case distinction regarding our three cases.\medskip

    \textbf{Case (1)}:
    First, assume that there are three agents $x,y,z$ that top-rank the same house $p$. Further, let $\lambda$ denote an arbitrary assignment with $\lambda(x)=p$. We will show that $\lambda\in\tc(P)$. To this end, let $\mu$ denote the assignment obtained from any serial dictatorship where $x$ picks first. By this definition, we have that $\mu(x)=p$ and, moreover, $\mu\in \tc(P)$ because $\mu$ is Pareto-optimal and $\po(P)\subseteq \tc(P)$ by \Cref{thm:PO_Subset_TC}. 
    
    Next, pick any agents $u, v\in N\setminus \{x\}$, $u \neq v$,  such that $\mu(v)=\lambda(u)$. Since $\mu(x)=\lambda(x)$, such agents exist unless $\mu=\lambda$ (in which case we are done as $\mu\in \tc(P)$). 
    Let $\mu'$ denote the assignment obtained from $\mu$ by swapping the houses of agents $u$ and $v$. We aim to show that $\mu'\mprefsimt \mu$ and thus $\mu'\in\tc(P)$. To this end, let $N'=\{x,y,z,u,v\}$ and let $H'=\mu(N')$ denote the houses these agents obtain under $\mu$. Further, if $\{u,v\}\cap \{y,z\}\neq\emptyset$, we fill up $N'$ with arbitrary agents so that $|N'|=5$ and $H'$ with the houses of these agents under $\mu$. Now, let us consider the instance $P_{N',H'}$. Since $p\in H'$ and the agents $x,y,z$ top-rank $p$, we obtain from \Cref{lem:TC_Cardinality_1,lem:TC_Cardinality_2} that $|\tc(P_{N',H'})|>2$. Further, if $|\bc(P_{N',H'})|\leq 2$, then $\mu'\not\in \bc(P_{N',H'})$ since agent $x$ still obtains his favorite house $p$. In particular, note here that \Cref{lem:TC_Cardinality_1,lem:TC_Cardinality_2} show that, if $|\bc(P_{N',H'})|\leq 2$, an assignment belongs to the bottom cycle only if every agent gets at best his second-least preferred house. Due to these insights and since $\mu(w)=\mu'(w)$ for all $w\in N\setminus N'$, we derive that $\mu'\mprefsimt \mu$ by \Cref{lem:non-isolation}. 

    Finally, we note that $\mu'$ differs in the assignment of one less agent from $\lambda$ than $\mu$. Hence, by repeating the above argument, we can construct a sequence of assignments $\mu_0=\mu,\dots, \mu_k=\lambda$ such that $\mu_k\mprefsimt \mu_{k-1}\mprefsimt\dots\mprefsimt \mu_0$, which shows that $\lambda=\mu_k\in\tc(P)$.\medskip
   
    \textbf{Case (3)}: We will next deal with our third case (and postpone Case (2) as it requires a slightly more involved argument). Hence, assume that there are two agents $x,y$ that have the same top choice $p$ and that agent $x$ second-ranks the top choice $q$ of a third agent $z$. Just as before, we let $\lambda$ denote an arbitrary assignment with $\lambda(x)=p$ and aim to show that $\lambda\in \tc(P)$. To this end, we denote by $\lambda'$ the assignment derived from $\lambda$ by letting agent $z$ and the agent $z'$ who has house $q$ in $\lambda$ swap their houses, i.e., it holds that $\lambda'(z)=q$. If $\lambda(z)=q$, we simply set $\lambda=\lambda'$. Further, let $\mu$ denote an assignment obtained by a serial dictatorship where $x$ picks first and $z$ second. This means that $\mu(x)=p$ and $\mu(z)=q$. Moreover, we have that $\mu\in\tc(P)$ because $\mu\in\po(P)$.

    Similar to the last case, we next pick two agents $u,v\in N\setminus \{x,z\}$ such that $\mu(v)=\lambda'(u)$ and let $\mu'$ be the assignment derived form $\mu$ by swapping the houses of these two agents. We will again show that $\mu'\mprefsimt \mu$. To this end, let $N'=\{x,y,z,u,v\}$ and $H'=\mu(N')$; as for Case (1), we can fill up $N'$ and $H'$ with an arbitrary agent and their house if $y\in \{u,v\}$. Now, consider the profile $P_{N',H'}$. We have $p,q\in H'$ since $\mu(x)=p$ and $\mu(z)=q$, so the preferences of $x$, $y$, and $z$ suffice to infer that $|\tc(P_{N',H'})|>2$ by \Cref{lem:TC_Cardinality_1,lem:TC_Cardinality_2}. Further, if $|\bc(P_{N',H'})|\leq 2$, we get from these lemmas that $\mu'\not\in \bc(P_{N',H'})$ as agent $x$ obtains his favorite house. Since $\mu(w)=\mu'(w)$ for all $w\in N\setminus N'$, we can apply \Cref{lem:non-isolation} to derive that $\mu'\mprefsimt \mu$. We can iteratively repeat this construction to move from $\mu$ to $\lambda'$, i.e., there is a sequence of assignments $\mu_0=\mu,\dots, \mu_k=\lambda'$ such that $\mu_k\mprefsimt \mu_{k-1}\mprefsimt\dots\mprefsimt \mu_0$. This proves that $\lambda'\in\tc(P)$. 

    Lastly, to move from $\lambda'$ to $\lambda$, let $N'=\{x,y,z,z',u\}$, where $z'$ is the agent with $\lambda(z')=q$ and $u$ is an arbitrary fifth agent. By definition, $\lambda$ and $\lambda'$ only differ in the houses of $z$ and $z'$. Now, we can again apply \Cref{lem:non-isolation} for $N'$ and the houses $H'=\lambda'(N')$ to infer that $\lambda\mprefsimt \lambda'$. In more detail, since $p,q\in H'$ and $x,y,z\in N'$, it follows that $|\tc(P_{N',H'})|>2$. Further, the fact that agent $x$ obtains his favorite house still entails that $\lambda\not\in \bc(P_{N',H'})$ if $|\bc(P_{N',H'})|\leq 2$. Hence, by the same argument as before, we get that $\lambda\mprefsimt \lambda'$ and thus $\lambda\in\tc(P)$.\medskip

\textbf{Case (2)}: Finally, assume that there are four distinct agents $x,y,z,w$ and two houses $p,q$ such that $x$ and $y$ top-rank $p$, and $w$ and $z$ top-rank $q$. Once again, we let $\lambda$ denote an arbitrary assignment with $\lambda(x)=p$ and show that $\lambda\in\tc(P)$. To this end, we define $\lambda'$ as the assignment obtained from $\lambda$ by letting agent $z$ swap his house with the agent $z'$ who has $q$ in $\lambda$. If $z=z'$, we simply set $\lambda'=\lambda$. Further, we let $\mu$ denote an assignment obtained by a serial dictatorship where agent $x$ picks first and agent $z$ second. Hence, we have that $\mu(x)=p= \lambda'(x)$, $\mu(z)=q = \lambda'(z)$, and $\mu\in \tc(P)$. 

    Just as for the previous cases, let $u,v\in N\setminus \{x,z\}$ denote two arbitrary agents such that $\mu(v)=\lambda'(u)$. We will show that $\mu'\mprefsimt \mu$ for the assignment $\mu'$ where agents $u$ and $v$ swapped their houses. If $\{u,v\}=\{y,w\}$, then let $N'=\{x,y,z,w,v'\}$ for an arbitrary fifth agent $v'$ and let $H'=\mu(N')$. Since $p,q\in H'$, the agents $x,y,z,w$ still top-rank these houses in $P_{N',H'}$, thus showing that $|\tc(P_{N',H'})|>2$. Further, since agent $x$ gets his favorite house under $\mu'$, this assignment cannot be in the bottom cycle if $|\bc(P_{N',H'})|\leq 2$. Hence, we infer from \Cref{lem:non-isolation} that $\mu'\mprefsimt\mu$. 
    
    On the other hand, if $\{u,v\}\neq \{y,w\}$, at least one of $y$ and $w$ is not in $\{u,v\}$. Without loss of generality, we assume that $y\not\in \{u,v\}$ and note that this choice will not matter subsequently. In this case, we will break down the swap between $u$ and $v$ into three steps: first $y$ and $u$ exchange their houses, then $y$ and $v$, and then again $y$ and $u$. To this end, let $\mu_1$, $\mu_2$, and $\mu_3$ denote the assignment obtained after the first, second, and third exchange. We note that $\mu_3=\mu'$. We will show that $\mu_3\mprefsimt\mu_2\mprefsimt\mu_1\mprefsimt\mu_0=\mu$. To this end, fix one such assignment $\mu_i$ (with $i\in \{0,1,2\}$) and let $N'=\{x,y,z,w,y'\}$ where $y'=u$ if $i=0$ or $2$ and $y'=v$ if $i=1$, and let $H'$ be the corresponding houses. Further, if $w=y'$, we may add an arbitrary fifth agent and his house, as in the previous cases. Now, it holds that $|\tc(P_{N',H'})|>2$ because $p,q\in H'$ and $x,y$ still top-rank $p$ whereas $z,w$ still top-rank $q$. Further, since $\mu_{i+1}(x)=p$, we again get that $\mu_{i+1}\not\in \bc(P_{N',H'})$ unless $|\bc(P_{N',H'})|>2$. Hence, $\mu_{i+1}\mprefsimt \mu_i$ for all $i\in \{0,1,2\}$ by \Cref{lem:non-isolation} and thus $\mu'\mprefsimt \mu$.

    Now, just as for the previous cases, we can repeatedly use this construction to prove that $\lambda'\mprefsimt \mu$. To further prove $\lambda \mprefsimt \lambda'$, recall that $z'$ is the agent with $\lambda(z')=q$.
    In the case that $z'=z$, we defined $\lambda' = \lambda$ and there is nothing to show. If $z\neq z'$, we recall that $\lambda$ arises from $\lambda'$ by only swapping the houses of $z$ and $z'$.
    Let $N'=\{x,y,z,w,z'\}$ and $H'=\lambda(N')$. Just as before, we add an arbitrary agent and his house to $N'$ and $H'$ if $z'$ turns out to be among $\{y,w\}$.
    Now, just as for the previous cases, the presence of the agents $x$, $y$, $z$, and $w$ and the houses $p$ and $q$ shows that $|\tc(P_{N',H'})|>2$. Further, $\lambda\not\in\bc(P_{N',H'})$ if $|\bc(P_{N',H'})|\leq 2$ since agent $x$ obtains his favorite house. Hence, \Cref{lem:non-isolation} shows again that $\lambda\mprefsimt\lambda'$ and thus $\lambda\in\tc(P)$.
\end{proof}

Based on \Cref{lem:onetop}, we will next generalize Claims (1) and (2) of \Cref{fact:n5} to arbitrary values of $n\geq 5$.

\begin{lemma}\label{lem:notpso_subseteq_tc}
    For every profile $P$ with $n\geq 5$ and $|\tc(P)|>2$, it holds that $M\setminus \pso(P)\subseteq \tc(P)$. Analogously, if $n\geq 5$ and $|\bc(P)|>2$, it holds that $M\setminus \po(P)\subseteq \bc(P)$.
\end{lemma}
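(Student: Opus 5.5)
We prove the top-cycle statement; the bottom-cycle statement then follows by applying it to $P^{-1}$ and using \Cref{lem:dualTCBC} together with $\pso(P)=\po(P^{-1})$.

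Fix $P$ with $n\ge 5$ and $|\tc(P)|>2$. By \Cref{lem:onetop}, there are an agent $x^*$ and her top house $p$ such that every assignment $\eta$ with $\eta(x^*)=p$ lies in $\tc(P)$.

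Let $\mu\in M\setminus\pso(P)$. Then $\mu$ Pareto-dominates some $\lambda$: every agent $x$ with $\mu(x)\neq\lambda(x)$ satisfies $\mu(x)\succ_x\lambda(x)$, and at least two agents differ. It suffices to find $\eta$ with $\eta(x^*)=p$ and $\mu\mprefsim\eta$, since then $\mu\mprefsim\eta\mprefsimt\nu$ for every $\nu$, so $\mu\in\tc(P)$.

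If $\mu(x^*)=p$, we are done immediately. Otherwise, the plan is to modify $\lambda$ by a local change that hands $p$ to $x^*$ while keeping the number of agents strictly preferring $\eta$ to $\mu$ at most the number strictly preferring $\mu$ to $\eta$.

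\textbf{Case A: $\lambda(x^*)\neq\mu(x^*)$.} Let $z$ be the agent with $\lambda(z)=p$, and let $\eta$ be $\lambda$ with the houses of $x^*$ and $z$ swapped. Compared with $\lambda$, only $x^*$ and $z$ change.
\begin{itemize}
\item Agent $x^*$ now holds $p$, which she prefers to $\mu(x^*)$. This costs one vote.
\item Agent $z$ now holds $\lambda(x^*)$. If $z$ differed between $\mu$ and $\lambda$, it is unclear whether $z$ still weakly prefers $\mu$.
\end{itemize}
Every other agent that differs keeps strictly preferring $\mu$. In the worst case the tally is therefore $|D|-2$ agents for $\mu$ against $2$ for $\eta$, where $D=\{x\colon\mu(x)\neq\lambda(x)\}$. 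This is insufficient when $|D|$ is small. The remedy is to choose $\lambda$ cleverly: take $\lambda$ to differ from $\mu$ by a single Pareto-worsening exchange along a cycle, or simply choose $\lambda=$ an assignment dominated by $\mu$ whose difference set avoids $z$. Note that if $z\notin D$ then $\mu(z)=p$, and giving $z$ house $\lambda(x^*)$ is either a loss for $\mu$'s side or a gain. The cleaner route is to compare $\mu$ directly with $\eta'$, obtained from $\mu$ by swapping the houses of $x^*$ and $z'=\mu^{-1}(p)$ and then additionally applying the dominated change on agents disjoint from $\{x^*,z'\}$.

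\textbf{Case B.} When the dominated change involves $x^*$ or $z'$, I instead use the restriction trick. Pick $N'\ni x^*$ of size $5$ containing the two agents of the dominating swap and $z'$, and apply \Cref{lem:non-isolation}. The hypotheses hold as follows.
\begin{itemize}
\item Within $N'$, the assignment $\mu_{N',H'}$ is not Pareto-pessimal, since it dominates the restricted $\lambda$.
\item If $|\tc(P_{N',H'})|\le 2$, then \Cref{lem:TC_Cardinality_1,lem:TC_Cardinality_2} show that $\tc$ consists of Pareto-optimal assignments in which top choices are essentially realized. In that situation one checks directly whether $\mu_{N',H'}$ lies in it or can be compared.
\end{itemize}

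The main obstacle is guaranteeing the majority comparison when the Pareto domination is thin (only two or three agents improve). My first attempt is to reduce to a $2$- or $3$-cycle domination, which always exists inside any Pareto domination: decompose $\mu\lambda^{-1}$ into cycles and take one cycle. I then split on whether $x^*$ or $\mu^{-1}(p)$ lies on that cycle, and in the remaining subcases fall back on \Cref{fact:n5}(1) via \Cref{lem:non-isolation} on a $5$-agent restriction that contains $x^*$, the agent holding $p$, and the cycle (truncating long cycles by replacing them with the three-step swap decomposition used in Case (2) of \Cref{lem:onetop}).
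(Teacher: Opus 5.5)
Your opening move is right and matches the paper's Case~1. With $D=\{v\colon \mu(v)\neq\lambda(v)\}$ and $z=\lambda^{-1}(p)$, swapping the houses of $x^*$ and $z$ in $\lambda$ yields $\eta$ with $\eta(x^*)=p$ and $\mu\mprefsim\eta$ as soon as $|D\setminus\{x^*,z\}|\ge 2$. The remaining ``thin'' case $|D\setminus\{x^*,z\}|\le 1$ is where the content of the lemma lies, and your proposal does not settle it.

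The reduction you announce is false: a Pareto domination need not contain a $2$- or $3$-cycle domination. For instance, let $\mu(a_i)=h_i$ for agents $a_1,\dots,a_n$, and let each $a_i$ rank $h_i$ second-to-last and $h_{i+1}$ last (indices mod $n$). Then the only assignment $\mu$ Pareto-dominates is the full $n$-cycle. The ``three-step swap decomposition'' of Case~(2) of \Cref{lem:onetop} does not transfer either. There, each step is certified by \Cref{lem:non-isolation} only because every five-agent restriction contains two agents top-ranking $p$ and two top-ranking $q$, which forces $|\tc(P_{N',H'})|>2$. You have no such witnesses, and $|\tc(P)|>2$ is not inherited by restrictions.

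In Case~B you do secure the bottom-cycle hypothesis of \Cref{lem:non-isolation}: if $D\subseteq N'$, then $\mu_{N',H'}$ is not Pareto-pessimal in the restriction, so it avoids any bottom cycle of size at most two. The top-cycle hypothesis, however, is left as ``one checks directly'', and you check it for the wrong object. The lemma needs the \emph{target}, the assignment giving $x^*$ house $p$, to lie outside $\tc(P_{N',H'})$ whenever that set has at most two elements. Such a small top cycle always contains an assignment giving $x^*$ house $p$, so the target cannot be chosen carelessly.

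The paper handles the thin case by hand for $n\ge 6$, with $n=5$ covered by \Cref{fact:n5}. It takes three bystanders outside $D\cup\{x^*,z\}$, whose houses agree in both assignments. It then case-splits on their pairwise preferences over their own houses to harvest compensating votes, ending with a $3$-cycle that gives a $3$--$3$ tie among six changed agents.

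Your restriction route can also be completed, but only with an observation you are missing. In the thin case $T=D\cup\{x^*,z\}$ has at most three agents and contains $\mu^{-1}(p)$. So you may take $N'\supseteq T$ with $|N'|=5$ and $H'=\mu(N')\ni p$. Among the $4!$ assignments of $N'$ onto $H'$ giving $x^*$ house $p$ (extended by $\mu$ outside $N'$), pick one outside $\tc(P_{N',H'})$ if that set has at most two elements. \Cref{lem:non-isolation} then gives $\mu\mprefsimt\eta$, and no cycle decomposition is needed.
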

\begin{proof}
    We first note that the claim on $\bc$ follows from \Cref{lem:dualTCBC} once we have proven the claim for $\tc$. Further, we may assume that $n\geq 6$ because \Cref{fact:n5} proves our lemma when $n=5$. Let $P$ be a profile with $n\geq 6$ agents such that $|\tc(P)|>2$. By \Cref{lem:onetop}, there is an agent $x$ and a house $p$ such that $x$ top-ranks $p$ and every assignment $\mu$ with $\mu(x)=p$ is in the top cycle. Next, let $\lambda$ denote an arbitrary assignment that is not Pareto-pessimal. If $\lambda(x)=p$, $\lambda\in\tc(P)$ by our previous insight, and we are done. Hence, we suppose that $\lambda(x)\neq p$ and show that $\lambda\mprefsimt \mu$ for some assignment $\mu$ with $\mu(x)=p$. This proves $\lambda\in \tc(P)$ since $\mu\in \tc(P)$.

    To show our claim, we note that, since $\lambda$ is not Pareto-pessimal, there is a Pareto-pessimal assignment $\eta\neq \lambda$ such that $\lambda(w)\succeq_w \eta(w)$ for all agents $w$. This holds because, in the inverted profile $P^{-1}$, the Pareto-optimal assignments correspond exactly to the Pareto-pessimal ones in $P$. Further, every assignment that is not Pareto-optimal in $P^{-1}$ is Pareto-dominated by a Pareto-optimal assignment, so every assignment that is not Pareto-pessimal in $P$ Pareto-dominates a Pareto-pessimal assignment. Thus, we can set $\eta$ to be the Pareto-pessimal assignment that $\lambda$ Pareto-dominates. 
    
    Now, let $S=\{w\in N\colon \lambda(w)\succ_w\eta(w)\}$ be the set of agents who strictly prefer their house under $\lambda$ to that under $\eta$. Since $\lambda\neq \eta$, and two distinct assignments must differ in at least two assigned houses, we directly obtain from the Pareto-domination that $\lvert S\rvert \ge 2$. Further, let $y$ denote the agent who obtains house $p$ under $\eta$, i.e., $\eta(y)=p$. We note that $y\neq x$; otherwise, $\lambda$ cannot Pareto-dominate $\eta$ since agent $x$ top-ranks $p$ but $\lambda(x)\neq p$. To prove our lemma, we proceed with a case distinction on the size of $S\setminus \{x,y\}$.\medskip

    \textbf{Case 1: $|S\setminus \{x,y\}|\geq 2$.} First, suppose that $|S\setminus \{x,y\}|\geq 2$. In this case, let $\mu$ denote the assignment derived from $\eta$ by letting agents $x$ and agent $y$ swap their assigned house. We note that $\lambda(w)=\eta(w)=\mu(w)$ for all agents $w\in N\setminus (S\cup \{x,y\})$ and $\lambda(w)\succ_w\eta(w)=\mu(w)$ for all $w\in S\setminus \{x,y\}$. Hence, at least two agents in $N\setminus \{x,y\}$ strictly prefer $\lambda$ to $\mu$, and none of these agents prefers $\mu$ to $\lambda$. This implies that $\lambda\mprefsim \mu$, regardless of the preferences of agents $x$ and $y$. Since $x$ obtains his favorite house $p$ in $\mu$, this completes the proof in this case.\medskip

    \textbf{Case 2: $|S\setminus \{x,y\}|\leq 1$.} As the second case, suppose that $|S\setminus \{x,y\}|\leq 1$. First, note that $|S\setminus \{x,y\}|= 0$ is not possible because this would mean that {$S = \{x,y\}$.}
    However, this implies that $\eta(y)=\lambda(x)=p$, which contradicts our assumption that $\lambda(x)\neq p$. Hence, $|S\setminus \{x,y\}|= 1$. Now, fix three agents $z_1,z_2,z_3\in N\setminus (S\cup \{x,y\})$ and their respective houses $q_i=\eta(z_i)$. Such agents exist since $|S\cup \{x,y\}|\leq 3$ and $n\geq 6$. Further, it holds that $\eta(z_i)=\lambda(z_i)$ for all $i\in \{1,2,3\}$ since $z_i\not\in S$. We will next analyze the preferences of agents $z_i$ and $z_j$ over the houses $q_i$ and $q_j$ for arbitrary $i,j\in \{1,2,3\}$ with $i\neq j$. 
    
    {Firstly}, assume that $q_i\succ_{z_i} q_j$ and $q_j\succ_{z_j} q_i$ (i.e., both agents prefer their obtained house to the house of the other agent). In this case, we define $\mu$ as the assignment obtained from $\lambda$ by swapping the houses of agents $z_i$ and $z_j$, and the houses of agent $x$ and the agent $y'$ with $\lambda(y')=p$. We note that, since $\lambda(z_k)=\eta(z_k)$ and $z_i\neq y$ for $k\in \{1,2,3\}$, it follows that $y'\not\in \{z_i,z_j\}$. Further, by definition, agents $z_i$ and $z_j$ prefer $\lambda$ to $\mu$. Since all agents but $z_i$, $z_j$, $x$, and $y'$ are indifferent between $\lambda$ and $\mu$, this suffices to show that $\lambda\mprefsim \mu$. Finally, since $\mu(x)=p$, this proves that $\lambda\in\tc(P)$.

    {Secondly}, suppose that $q_j\succ_{z_i} q_i$ and $q_i\succ_{z_j} q_j$ (i.e., both agents prefer the house of the other agent to their obtained house). In this case, we define the intermediate assignment $\eta'$ by letting $z_i,z_j$ swap their houses under $\eta$, i.e., by $\eta'(z_i)=q_j$, $\eta'(z_j)=q_i$, and $\eta'(w)=\eta(w)$ for all $w\in N\setminus \{z_i,z_j\}$. It holds that $\lambda\mprefsim \eta'$:
    
    \begin{enumerate}
        \item All agents agents in $S$, i.e., at least two agents, strictly prefer $\lambda$ to $\eta'$ because, by definition, they strictly prefer $\lambda$ to $\eta$, and are assigned the same house under $\eta$ and $\eta'$.
        \item Precisely two agents, namely $z_i,z_j$, strictly prefer $\eta'$ to $\lambda$ because, by definition, $z_i,z_j$ obtain the same house in $\lambda$ and $\eta$, whereas they strictly prefer $\eta'$ to $\eta$.
        \item All agents in $N\setminus (S\cup \{z_i,z_j\})$ are indifferent between {$\lambda$} and $\eta'$. Indeed, all agents other than $z_i,z_j$ are indifferent between {$\eta$} and $\eta'$, and $\lambda $ coincides with $\eta$ outside of $S$.
    \end{enumerate}

    Next, let $\mu$ be the profile derived from $\eta'$ by swapping the houses of $z_i$ and $z_j$ back, as well as the houses of $x$ and the agent $y'$ that has $p$ in $\eta'$, i.e., $\eta'(y')=p$.
Note that, since $\eta'$ is obtained from $\eta$ by $z_i,z_j$ swapping their houses, and $ y \not\in \{z_i,z_j\}$, we immediately obtain that $y'=y$. Therefore, $y'\neq x $ and $y'\not\in \{z_i,z_j\}$.
Since both $z_i$ and $z_j$ prefer their assignment under $\eta'$ to $\mu$, we get that $\eta'\mprefsim \mu$ and thus $\lambda\mprefsimt \mu$.

    {Thirdly}, by applying the previous analysis for every two-agent subset of $\{z_1,z_2,z_3\}$, we conclude that $\lambda\in\tc(P)$ unless
    \begin{align*}
        q_1\succ_{z_1} q_2 \text{ if and only if } q_1\succ_{z_2} q_2,\\ q_1\succ_{z_1} q_3 \text{ if and only if } q_1\succ_{z_3} q_3,\\ q_2\succ_{z_2} q_3 \text{ if and only if } q_2\succ_{z_3} q_3.
    \end{align*} 
    Now, let $\hat \lambda$ (resp. $\hat \eta$) denote the assignment derived from $\lambda$ (resp. $\eta$) by letting agents $z_1$ and $z_2$ swap their houses. Since $z_1$ and $z_2$ have the same preference regarding $q_1$ and $q_2$, it follows that $\lambda\mprefsim\hat \lambda$. Further, $\hat\lambda$ differs from $\hat\eta$ only in the assignments of the agents in $S$, and all the agents in this set strictly prefer their house under $\hat\lambda$. Using the same case analysis as in the last two cases for these two assignments, we can construct an assignment $\hat\mu$ such that agent $x$ obtains $p$ in $\hat \mu$ and $\lambda\mprefsim\hat\lambda\mprefsimt\hat\mu$ or the following equivalences hold: 
    \begin{align*}
        q_1\succ_{z_1} q_2 \text{ if and only if } q_1\succ_{z_2} q_2,\\ q_2\succ_{z_1} q_3 \text{ if and only if } q_2\succ_{z_3} q_3,\\ q_1\succ_{z_2} q_3 \text{ if and only if } q_1\succ_{z_3} q_3.
    \end{align*} 
    Further, repeating the argument for the assignments $\bar \lambda$ and $\bar \eta$ derived from $\lambda$ and $\eta$, respectively, by letting agents $z_2$ and $z_3$ swap their houses, we again get that there is either an assignment $\bar \mu$ such that $x$ obtains $p$ and $\lambda\mprefsim \bar\lambda\mprefsimt \bar \mu$, or that the following equivalences hold: 
    \begin{align*}
        q_1\succ_{z_1} q_3 \text{ if and only if } q_1\succ_{z_2} q_3,\\ q_1\succ_{z_1} q_2 \text{ if and only if } q_1\succ_{z_3} q_2,\\ q_2\succ_{z_2} q_3 \text{ if and only if } q_2\succ_{z_3} q_3.
    \end{align*} 

    Finally, if all $9$ given equivalences are true, the preferences of agents $z_1$, $z_2$, and $z_3$ over $q_1$, $q_2$, and $q_3$ have to coincide. Without loss of generality, we assume that $q_1\succ_z q_2\succ_z q_3$ for all $z\in \{z_1,z_2,z_3\}$. In this case, let $\mu$ denote the following assignment derived from $\eta$: agent $x$ and $y$ swap their houses and we set $\mu(z_1)=q_2$, $\mu(z_2)=q_3$, and $\mu(z_3)=q_1$. We claim that $\lambda\mprefsim \mu$. To see this, we note that $\lambda(z_1)\pref_{z_1}\mu(z_1)$, $\lambda(z_2)\pref_{z_2}\mu(z_2)$, and $\lambda(w)\pref_w \mu(w)$ for the single agent $w\in S\setminus \{x,y\}$. Since $\lambda$ and $\mu$ only differ in the houses of $6$ agents, this proves that $\lambda\mprefsim \mu$. Finally, we have constructed in every case a path from $\lambda$ to a suitable assignment $\mu$ in the top cycle, thus showing that $\lambda\in \tc(P)$ if $|S\setminus \{x,y\}|=1$.    
\end{proof}

Finally, based on \Cref{lem:notpso_subseteq_tc}, it is now easy to verify the remaining cases of \Cref{thm:Structure_TC}.

{\renewcommand{\footnote}[1]{}\TC*
}
\begin{proof}
Fix a profile $P$ for $n\geq 5$ agents. The cases (i) and (ii) follow directly from \Cref{lem:TC_Cardinality_1,lem:TC_Cardinality_2}. Since these lemmas characterize when, $|\tc(P)|\leq 2$, it follows that $|\tc(P)|>2$ for the remaining cases. By \Cref{lem:notpso_subseteq_tc}, we get that $M\setminus \pso(P)\subseteq \tc(P)$ and \Cref{thm:PO_Subset_SP} (combined with \Cref{lem:dualTCBC}) shows that $\pso(P)\subseteq \bc(P)$. Next, for cases (iii) to (v), we note that $\tc(P)=\bc(P)=M$ if $\tc(P)\cap \bc(P)\neq\emptyset$. In particular, if $\tc(P)\cap \bc(P)\neq\emptyset$, there is an assignment $p$ that can be reached and reaches every other assignment via a path in the majority relation, so any two assignments are connected via $p$. 

Now, under case (iii), we get from \Cref{lem:TC_Cardinality_2} that $|\bc(P)|=2$ and, moreover, it is easy to see that $|\pso(P)|=2$. This means that $\bc(P)\cap  \tc(P)= \emptyset$ since $|\tc(P)|\geq |M\setminus \pso(P)|\geq n!-2$. This is only possible if $\tc(P)=M\setminus \pso(P)$ and $\bc(P)=\pso(P)$, which completes the proof of this case.  

Similarly, under case (iv), we infer from \Cref{lem:TC_Cardinality_1} that $|\bc(P)|=1$. Further, it is straightforward to see that $|\pso(P)|=1$. Just as before, this means that $\bc(P)\cap  \tc(P)= \emptyset$, so  $\tc(P)=M\setminus \pso(P)$ and $\bc(P)=\pso(P)$. 

Finally, we turn to case (v). To this end, we recall that by \Cref{lem:TC_Cardinality_1,lem:TC_Cardinality_2}, both $|\tc(P)|>2$ and $|\bc(P)|>2$ if none of the first four cases apply. If $n=5$, our claim follows directly from Claim (3) of \Cref{fact:n5}. We hence suppose that $n\geq 6$. In this case, we derive from \Cref{lem:onetop} that there is an agent $x$ and a house $p$ such that $x$ top-ranks $p$ and every assignment $\mu$ with $\mu(x)=p$ is in the top cycle. If any such assignment $\lambda$ is not Pareto-optimal, then $\lambda\in \tc(P)$ by \Cref{lem:onetop} and $\lambda\in M\setminus \po(P)\subseteq\bc(P)$ by \Cref{lem:notpso_subseteq_tc}. Hence, $\tc(P)=\bc(P)=M$ by our previous observation. On the other hand, if every assignment $\mu$ with $\mu(x)=p$ is Pareto-optimal, then all agents but $x$ agree on the ranking of the houses in $H\setminus \{p\}$. Indeed, if there were houses $q,r\in H\setminus \{p\}$ and agents $y,z\in N\setminus \{x\}$ with $q\succ_y r$ and $r\succ_z q$, then every assignment $\lambda$ with $\lambda(y)=r$ and $\lambda(z)=q$ is dominated by the assignment where we swapped the houses of $y$ and $z$. 

Hence, suppose that all agents in $N\setminus \{x\}$ report the same preferences over $H\setminus \{p\}$. Without loss of generality, we label the houses in $H\setminus \{p\}$ such that $q_1\succ_y q_2\succ_y\dots\succ_y q_{n-1}$ for all agents $y\in N\setminus \{x\}$. In this case, we fix an arbitrary assignment $\lambda$ and show that $\lambda\in\tc(P)$. For this, let $y$ denote the agent with $\lambda(y)=p$, and label the agents in $N\setminus \{x,y\}$ by $z_1,\dots, z_{n-2}$ such that $\lambda(z_1)\succ_z \dots \succ_z\lambda(z_{n-2})$ for all $z\in N\setminus \{x,y\}$. Further, consider the assignment $\mu$ such that $\mu(x)=p$, $\mu(y)=\lambda(x)$, $\mu(z_{n-2})=\lambda(z_{1})$, and $\mu(z_{i})=\lambda(z_{i+1})$ for all $i\in \{1,\dots, n-3\}$. We note that all agents $z_i$ with $i\in \{1,\dots, n-3\}$ prefer $\lambda$ to $\mu$ because $\lambda(z_i)\succ_{z_i}\lambda(z_{i+1})=\mu(z_i)$. Since $n\geq 6$, we have that $n-3\geq\frac{n}{2}$, so $\lambda\mprefsim \mu$. Further, $\mu\in\tc(P)$ since $\mu(x)=p$. This shows that $\lambda\in\tc(P)$ and thus $\tc(P)=M$. 
\end{proof}

\section{Path for Example~\ref{ex:bad_in_tc}}\label{app:path_bad}
Here, we provide a concrete path from the assignment $\mu$ in \Cref{ex:bad_in_tc} to a serial dictatorship $\lambda$ in the majority graph. What follows is a sequence of pairs of assignments with the red assignment weakly dominating the other one (which is marked in blue for the agents receiving a different house). The profile $P$ remains the same. In the first illustration, the red assignment is $\mu$. In the last illustration, the blue assignment is $\lambda$.
\[\arraycolsep=3pt P = \begin{array}{rccccccc}
		1\colon & f,&b,&d,&e,&c,&a,&\badass g\\
		2\colon & \goodass d,&f,&g,&a,&b,&e,&\badass c\\
		3\colon & d,&a,&c,&g,&e,&b,&\badass f\\
		4\colon & a,&d,&b,&f,&g,&\badass e,&\goodass c\\
		5\colon & c,&g,&e,&b,&f,&\badass d,&\goodass a\\
		6\colon & f,&a,&e,&d,&g,&c,&\badass b\\
		7\colon & c,&d,&\goodass e,&b,&g,&f,&\badass a\\
	\end{array}
\]
\[\arraycolsep=3pt P = \begin{array}{rccccccc}
		1\colon & f,&b,&d,&e,&c,&a,&\badass g\\
		2\colon & \badass d,&\goodass f,&g,&a,&b,&e,&c\\
		3\colon & \goodass d,&a,&c,&g,&e,&b,&\badass f\\
		4\colon & a,&d,&b,&f,&g,&e,&\badass c\\
		5\colon & c,&g,&e,&b,&f,&d,&\badass a\\
		6\colon & f,&a,&\goodass e,&d,&g,&c,&\badass b\\
		7\colon & c,&d,&\badass e,&\goodass b,&g,&f,&a\\
	\end{array}
\]
\[\arraycolsep=3pt P = \begin{array}{rccccccc}
		1\colon & \goodass f,&b,&d,&e,&c,&a,&\badass g\\
		2\colon & d,&\badass f,&g,&a,&b,&\goodass e,&c\\
		3\colon & \badass d,&a,&c,&g,&e,&b,&f\\
		4\colon & \goodass a,&d,&b,&f,&g,&e,&\badass c\\
		5\colon & \goodass c,&g,&e,&b,&f,&d,&\badass a\\
		6\colon & f,&a,&\badass e,&d,&g,&c,&\goodass b\\
		7\colon & c,&d,&e,&\badass b,&\goodass g,&f,&a\\
	\end{array}
\]
\[\arraycolsep=3pt P = \begin{array}{rccccccc}
		1\colon & \badass f,&\goodass b,&d,&e,&c,&a,&g\\
		2\colon & d,&f,&g,&a,&b,&\badass e,&c\\
		3\colon & \badass d,&a,&c,&g,&e,&b,&f\\
		4\colon & \badass a,&d,&b,&f,&g,&e,&c\\
		5\colon & \badass c,&\goodass g,&e,&b,&f,&d,&a\\
		6\colon & \goodass f,&a,&e,&d,&g,&c,&\badass b\\
		7\colon & \goodass c,&d,&e,&b,&\badass g,&f,&a\\
	\end{array}
\]
\[\arraycolsep=3pt P = \begin{array}{rccccccc}
		1\colon & f,&\goodass b,&d,&e,&c,&a,&g\\
		2\colon & d,&f,&g,&a,&b,&\goodass e,&c\\
		3\colon & \goodass d,&a,&c,&g,&e,&b,&f\\
		4\colon & \goodass a,&d,&b,&f,&g,&e,&c\\
		5\colon & c,&\goodass g,&e,&b,&f,&d,&a\\
		6\colon & \goodass f,&a,&e,&d,&g,&c,&b\\
		7\colon & \goodass c,&d,&e,&b,&g,&f,&a\\
	\end{array}
\]
The assignment $\goodass\lambda$ is a serial dictatorship, e.g., with the picking order $3,4,6,7,1,5,2$.

\end{document}